\theoremstyle{property}
\theoremstyle{theorem}
\newtheorem{theorem}{Theorem}[section]
\theoremstyle{corollary}
\newtheorem{corollary}{Corollary}[section]
\theoremstyle{lemma}
\newtheorem{lemma}{Lemma}[section]
\theoremstyle{definition}
\newtheorem{definition}{Definition}[section]
\theoremstyle{remark}
\newtheorem*{remark}{Remark}
\DeclareMathOperator*{\argmax}{arg\,max}
\newcommand{\jp}[1]{}
\newcommand{\kt}[1]{}
\newcommand{\po}[1]{}
\newcommand{\mr}[1]{}
\newcommand{\so}[1]{}
\newcommand{\jbl}[1]{}
\newcommand{\bdv}[1]{}
\icmltitlerunning{Finding Equilibrium via Regularization}
\begin{document}

\twocolumn[
\icmltitle{From Poincar\'e Recurrence to Convergence in Imperfect Information Games: Finding Equilibrium via Regularization}





\icmlsetsymbol{equal}{*}

\begin{icmlauthorlist}

\icmlauthor{Julien Perolat}{dm}
\icmlauthor{Remi Munos}{dm}
\icmlauthor{Jean-Baptiste Lespiau}{dm}
\icmlauthor{Shayegan Omidshafiei}{dm}
\icmlauthor{Mark Rowland}{dm}
\icmlauthor{Pedro Ortega}{dm}
\icmlauthor{Neil Burch}{dm}
\icmlauthor{Thomas Anthony}{dm}
\icmlauthor{David Balduzzi}{dm}
\icmlauthor{Bart De Vylder}{dm}
\icmlauthor{Georgios Piliouras}{sutd}
\icmlauthor{Marc Lanctot}{dm}
\icmlauthor{Karl Tuyls}{dm}
\end{icmlauthorlist}

\icmlaffiliation{dm}{DeepMind}
\icmlaffiliation{sutd}{SUTD}

\icmlcorrespondingauthor{Julien Perolat}{perolat@google.com}

\icmlkeywords{Machine Learning, ICML}

\vskip 0.3in
]



\printAffiliations{}  
\begin{abstract}
In this paper we investigate the Follow the Regularized Leader dynamics in sequential imperfect information games (IIG). We generalize existing results of Poincar\'e recurrence from normal-form games to  zero-sum two-player imperfect information games and other sequential game settings. We then investigate how adapting the reward (by adding a regularization term) of the game can give strong convergence guarantees in monotone games. We continue by showing how this reward adaptation technique can be leveraged to build algorithms that converge exactly to the Nash equilibrium. Finally, we show how these insights can be directly used to build state-of-the-art model-free
algorithms for zero-sum two-player Imperfect Information Games (IIG).
\end{abstract}

\section{Introduction}
This paper addresses the problem of learning a Nash equilibrium in several classes of games. Learning Nash equilibria in competitive games is complex as agents no longer share information but behave independently. Various techniques have been proposed to solve these games, with the current state-of-the-art usually guaranteeing average-time convergence of the learned policy to a Nash equilibrium, but not necessarily convergence of the policy itself to Nash. Unfortunately, these convergence guarantees are not conducive to learning in large games, which rely on general function approximation techniques (e.g., deep neural networks) that are inherently  difficult to time-average. Moreover, the real-time behaviors of the policy can be quite distinctive from its time-average counterpart, and can even diverge away from Nash equilibria~\cite{bailey2018multiplicative}.

In some adversarial games, Follow the Regularized Leader~(FoReL) is known to be convergent if the equilibrium is deterministic, and recurrent if the equilibrium is mixed with full support~\cite{mertikopoulos2018cycles}. 
A special case of FoReL dynamics is replicator dynamics~\cite{TaylorJonkerRD}, the main dynamic of evolutionary game theory, whose recurrent behavior in zero-sum games and generalizations is well studied~\cite{Piliouras14,boone2019darwin}. 
More generally, value-based methods have been well-studied in multi-agent reinforcement learning~\cite{Littman94markovgames, Littman01FFQ,Hu03NashQ} but numerous issues of convergence have been noticed. 
But a notable empirical finding shows that regularization of $Q$-learning in matrix games can induce the policy to converge in real-time to a Nash equilibrium~\cite{Tuyls_Q_learning_RD,KaisersT10,KaisersT11} or in the replicator dynamics to Quantal-Response-Equilibrium~\cite{ortega2018modeling, mckelvey1995quantal,Tuyls05Evolutionary}. Other theoretical investigations show that softmax best response can guarantee convergence in $Q$-learning~\cite{leslie2005individual}.

\begin{figure}
  \vspace{-10pt}
  \begin{center}
    \includegraphics[width=0.4\textwidth]{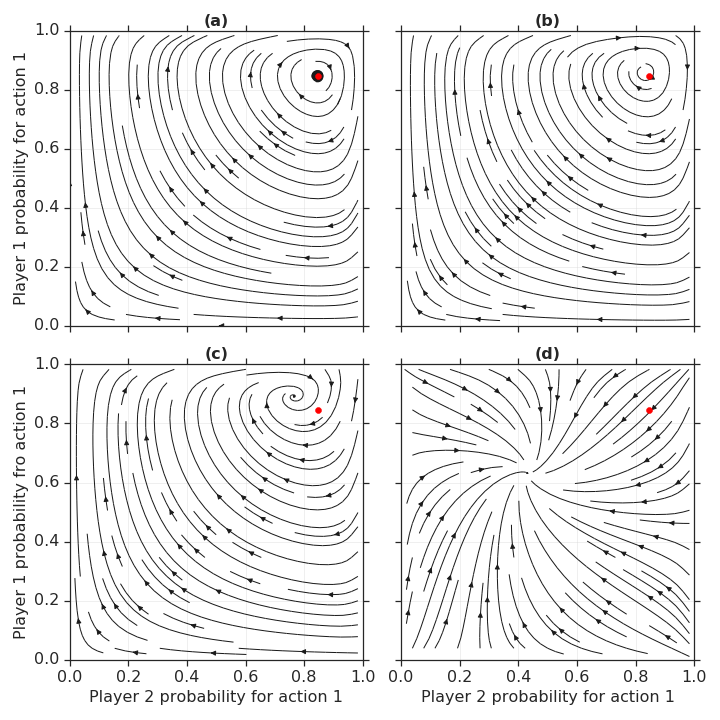}
  \end{center}
  \vspace{-10pt}
  \caption{The trajectory plots for FoReL (plot (a)) and for the version with a reward transform with a parameter $\eta$ multipliers 0.5, 1 and 10 (plots (b), (c), (d) respectively) in a biased matching pennies game (the payoff table for the first player is $[[1, -1], [-1, 10]]$). The red dot is the equilibrium policies of the original game.}
  \label{log-R-RD-fig}
\end{figure}

Motivated by these findings, this paper formally analyzes the impact of regularization on learning dynamics, extending beyond the simple case of matrix games and focusing particularly on the application of FoReL to imperfect information games. 
The contributions of the paper are as follows:
\begin{itemize}
    \item We generalize the Poincar\'e recurrence result~\cite{mertikopoulos2018cycles} to the case of sequential imperfect information games. This proves that strategies can cycle in IIG when using FoReL (e.g., similar to the normal form game case Fig.~\ref{log-R-RD-fig}, (a)).
    \item We prove that changing the reward structure of the game improves convergence guarantees at the cost of slightly modifying the equilibrium of the game (e.g., as in Fig.~\ref{log-R-RD-fig}, (b), (c), (d)).
    \item We show that this reward adaptation method can be used to build a sequence of closer and closer pseudo-solutions converging onto a Nash equilibrium.
    \item We illustrate that by using these theoretical findings, we improve the state-of-the-art of deep reinforcement learning in imperfect information games.
\end{itemize}
\subsection{Related Work}
We discuss related work along three axes: (i) follow the regularized leader and regret minimization, (ii) gradient based methods in differentiable games, and (iii) dynamic programming and reinforcement learning approaches in games.
\paragraph{FoReL and regret minimization in Games.}
There exists a large corpus of literature providing evidence that minimax equilibria (or Nash equilibria) in {\bf normal form zero-sum two-player} games are often unstable rest points of FoReL (or at best neutrally stable). In evolutionary game theory, the replicator dynamics \cite{Zeeman80,Zeeman81,Weibull97,Gintis09} are known to be unstable in the case of an interior equilibrium in zero-sum two-player normal form games~\citep{bloembergen2015evolutionary}.
Many machine learning approaches can be used in self-play to learn an equilibrium: regret minimization methods have been extensively studied in zero-sum games~\cite{PLG, syrgkanis2015fast, Fudenberg98, cfr, Hofbauer09, PLG}, for which the average policy played over time converges to an equilibrium, but the actual policy is known to be recurrent~\cite{Piliouras14, mertikopoulos2018cycles}. The convergence of the actual policy can be obtained when the opponent plays a best response~\cite{waugh2015unified, abernethy2018faster} but not in the self-play setting.
The best response sequences of Fictitious Play ~\cite{Brown51} and smoother variants~\cite{hofbauer2002global} converge in time-average. 
{\bf Polymatrix games } can be solved by linear programming~\cite{cai2016zero} (we will study a generalization of this class). Regret minimization techniques can be used to learn a Nash equilibrium,~\footnote{Since for a coarse correlated equilibrium, the marginals with respect to the players are a Nash equilibrium~\cite{cai2016zero}} but also in this setting, the convergence to a Nash equilibrium requires to compute a time-average policy, and the policy itself is recurrent~\cite{mertikopoulos2018cycles}.
\paragraph{Gradient Based Methods in Differentiable Games.}
Differentiable games (e.g. GANs) trained by gradient descent present many failure modes~\citep{balduzzi2018mechanics}. In~\citep{balduzzi2018mechanics} the authors prove that learning dynamics of gradient descent can cycle in some classes of games. This problem can be resolved by introducing second-order optimization~\citep{balduzzi2018mechanics, Foerster18, mescheder2017numerics, letcher2018stable}, negative momentum~\cite{gidel2018negative} or game theoretic algorithms~\citep{oliehoek2017gangs, grnarova2017online}.

\paragraph{Reinforcement Learning in Games.}
In sequential imperfect information games RL methods have been applied with mild success.
Independent reinforcement learning has many failure modes under these sequential imperfect information settings, as demonstrated in~\citep{Lanctot17PSRO}. In zero-sum sequential imperfect information games, the policy can cycle around the minimax equilibrium without ever converging, even in simple single-state games~\cite{Piliouras14,mertikopoulos2018cycles,Singh00,bloembergen2015evolutionary,bailey2018multiplicative}.
In cooperative settings, players tend to overfit to the opponent while learning, without being able to generalize to other opponents' behaviors~\cite{Matignon2012}. Generally speaking, in the sequential setting, learning in games can be addressed by either approximate dynamic programming in the perfect information case~\citep{lagoudakis2002value, Perolat15, Perolat16, perolat2016softened, perolat2016generalSum, geist2019theory}, regret minimization algorithms~\citep{cfr, lanctot13phdthesis, Lanctot09mccfr} (which suffer from the aforementioned time-averaging problem), best response algorithms~\citep{Heinrich15FSP, Lanctot17PSRO, Heinrich16}, model free reinforcement learning methods~\citep{srinivasan2018actor, Heinrich16} or policy gradient in the worst case~\cite{lockhart2019computing}. 
However, the previous model free RL methods are not flawless: 
Neural Fictitious Self Play (NFSP)~\citep{Heinrich16} maintains two data sets of respectively $600$ and $2000$ times the size of the game, the methods presented in~\citep{srinivasan2018actor} empirically show a convergence in time-average without formal proof, and \citep{lockhart2019computing} require the exact computation of a best response.

\section{Warming up: Normal Form Games}
\label{warm_up}
We first sketch our main results in repeated zero-sum two-player normal form games.
\paragraph{Background.} In a zero-sum two-player normal form game, two players select their actions $a^i \in A$ ($a = (a^1, a^2) = (a^i, a^{-i})$) according to a policy $\pi^i \in \Delta A$ ($\pi = (\pi^1, \pi^2) = (\pi^i, \pi^{-i})$, where $-i$ encodes the opponent of player $i$),  and as a result will receive a reward $r_\pi^i(a^1, a^2)$. The reward is policy-independent ($r^i(a^1, a^2)$) if it is only a function of the actions of the players and not of their policies; policy-independent reward is a standard assumption in the literature. If policy $\pi$ is played we define the $Q$-function to be the expected reward for player $i$ for action $a_i$ (i.e. $Q^i_\pi(a^i) = \mathbb{E}_{a^{-i} \sim \pi^{-i}}[r_\pi^i(a^i, a^{-i})]$) and the value function to be the expected reward (i.e. $V^i_\pi = \mathbb{E}_{a \sim \pi}[r_\pi^i(a)] = \mathbb{E}_{a^i \sim \pi^i}[Q^i_\pi(a^i)]$).

By definition, a policy $\pi^*$ is a {\bf Nash equilibrium} if for all $\pi$ and for all $i$ we have $V^i_{\pi^i, {\pi^*}^{-i}} - V^i_{\pi^*}\leq 0$. In other words, a Nash equilibrium is a joint policy such that no player has an incentive to change its policy if all the other players stick to their policy.

\paragraph{Follow the Regularized Leader (FoReL).} FoReL is an exploration-exploitation algorithm that maximizes the cumulative payoff of the player (exploitation) minus a regularization term (exploration). The continuous time version of this algorithm is defined as follows:
\begin{align*}
    y_t^i(a^i) = \int \limits_{0}^t Q^i_{\pi_s}(a^i) ds\quad \textrm{  and  }\quad \pi^i_t=\argmax_{p \in \Delta A} \Lambda^i(p, y_t^i)
\end{align*}
where $\Lambda^i(p, y) = \langle y, p \rangle - \phi_i(p)$ 
and $\phi_i$ is the regularizer, a function which is assumed to be: (1)~continuous and strictly convex on $\Delta A$ and (2)~smooth on the relative interior of every face of $\Delta A$ (including $\Delta A$ itself). Standard choices of $\phi_i$ include: (1)~entropy $\phi_i(p)=\sum_{a} p(a) \log p(a)$, and (2)~$\ell_2$-norm $\phi_i(p)=\sum_{a} |p(a)|^2$. The choice of the regularizer lead to different dynamics: entropy regularization yields the replicator dynamics and $l_2$-norm regularization yields the projection dynamics~\cite{mertikopoulos2018cycles}.

We will write $\phi_i^*(y) = \max_p \Lambda^i(p, y)$ and we have the property that $\argmax_{p \in \Delta A} \Lambda^i(p, y) = \nabla_y \phi_i^*(y)$ (maximizing argument \citet[p.147]{shalev2012online}).

If $\pi^*$ is a Nash equilibrium, a useful measure of interest that measures the distance to a Nash equilibrium is
{$$J(y) = \sum\limits_{i=1}^2 \big[ \phi_i^*(y_i) - \langle y_i, \pi^*_i \rangle\big].$$}
This quantity (and its generalization introduced in section.~\ref{background}) will be used to construct strong Lyapunov functions in many games of interest. As a warm up, this section will explore these convergence results in the normal form case.

\paragraph{Recurrence.} If the reward is policy-independent and if there exists an interior equilibrium, it is known that the policy under FoReL will be recurrent~\cite{mertikopoulos2018cycles}. We will generalize this result to sequential Imperfect Information Games in section~\ref{recurrence_FoReL}. 
Crucially, this strong negative result indicates that convergence \emph{cannot} be achieved with FoReL in games with a mixed strategy equilibrium, so long as the reward is policy-independent.
Thus, in the rest of the paper, we will explore how to transform the reward by adding a policy-dependent term to guarantee convergence (see section~\ref{reward_transform},\ref{convergence_to_an_equilibrium}).

\paragraph{Reward transformation and convergence in normal-form games.}
Section~\ref{reward_transform} explores the idea of reward transformation by adding a policy dependent term. If the reward is not policy-independent, one can show that:
\begin{align*}
&\frac{d}{dt}J(y) = \sum \limits_{i=1}^2 \underbrace{[V^i_{\pi^i_t, {\pi^*}^{-i}} - V^i_{\pi^*}]}_{\leq 0 \textrm{ because $\pi^*$ is a Nash}}\\
& + \sum \limits_{i=1}^2 \mathbb{E}_{a \sim ({\pi^*}^{i}, \pi_t^{-i}) }[r^i_{{\pi^*}^{i}, \pi_t^{-i}}(a) - r^i_{\pi_t}(a)]
\end{align*}
We later generalize this result in lemma~\ref{lemma_FoReL}.
As an example, consider the following policy dependent reward, which also preserves the zero-sum property for any policy $\mu$ with a full support:
\begin{align*}
    r^i_\pi(a) = r^i(a^i, a^{-i}) - \eta \log \frac{\pi^i(a^i)}{\mu^i(a^i)} + \eta \log \frac{\pi^{-i}(a^{-i})}{\mu^{-i}(a^{-i})}
\end{align*}
Given the above reward, we can show that:
\begin{align*}
&\frac{d}{dt}J(y) = \sum \limits_{i=1}^2 \underbrace{[V^i_{\pi^i_t, {\pi^*}^{-i}} - V^i_{\pi^*}]}_{\leq 0 \textrm{\ because $\pi^*$ is a Nash}} - \eta \sum \limits_{i=1}^2 KL(\pi^{*i}, \pi^{i}_t)
\end{align*}
This inequality ensures that $\pi_t$ will converge to $\pi^*$, the Nash of the game defined by $r^i_\pi(a)$, using Lyapunov arguments. Note that $\pi^*$ will depend on $\mu$ and $\eta$. Transforming the reward improves the convergence property of the game but will shift the equilibrium, a phenomena illustrated in figure~\ref{log-R-RD-fig} where $\phi_i$ is the entropy for all players. Thus, this technique does not directly guarantee convergence to the Nash of the original game. 
We next introduce a technique to adapt the policy-dependent term in the reward, thereby guaranteeing convergence to the actual Nash equilibrium of the game.

\paragraph{Direct Convergence. }
Solving the original game can be achieved by iteratively solving the game with the reward $r^i_{k,\pi}(h, a) = r^i(a^i, a^{-i}) - \eta \log \frac{\pi^i(a^i)}{\pi_{k-1}^i(a^i)} + \eta \log \frac{\pi^{-i}(a^{-i})}{\pi_{k-1}^{-i}(a^{-i})}$ and use the Nash of that game $\pi_k$ to modify the reward of the next game (starting with $\pi_0$ as the uniform policy). The sequence of policies $(\pi_k)_{k\geq 0}$ converges to $\pi^*$, the equilibrium of the policy-independent reward $r^i(a^i, a^{-i})$. Specifically, we can show that:
{\small
\begin{align*}
&\sum \limits_{i=1}^2 \left[KL(\pi^{*i}, \pi^{i}_k) - KL(\pi^{*i}, \pi^{i}_{k-1})\right]\leq - \sum \limits_{i=1}^2 KL(\pi^{i}_{k}, \pi^{i}_{k-1})
\end{align*}
}

which is enough to prove that $(\pi_k)_{k\geq 0}$ converges to $\pi^*$, using Lyapunov-style arguments (this result is proved in section~\ref{convergence_to_an_equilibrium}). This set of results establishes a foundation for convergent learning in the normal-form case.
We next lay out the principles necessary for generalizing to the IIG setting, with our main result detailed in section~\ref{convergence_to_an_equilibrium}.

\section{Background in Sequential Imperfect Information Games}
\label{background}

In a sequential imperfect information game, $N$ players and a chance player (written $c$) interact sequentially starting from a history $h_{\textrm{init}}$. The set of all possible histories is written $H = \cup_{i \in \{1,\dots,N, c\}} H_i$. The sets $H_i$ are the set of histories at player's $i$ turn (all $H_i$ are disjoint). The set of terminal histories $\mathcal{Z}_i$ is a subset of $H_i$ in which the game has ended ($\mathcal{Z} = \cup_{i \in \{1,\dots,N, c\}} \mathcal{Z}_i$). In each history $h\in H$, the current player will observe an information state $x \in \mathcal{X}=\cup_{i \in \{1,\dots,N, c\}} \mathcal{X}_i$. The function $\tau(h) \mapsto i \in \{1, \dots,N, c\}$ provides the player's turn at a given history. We will also write $x(h) \in \mathcal{X}$ for the information state corresponding to an history $h$. We will write $h \in x$ if $x(h) = x$.

At each history $h\in H\backslash\mathcal{Z}$, the current player will play an action $a \in A$. As a result, each player $i \in \{1, \dots, N\}$ will receive a reward $r^i(h, a)$ and the state will transition to $h'=ha$. We will write $h \sqsubset h'$ if there exists a sequence of $k$ actions $(a_i)_{0 \leq i \leq k}$ such that $h a_0 \cdots a_k =h'$. The history $h$ is then said to be a prefix of $h'$.

A policy $\pi(a|x)$ maps an information state $x$ to a distribution over actions $\Delta A$. The restriction of $\pi$ over $\mathcal{X}_i$ is written $\pi^i$ and $\pi^{-i}$ is the restriction of $\pi$ over $\mathcal{X}\backslash \mathcal{X}_i$. We will write $\pi = (\pi^i, \pi^{-i})$.
As in section.~\ref{warm_up}, we consider a {\bf policy dependent reward} (written $r_\pi^i(h,a)$), which can be dependent on the full policy. The rest of this section introduces reinforcement learning tools used to define FoReL in IIG and used in the proofs.

\paragraph{Value function on the histories.}
The value of a policy for player $i$ at history $h$ is defined as follow:
\begin{align*}
    &V^i_{\pi} (h) = \mathbb{E}\big[\sum \limits_{n \geq 0} r_\pi^i(h_n,a_n) | h_0=h, \; h_{n+1}=h_n \; a_n,\\
    &\; a_n \sim \pi(.|x(h_n))\big]= \sum_a \pi(a|x(h))\left[r_\pi^i(h,a) + V^i_{\pi} (ha)\right]
\end{align*}

The value of a policy for player $i$ at history $h$ while taking action $a$ is defined as follow:
\begin{align*}
    &Q^i_{\pi} (h, a) = \mathbb{E}\big[\sum \limits_{n \geq 0} r_\pi^i(h_n,a_n) | h_0=h, \; a_0=a,\\
    & \; h_{n+1}=h_n \; a_n,\; a_n \sim \pi(.|x(h_n))\big]= r_\pi^i(h,a) + V^i_{\pi} (ha)
\end{align*}
\paragraph{Reach probabilities.}
The reach probability of a history $h$ is (note that this product may include the chance player):
\begin{align*}
    &\rho^{\pi} (h) = \prod \limits_{h'a \sqsubset h} \pi(a|x(h'))
\end{align*}
The reach probability of player $i$ of a history $h$ is:
\begin{align*}
    &\rho^{\pi^i} (h) = \prod \limits_{h'a \sqsubset h, \; \tau(h')=i} \pi(a|x(h'))
\end{align*}
The reach probability of player $-i$ of a history $h$ is (this product may include the chance player too):
\begin{align*}
    &\rho^{\pi^{-i}} (h) = \prod \limits_{h'a \sqsubset h, \; \tau(h') \neq i} \pi(a|x(h'))
\end{align*}
In the end, $\forall h \in H$:$\rho^{\pi} (h) = \rho^{\pi^i} (h) \rho^{\pi^{-i}} (h)$

The reach probability of an information state $x \in \mathcal{X}$ is defined as follows:
$$\rho^{\pi} (x) = \sum \limits_{h \in x} \rho^{\pi} (h) \textrm{ and }\rho^{\pi^{-i}} (x) = \sum \limits_{h \in x} \rho^{\pi^{-i}} (h)$$

Under perfect recall \cite{Zinkevich07New}, we can write for any $h\in x$:
$$\rho^{\pi} (x) = \rho^{\pi^i} (h) \rho^{\pi^{-i}} (x)$$

And under perfect recall we will write for all $h\in x,\;\rho^{\pi^i} (x) = \rho^{\pi^i} (h)$
Furthermore,{
$V^i_{\pi} (h_{\textrm{init}}) = \sum \limits_{h \in H} \rho^{\pi} (h) \sum \limits_{a \in A} \pi(a|x(h)) r^i_{\pi}(h, a)$}

\paragraph{Value Function on the information states.} The only information available to a player is the information state. We define the expected value of the game given such an information state $x$ as follows:
\begin{align*}
    V^i_{\pi} (x) = \frac{\sum \limits_{h \in x} \rho^{\pi} (h) V^i_{\pi} (h)}{\sum \limits_{h \in x} \rho^{\pi} (h)} \underbrace{=}_{\textrm{perfect recall}} \frac{\sum \limits_{h \in x} \rho^{\pi^{-i}} (h) V^i_{\pi} (h)}{\sum \limits_{h \in x} \rho^{\pi^{-i}} (h)}
\end{align*}
And the expected $Q$-function given $x$ and $a$ is:
\begin{align*}
    Q^i_{\pi} (x,a) &= \frac{\sum \limits_{h \in x} \rho^{\pi} (h) Q^i_{\pi} (h,a)}{\sum \limits_{h \in x} \rho^{\pi} (h)}= \frac{\sum \limits_{h \in x} \rho^{\pi^{-i}} (h) Q^i_{\pi} (h,a)}{\sum \limits_{h \in x} \rho^{\pi^{-i}} (h)}
\end{align*}

Now we can define a Nash equilibrium in the sequential imperfect information game setting. Formally:
\begin{definition}
A strategy $\pi$ is a Nash equilibrium if for all $i \in \{1, \dots, N\}$ and for all $\pi'^i$:
$V^i_{\pi'^i, \pi^{-i}} (h_{\textrm{init}}) \leq V^i_{\pi^i, \pi^{-i}} (h_{\textrm{init}})$
\end{definition}
\subsection{Monotone Games}
In this paper, we are interested in: (i) zero-sum two-player games, i.e., $V^1_{\pi} = -V^2_{\pi}$ (many games implemented in OpenSpiel~\cite{lanctot2019openspiel} fall in that category); (ii) in zero-sum $N$-player polymatrix games, i.e., when the value can be decomposed in a sum of pairwise interactions $V^i_{\pi} = \sum \limits_{j \neq i} \tilde V^i_{\pi^i, \pi^j}$ with $\tilde V^i_{\pi^i, \pi^j} = -\tilde V^j_{\pi^j, \pi^i}$ generalizing~\citet{cai2016zero,mertikopoulos2018cycles}; and finally (iii) in games where the profit of one player is decoupled from the interaction with the opponents, i.e., when the value can be decomposed in $V^i_{\pi} = \bar V^i_{\pi^i} + \bar V^i_{\pi^{-i}}$. All these settings can be captured by the following monotonicity condition:
\begin{definition}
Let us define {$\Omega^i(\pi, \mu) = V^i_{\pi^i, \pi^{-i}} (h_{\textrm{init}}) - V^i_{\mu^i, \pi^{-i}} (h_{\textrm{init}}) - V^i_{\pi^i, \mu^{-i}} (h_{\textrm{init}}) + V^i_{\mu^i, \mu^{-i}} (h_{\textrm{init}})$}.
A game is monotone if for all policies $\pi$, $\mu$, $\pi \neq \mu$: 
$$\sum \limits_{i\in \{1,\dots,N\}} \Omega^i(\pi, \mu) \leq 0$$
\end{definition}
This condition is slightly difficult to interpret but as mentioned above, it captures a wide class of games (zero-sum two-player, polymatrix zero-sum games etc.). See proof in appendix~\ref{monotone_games}.

\subsection{Follow the Regularized Leader}
\label{sec-FoReL}
Follow the Regularized Leader in imperfect information games defines a sequence of policies $(\pi_s)_{s \geq 0}$ for all $i \in \{1,\dots,N\}$ and $x\in\mathcal{X}_i$ as follow:
\begin{align*}
    &y_t^i(x, a) = \int \limits_{0}^t \rho^{\pi_s^{-i}}(x) Q^i_{\pi_s}(x, a) ds\\
    & \pi^i_t(.|x)=\argmax_{p \in \Delta A} \Lambda^i(p, y_t^i(x,.))
\end{align*}

We define the following quantity for any Nash equilibrium $\pi^*$ of the game:
{
\begin{align*}
    &J(y) = \sum \limits_{i=1}^N \sum \limits_{x \in \mathcal{X}_i} \rho^{{\pi^*}^i}(x) [\phi_i^*(y^i(x, .)) - \langle \pi^*(.|x), y^i(x, .) \rangle]
\end{align*}
}
This quantity will be at the center of our analysis of Follow the Regularized Leader in sections~\ref{recurrence_FoReL} and~\ref{reward_transform}. The following lemma shows how this quantity evolves if both players learn using Follow the Regularized Leader updates. We will use this quantity to create a Lyapunov function for policy-dependent reward and use it to bound the trajectories of FoReL to prove Poincar\'e recurrence; intuitively, that ``most'' trajectories do not converge to equilibria.
\begin{lemma}
\label{lemma_FoReL}
If $y_t$ is defined as the follow the regularized leader dynamics we have:
\begin{align*}
&\frac{d}{dt}J(y) = \sum \limits_{i=1}^N \underbrace{[V^i_{\pi^i_t, {\pi^*}^{-i}} - V^i_{\pi^*}]}_{\leq 0} + \underbrace{\sum \limits_{i=1}^N \Omega^i(\pi, \pi^*)}_{\textrm{$\leq 0$ for a monotone game}}\\
& + \sum \limits_{i=1}^N \sum \limits_{h \in H\backslash\mathcal{Z}} \rho^{\pi_t^{-i}}(h) \rho^{{\pi^*}^{i}}(h) \times\\
&\qquad\qquad \mathbb{E}_{a \sim ({\pi^*}^{i}, \pi_t^{-i}) (..|x(h))}[r^i_{{\pi^*}^{i}, \pi_t^{-i}}(h,a) - r^i_{\pi_t}(h,a)]
\end{align*}

(proof in appendix~\ref{proof_lemma_FoReL})
\end{lemma}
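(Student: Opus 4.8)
\emph{Proof plan.} The plan is to differentiate $J(y)$ directly along the FoReL flow and then convert the resulting expression into a sum of value differences plus a reward-mismatch term, using an extensive-form performance-difference identity adapted to policy-dependent rewards.

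First I would differentiate term by term. Since the weights $\rho^{\pi^{*i}}(x)$ do not depend on $t$ and $\phi_i^*$ is differentiable on its domain (because $\phi_i$ is strictly convex and smooth on the relative interior of every face of $\Delta A$), one gets
$$\frac{d}{dt}J(y) = \sum_{i=1}^N \sum_{x\in\mathcal{X}_i} \rho^{\pi^{*i}}(x)\,\Big\langle \nabla\phi_i^*\big(y_t^i(x,\cdot)\big) - \pi^*(\cdot|x),\ \tfrac{d}{dt}y_t^i(x,\cdot)\Big\rangle.$$
Substituting the maximizing-argument identity $\nabla\phi_i^*(y_t^i(x,\cdot)) = \argmax_{p\in\Delta A}\Lambda^i(p,y_t^i(x,\cdot)) = \pi_t^i(\cdot|x)$ and $\tfrac{d}{dt}y_t^i(x,a) = \rho^{\pi_t^{-i}}(x)Q^i_{\pi_t}(x,a)$ turns this into
$$\frac{d}{dt}J(y) = \sum_{i=1}^N \sum_{x\in\mathcal{X}_i} \rho^{\pi^{*i}}(x)\,\rho^{\pi_t^{-i}}(x)\sum_{a\in A}\big(\pi_t^i(a|x)-\pi^*(a|x)\big)\,Q^i_{\pi_t}(x,a).$$

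The main work is then to rewrite, for each player $i$, the inner quantity as a value difference. Setting $\sigma=(\pi^{*i},\pi_t^{-i})$ and $\pi=\pi_t$ (which differ only in player $i$), I would telescope $V^i_\pi$ along trajectories generated by $\sigma$ and use $Q^i_\pi(h,a)=r^i_\pi(h,a)+V^i_\pi(ha)$ to obtain
$$V^i_\sigma(h_{\textrm{init}})-V^i_\pi(h_{\textrm{init}}) = \sum_{h\in H\backslash\mathcal{Z}}\rho^{\sigma}(h)\sum_a\sigma(a|x(h))\big(Q^i_\pi(h,a)-V^i_\pi(h)\big) + \sum_{h\in H\backslash\mathcal{Z}}\rho^{\sigma}(h)\sum_a\sigma(a|x(h))\big(r^i_\sigma(h,a)-r^i_\pi(h,a)\big).$$
In the first sum the histories with $\tau(h)\neq i$ cancel, since there $\sigma(\cdot|x(h))=\pi(\cdot|x(h))$ and $\sum_a\pi(a|x(h))Q^i_\pi(h,a)=V^i_\pi(h)$ by definition of $V^i_\pi(h)$; in the second sum they do \emph{not} cancel, which is exactly where policy-dependence of the reward enters. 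Regrouping the surviving $\tau(h)=i$ terms by information state, factoring $\rho^\sigma(h)=\rho^{\pi^{*i}}(h)\rho^{\pi_t^{-i}}(h)$, applying perfect recall ($\rho^{\pi^{*i}}(h)=\rho^{\pi^{*i}}(x)$ on $h\in x$) and the information-state definitions of $Q^i_\pi(x,a)$ and $V^i_\pi(x)$, and using $V^i_\pi(x)=\sum_a\pi_t^i(a|x)Q^i_{\pi_t}(x,a)$, the first sum becomes $\sum_{x}\rho^{\pi^{*i}}(x)\rho^{\pi_t^{-i}}(x)\sum_a(\pi^*(a|x)-\pi_t^i(a|x))Q^i_{\pi_t}(x,a)$, while the second sum is exactly the $i$-th reward term $R_i$ in the statement (use $\rho^\sigma(h)=\rho^{\pi^{*i}}(h)\rho^{\pi_t^{-i}}(h)$ and $\sigma=(\pi^{*i},\pi_t^{-i})$). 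Rearranged, this reads
$$\sum_{x\in\mathcal{X}_i}\rho^{\pi^{*i}}(x)\rho^{\pi_t^{-i}}(x)\sum_a\big(\pi_t^i(a|x)-\pi^*(a|x)\big)Q^i_{\pi_t}(x,a) = V^i_{\pi_t} - V^i_{\pi^{*i},\pi_t^{-i}} + R_i.$$
Summing over $i$ and inserting into the expression from the first step gives $\tfrac{d}{dt}J(y)=\sum_i\big(V^i_{\pi_t}-V^i_{\pi^{*i},\pi_t^{-i}}\big)+\sum_i R_i$, and the value part splits via the purely algebraic identity $V^i_{\pi_t}-V^i_{\pi^{*i},\pi_t^{-i}}=[V^i_{\pi_t^i,\pi^{*-i}}-V^i_{\pi^*}]+\Omega^i(\pi_t,\pi^*)$ (obtained by expanding $\Omega^i$); the bracket is $\leq 0$ because $\pi^*$ is a Nash equilibrium and $\sum_i\Omega^i(\pi_t,\pi^*)\leq 0$ by monotonicity, which is the claimed statement with the indicated signs.

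I expect the main obstacle to be the reach-probability bookkeeping in the performance-difference step: keeping $\rho^{\pi^{*i}}$ and $\rho^{\pi_t^{-i}}$ separate, invoking perfect recall exactly where needed to pass from histories to information states, and cleanly isolating the advantage sum (where the non-$i$ nodes vanish) from the reward-mismatch sum (where they persist precisely because $r^i_\pi$ depends on the whole profile). A minor point is justifying the term-by-term differentiation in the first step, which needs differentiability of $\phi_i^*$ and existence of the FoReL trajectory — both guaranteed by the assumptions on $\phi_i$.
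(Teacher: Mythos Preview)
Your proposal is correct and follows essentially the same route as the paper: differentiate $J$ using $\nabla\phi_i^*=\pi_t^i$, then telescope $V^i_{\pi_t}$ along trajectories generated by $\sigma=(\pi^{*i},\pi_t^{-i})$ to convert the advantage-like sum into $V^i_{\pi_t}-V^i_{\pi^{*i},\pi_t^{-i}}$ plus the reward-mismatch term, and finally split via the algebraic identity for $\Omega^i$. The only cosmetic difference is that the paper writes out the telescoping line by line whereas you package it as a single performance-difference identity with two sums; the observation that the advantage sum vanishes on $\tau(h)\neq i$ nodes while the reward-mismatch sum does not is exactly the mechanism the paper uses.
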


\section{Recurrence of FoReL}
\label{recurrence_FoReL}
This section generalizes the results of~\cite{mertikopoulos2018cycles} to Follow the Regularized Leader in monotone Imperfect Information Games when the reward is policy-independent (all the zero-sum two-player games implemented in OpenSpiel~\cite{lanctot2019openspiel} have this property) and when the equilibrium has a full support. This requires two steps, first we will prove that an equivalent learning dynamic is {\bf Divergence-free} (or preserves volume). Then we will use lemma~\ref{lemma_FoReL} to show that all trajectories of this new dynamical system are {\bf bounded}. This is enough to prove that the trajectories of FoReL are {\bf Poincar\'e recurrent}. Intuitively this means that all trajectories will go back to a neighborhood of their starting point arbitrarily often. The Poincar\'e recurrence theorem~\cite{Piliouras14,
mertikopoulos2018cycles, poincare1890probleme} states:
\begin{theorem}
\label{poincare_recurrent}
If a flow preserves volume (is Divergence-free) and has only bounded orbits then for each open set there exist orbits that intersect the set infinitely often.
\end{theorem}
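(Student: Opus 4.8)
The plan is to establish the standard measure-theoretic Poincar\'e recurrence theorem, proving in fact the slightly stronger statement that \emph{almost every} point of a given open set returns to it infinitely often; the bare existence claim in Theorem~\ref{poincare_recurrent} then follows since nonempty open sets have positive volume. First I would reduce from continuous to discrete time: fix a step $\tau > 0$ and let $T = \Phi_\tau$ be the time-$\tau$ map of the flow. Because the flow is divergence-free, Liouville's theorem makes every $\Phi_t$, and in particular $T$, a volume-preserving diffeomorphism, so $T$ is a bijection with $\mathrm{vol}(T^{-1}A) = \mathrm{vol}(TA) = \mathrm{vol}(A)$ for all measurable $A$. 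It suffices to show the $T$-orbit of a point meets the open set for infinitely many integer iterates, since that orbit is a subset of the flow orbit. Next I would use the boundedness hypothesis to pass to a finite-volume ambient space: all orbits through the (bounded) open set $U$ of interest lie in a fixed bounded, flow-invariant region $K$ — this is precisely the uniform bound extracted from Lemma~\ref{lemma_FoReL} — and $\overline{K}$ is compact, hence of finite Lebesgue volume. Every set considered below lives inside $\overline{K}$.

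The core step is the claim that the non-recurrent part of $U$ is null. Let $A = \{x \in U : T^n x \notin U \text{ for all } n \ge 1\}$ be the set of points of $U$ that never return to $U$ under positive iterates. The sets $A, TA, T^2A, \dots$ are pairwise disjoint: if $T^m A \cap T^n A \neq \emptyset$ with $0 \le m < n$, then applying the injective map $T^{-m}$ produces some $x \in A$ with $T^{n-m}x \in A \subseteq U$ and $n-m \ge 1$, contradicting membership in $A$. Since volume is preserved, each $T^k A$ has volume $\mathrm{vol}(A)$, and these disjoint sets all sit inside $\overline{K}$ with $\mathrm{vol}(\overline{K}) < \infty$; summing forces $\mathrm{vol}(A) = 0$.

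To upgrade ``returns at least once'' to ``returns infinitely often'', let $B \subseteq U$ be the set of points whose forward $T$-orbit meets $U$ only finitely often. For $x \in B$ pick the last index $n_0$ with $T^{n_0}x \in U$; then $T^{n_0}x$ lies in $U$ and never returns to $U$, i.e. $T^{n_0}x \in A$, so $x \in \bigcup_{n \ge 0} T^{-n}(A)$. As $\mathrm{vol}(T^{-n}A) = \mathrm{vol}(A) = 0$ for every $n$, this countable union is null, hence $\mathrm{vol}(B) = 0$. Thus almost every point of $U$ returns to $U$ infinitely often under $T$, and since $U$ is a nonempty open set it has positive volume, so $U \setminus B \neq \emptyset$; any $x \in U \setminus B$ has a $T$-orbit, and therefore a $\Phi$-orbit, intersecting $U$ infinitely often, which is exactly the conclusion of Theorem~\ref{poincare_recurrent}.

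The one step I would treat with care — the genuine obstacle — is the passage from ``every orbit is bounded'' to ``all the disjoint translates $T^kA$ lie in a single finite-volume region''. Pointwise boundedness of individual orbits is not by itself enough for this, so the argument must invoke the uniform bound on trajectories obtained from the Lyapunov/conservation estimate of Lemma~\ref{lemma_FoReL} (namely that with policy-independent reward in a monotone game $J$ is constant along trajectories and coercive, confining each orbit, and the relevant orbit bundle, to a common compact set). Everything else is the routine volume-counting argument above.
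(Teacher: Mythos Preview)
The paper does not give its own proof of Theorem~\ref{poincare_recurrent}: it is quoted as the classical Poincar\'e recurrence theorem, with citations to Poincar\'e's original 1890 paper and to the expositions in Piliouras--Shamma and Mertikopoulos--Papadimitriou--Piliouras, and is then applied as a black box once the divergence-free and bounded-orbit hypotheses have been verified. So there is nothing in the paper to compare against; what you have written is the standard measure-theoretic argument and it is correct.

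Two minor remarks on the step you flag. First, in the paper's specific setting the uniform bound on orbits comes from Corollary~\ref{H_bounded} (that $\frac{d}{dt}J \le 0$ along trajectories) together with coercivity of $J$, rather than from Lemma~\ref{lemma_FoReL} directly; your description of $J$ as ``constant'' is correct in the zero-sum case, while for general monotone games one only has $\frac{d}{dt}J \le 0$, which still confines orbits to a fixed sublevel set. Second, the abstract statement of the theorem (every orbit individually bounded, not a priori uniformly) can in fact be handled without any Lyapunov input: stratify the non-returning set as $A = \bigcup_{n\ge 1} A_n$ with $A_n = \{x \in A : \sup_{k\ge 0}\|T^k x\| \le n\}$. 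Each $A_n$ is measurable (the supremum is over countably many continuous maps), the translates $T^k A_n$ are pairwise disjoint and all lie in the closed ball of radius $n$, so $\mathrm{vol}(A_n)=0$ for every $n$ and hence $\mathrm{vol}(A)=0$. With this small adjustment your argument proves the theorem exactly as stated, independently of the paper's Lyapunov machinery.
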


Instead of studying the original dynamical system, we will fix an action $a_x$ for all $x$ and consider the dynamical system (as this system keeps $w$ bounded):
\begin{align}
    &\dot w_t^i(x, a) = \rho^{\pi_t^{-i}}(x) [Q^i_{\pi_t}(x, a) - Q^i_{\pi_t}(x, a_x)]\label{bounded-FoReL1}\\
    & \pi^i_t(.|x)=\argmax_{p \in \Delta A} \Lambda^i(p, w_t^i(x,.))\label{bounded-FoReL2}
\end{align}

\paragraph{Divergence-free.}
In order to get qualitative results on FoReL, we will prove that the FoReL dynamic is Divergence-free (a generalization of a result from~\citet{mertikopoulos2018cycles}).

\begin{lemma}
\label{FoReL-incompressible}
The system defined above (equation~\eqref{bounded-FoReL1} and~\eqref{bounded-FoReL2}) is autonomous (can be written as $\dot w_t = \xi(w_t)$), Divergence-free, and the dynamic of the policy $\pi_t$ is equivalent to the one defined in section~\ref{sec-FoReL} when the reward is policy-independent.
(Proof in appendix~\ref{appendix-FoReL-incompressible})
\end{lemma}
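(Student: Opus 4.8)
The plan is to verify the three assertions in turn, treating volume preservation as the substantive one.

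\textbf{Autonomy and equivalence.} First I would invoke the maximizing-argument identity $\pi^i_t(\cdot|x)=\nabla\phi_i^*\bigl(w_t^i(x,\cdot)\bigr)$ quoted after the definition of $\phi_i^*$: this exhibits the whole policy $\pi_t$ as a fixed, time-independent function $\pi_t=\Pi(w_t)$. Since the reward is policy-independent, the reach probabilities $\rho^{\pi_t^{-i}}(x)$ and the $Q$-values $Q^i_{\pi_t}(x,a)$ depend on $w_t$ only through $\Pi(w_t)$, so the right-hand side of \eqref{bounded-FoReL1} has the form $\xi(w_t)$ with $\xi$ not depending on $t$; that is autonomy. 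For equivalence, set $\tilde w_t^i(x,a)=y_t^i(x,a)-y_t^i(x,a_x)$, with $y$ the genuine FoReL trajectory of Section~\ref{sec-FoReL} and matching initial condition $\tilde w_0=w_0$. Because $\langle c\mathbf{1},p\rangle=c$ is constant over $p\in\Delta A$, we have $\Lambda^i(p,y+c\mathbf{1})=\Lambda^i(p,y)+c$, so $y\mapsto\argmax_{p\in\Delta A}\Lambda^i(p,y)$ is invariant under adding a constant vector; hence the policy produced from $\tilde w_t^i(x,\cdot)$ equals $\pi^i_t(\cdot|x)$. Differentiating then gives $\dot{\tilde w}_t^i(x,a)=\rho^{\pi_t^{-i}}(x)\bigl[Q^i_{\pi_t}(x,a)-Q^i_{\pi_t}(x,a_x)\bigr]$, i.e. $\tilde w$ solves \eqref{bounded-FoReL1}--\eqref{bounded-FoReL2}. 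By uniqueness of solutions to this ODE, $w_t=\tilde w_t$, so both descriptions induce the same policy path $\pi_t$.

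\textbf{Divergence-free.} By Liouville's theorem it suffices to show $\mathrm{div}\,\xi\equiv 0$. Since $w^i(x,\cdot)$ enters the state only through $\pi^i(\cdot|x)=\nabla\phi_i^*(w^i(x,\cdot))$, the diagonal Jacobian entry $\partial\xi^i_{x,a}/\partial w^i(x,a)$ is, by the chain rule, a combination of $\partial\,[\rho^{\pi^{-i}}(x)(Q^i_\pi(x,a)-Q^i_\pi(x,a_x))]/\partial\pi^i(b|x)$ over $b\in A$. So it is enough to show this bracket does not depend on $\pi^i(\cdot|x)$. The reach probability $\rho^{\pi^{-i}}(x)$ is built only from the opponents' and chance moves and is thus independent of $\pi^i$ entirely. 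For the $Q$-terms I would use the perfect-recall information-state formula $Q^i_\pi(x,a)=\bigl(\sum_{h\in x}\rho^{\pi^{-i}}(h)Q^i_\pi(h,a)\bigr)/\bigl(\sum_{h\in x}\rho^{\pi^{-i}}(h)\bigr)$: the weights contain no player-$i$ moves, and $Q^i_\pi(h,a)=r^i(h,a)+V^i_\pi(ha)$ with $r^i$ policy-independent and $V^i_\pi(ha)$ a function of $\pi$ only at histories strictly after $ha$. Under perfect recall no descendant of $ha$ lies in the information state $x$ (player $i$'s memory at such a descendant strictly extends the memory defining $x$), so $V^i_\pi(ha)$, and hence $Q^i_\pi(x,a)$ and $Q^i_\pi(x,a_x)$, is independent of $\pi^i(\cdot|x)$. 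Every diagonal Jacobian term therefore vanishes and $\mathrm{div}\,\xi=0$.

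\textbf{Main obstacle.} The crux, and the one step deserving care, is the last claim: that conditioning on action $a$ at information state $x$ renders $Q^i_\pi(x,a)$ independent of $\pi^i(\cdot|x)$. This is precisely where perfect recall enters (an information state is never revisited along a player's own decision sequence), and it is the ingredient that lifts the normal-form volume-preservation argument of~\citet{mertikopoulos2018cycles} to the sequential setting. A secondary technical point is the regularity of $\nabla\phi_i^*$ near the boundary of $\Delta A$, needed for the ODE-uniqueness and Jacobian steps; as in the normal-form analysis this is handled by restricting to the relative interior, which is exactly the regime in which recurrence is asserted and where the smoothness hypotheses on $\phi_i$ hold.
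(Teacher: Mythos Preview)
Your proposal is correct and follows essentially the same approach as the paper's proof: the constant-shift invariance of $\Lambda^i$ for equivalence, and the observation that $\rho^{\pi^{-i}}(x)\bigl(Q^i_\pi(x,a)-Q^i_\pi(x,a_x)\bigr)$ is independent of $\pi^i(\cdot|x)$ (hence of $w^i(x,a)$) for zero divergence. Your treatment is in fact more careful than the paper's in two respects---you make explicit the perfect-recall step showing no descendant of $ha$ revisits $x$, and you flag the boundary regularity needed for the ODE-uniqueness and Jacobian arguments---but the core mechanism is the same.
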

This property is critical as it implies that the dynamical system has no attractor~\citep[p.252, prop 6.6]{Weibull97}.

\begin{remark}
This does not mean that the policy will not converge. If the $w$ diverges, the policy might converge to a deterministic strategy. However, if the Nash is of full support, it will not be an attractor of the dynamical system.
\end{remark}

We now know that Nash equilibria cannot be attractors of FoReL as the system is Divergence-free. In order to prove the Poincar\'e recurrence, we need to prove an additional property. We need the trajectory $w_t$ to remain bounded if the equilibrium $\pi^*$ is interior.

\begin{lemma}
\label{interior_Nash}
If the equilibrium is interior, then $\sum \limits_{i=1}^N [V^i_{\pi^i_t, {\pi^*}^{-i}} - V^i_{\pi^*}] = 0$.

(Proof in appendix~\ref{proof_interior_Nash})
\end{lemma}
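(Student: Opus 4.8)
The plan is to prove the stronger statement that each summand vanishes individually, i.e. that for every player $i$ and every policy $\pi^i$ one has $V^i_{\pi^i,{\pi^*}^{-i}}(h_{\textrm{init}}) = V^i_{\pi^*}(h_{\textrm{init}})$; summing over $i$ then gives the claim (and in particular it holds with $\pi^i = \pi^i_t$). One inequality is immediate: since $\pi^*$ is a Nash equilibrium, $V^i_{\pi^i,{\pi^*}^{-i}}(h_{\textrm{init}}) \le V^i_{\pi^*}(h_{\textrm{init}})$. The work is the reverse inequality, for which I will use that ${\pi^*}^i$, being a \emph{full-support} best response to ${\pi^*}^{-i}$, leaves player $i$ indifferent among all of its strategies.

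First I record the structural fact underlying this indifference. Since the reward is policy-independent in this section, write, for the fixed profile $\pi=(\pi^i,{\pi^*}^{-i})$,
\[
g(\pi^i) \;:=\; V^i_{\pi^i,{\pi^*}^{-i}}(h_{\textrm{init}}) \;=\; \sum_{h\in H} \rho^{\pi^i}(h)\,\rho^{{\pi^*}^{-i}}(h)\sum_{a\in A}\pi(a|x(h))\,r^i(h,a).
\]
In each term, $\rho^{{\pi^*}^{-i}}(h)$ and $r^i(h,a)$ are constants (independent of $\pi^i$), while $\rho^{\pi^i}(h)=\prod_{h'a\sqsubset h,\ \tau(h')=i}\pi^i(a|x(h'))$ is a product of factors attached to pairwise \emph{distinct} information states of player $i$ — this is where perfect recall enters, since a player cannot revisit one of its own information states — and when $\tau(h)=i$ there is at most one further linear factor $\pi^i(a|x(h))$, again at an information state distinct from the previous ones. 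Hence $g$ is a \emph{multilinear} function of the family of simplex variables $\big(\pi^i(\cdot|x)\big)_{x\in\mathcal{X}_i}\in\prod_{x\in\mathcal{X}_i}\Delta A$.

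I then finish using the following elementary fact: a multilinear function on a product of simplices that attains its global maximum at a point lying in the relative interior of \emph{every} factor is constant. This is proved by induction on the number of factors, the base case being that an affine function on a single simplex which is maximized at an interior point is constant (write that interior point as a strictly positive convex combination of the vertices; each vertex must then attain the maximum); the inductive step freezes all but finitely many blocks at an arbitrary value, uses the base case on one block to see that the frozen function still attains the global maximum at the interior point of the remaining blocks, and applies the induction hypothesis. Because $\pi^*$ is an \emph{interior} Nash equilibrium, ${\pi^*}^i$ has full support at every $x\in\mathcal{X}_i$, hence lies in the relative interior of every factor, and by the definition of Nash equilibrium it is a global maximizer of $g$. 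Therefore $g$ is constant, equal to $g({\pi^*}^i)=V^i_{\pi^*}(h_{\textrm{init}})$; in particular $V^i_{\pi^i_t,{\pi^*}^{-i}} - V^i_{\pi^*}=0$ for each $i$, and the asserted sum is $0$.

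The delicate point is that it is \emph{not} enough to note that ${\pi^*}^i$ is an interior critical point of $g$: a multilinear function can have an interior critical point that is a saddle rather than a maximum (e.g.\ $(x-\tfrac12)(y-\tfrac12)$ on $[0,1]^2$), in which case it is not constant. What rescues the argument is to use genuinely that ${\pi^*}^i$ is a \emph{global} maximizer and that the maximum is interior in \emph{every} block at once, which is exactly the content of the inductive lemma above. A secondary point to get right is the multilinearity bookkeeping, which is where both perfect recall (each information state of player $i$ occurs at most once on any play and never coincides with a later one) and the policy-independence of the reward assumed throughout this section are used.
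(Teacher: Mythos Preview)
Your argument is correct, but it is organized differently from the paper's proof. The paper first records the advantage decomposition
\[
V^i_{\pi^i,{\pi^*}^{-i}}-V^i_{\pi^*}
=\sum_{x\in\mathcal X_i}\rho^{\pi^i}(x)\,\rho^{{\pi^*}^{-i}}(x)\sum_{a}\big({\pi^*}^i(a|x)-\pi^i(a|x)\big)Q^i_{\pi^*}(x,a),
\]
and then argues, by a one-step deviation at a single information state, that if ${\pi^*}^i$ has full support while $Q^i_{\pi^*}(x,\cdot)$ were not constant at some $x$, a greedy modification at $x$ would strictly improve player $i$ against ${\pi^*}^{-i}$, contradicting Nash; hence $Q^i_{\pi^*}(x,\cdot)$ is constant at every $x$ and each summand vanishes. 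You instead bypass this decomposition by observing that, under perfect recall and policy-independent rewards, $g(\pi^i)=V^i_{\pi^i,{\pi^*}^{-i}}$ is multilinear in the blocks $\big(\pi^i(\cdot|x)\big)_{x\in\mathcal X_i}$, and then invoke the structural fact that a multilinear function on a product of simplices whose global maximum lies in the relative interior of every factor is constant. Your induction for that fact is the right one (first vary one block while the others are held at ${\pi^*}^i$ to see the max persists, then apply the induction hypothesis), even if the phrase ``freezes all but finitely many blocks'' is a little off since there are only finitely many to begin with. The paper's route additionally delivers the indifference principle $Q^i_{\pi^*}(x,a)=Q^i_{\pi^*}(x,a')$ as a byproduct; your route is more self-contained and makes transparent exactly where perfect recall (distinct information states along any play) and policy-independence of the reward enter.
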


\begin{corollary}
\label{H_bounded}
In a monotone game with a policy-independent reward and an interior equilibrium, if $y_t$ is defined as following the FoReL algorithm we have:
$\frac{d}{dt}J(y) \leq 0$
\end{corollary}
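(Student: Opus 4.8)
The plan is to read this off almost directly from Lemma~\ref{lemma_FoReL}, which already writes $\frac{d}{dt}J(y)$ as a sum of three pieces, and then to dispatch each piece using the three hypotheses of the corollary (monotone game, policy-independent reward, interior equilibrium). No new dynamical-systems machinery is needed; this is bookkeeping on top of the earlier lemmas.

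First I would use the policy-independence of the reward. This is used in the strong sense that the reward at every history does not change when player $i$'s component of the policy is altered, so $r^i_{{\pi^*}^{i}, \pi_t^{-i}}(h,a) = r^i_{\pi_t}(h,a) = r^i(h,a)$ for every $h \in H\backslash\mathcal{Z}$ and $a \in A$. Hence the third term in Lemma~\ref{lemma_FoReL}, namely $\sum_{i} \sum_{h} \rho^{\pi_t^{-i}}(h)\rho^{{\pi^*}^{i}}(h)\,\mathbb{E}_{a\sim({\pi^*}^{i},\pi_t^{-i})}[r^i_{{\pi^*}^{i},\pi_t^{-i}}(h,a)-r^i_{\pi_t}(h,a)]$, vanishes identically.

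Next I would bound the second term, $\sum_{i=1}^N \Omega^i(\pi_t,\pi^*)$. By the definition of a monotone game this is $\le 0$ whenever $\pi_t \neq \pi^*$, and it is trivially $0$ when $\pi_t = \pi^*$; either way it is $\le 0$ for all $t$. Finally I would eliminate the first term, $\sum_{i=1}^N [V^i_{\pi^i_t,{\pi^*}^{-i}} - V^i_{\pi^*}]$, by invoking Lemma~\ref{interior_Nash}: since $\pi^*$ is interior, this sum equals $0$. Combining the three observations gives $\frac{d}{dt}J(y) = 0 + (\le 0) + 0 \le 0$, which is the assertion.

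The proof has essentially no computational obstacle — all the real work is front-loaded into Lemmas~\ref{lemma_FoReL} and~\ref{interior_Nash}. The only points requiring care are that the hypotheses line up exactly: Lemma~\ref{lemma_FoReL} is stated for the $y_t$ dynamics of Section~\ref{sec-FoReL} (the object appearing in the corollary), not the surrogate $w_t$ of Lemma~\ref{FoReL-incompressible}; the monotonicity condition invoked is the global inequality $\sum_i \Omega^i \le 0$; and interiority of $\pi^*$ is precisely what Lemma~\ref{interior_Nash} requires. The one subtlety worth flagging in the write-up is that ``policy-independent reward'' must be interpreted strongly enough that the reward is unchanged under the swap $\pi_t^i \mapsto {\pi^*}^{i}$ — this is exactly what makes the third term disappear rather than merely simplify.
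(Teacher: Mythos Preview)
Your proposal is correct and is exactly the approach the paper intends: the corollary is immediate from Lemma~\ref{lemma_FoReL} once Lemma~\ref{interior_Nash} kills the first term, monotonicity bounds the second, and policy-independence of the reward zeroes out the third. The paper does not spell this out beyond labeling it a corollary, and your write-up fills in precisely the bookkeeping that is implicit there.
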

Corolary~\ref{H_bounded} implies that the trajectories of the dynamics equation~\eqref{bounded-FoReL1} and~\eqref{bounded-FoReL2} are bounded. This can be proven by directly using arguments from~\citep[Lemma D.2.]{mertikopoulos2018cycles}.

\paragraph{Poincar\'e recurrence.}
As we have seen in the two previous paragraphs, the flow of FoReL is Divergence-free and all trajectories are bounded in the case of monotone games with an interior Nash equilibrium. Thus all orbits are Poincar\'e recurrent.

\section{Reward Transformation and Convergence in IIG}
\label{reward_transform}
In section~\ref{recurrence_FoReL} we have seen that a policy-independent reward signal can lead to recurrent behavior. The idea we study here is to slightly modify the reward signal such that the Nash equilibrium of this new game is an attractor. We will show two reward transformations that guarantee convergence to a Nash equilibrium (with Lyapunov arguments). The first reward transformation applies generally to monotone games and the second one applies specifically to zero-sum games. But first we briefly recall the Lyapunov method.
\paragraph{Lyapunov method.} The idea of the Lyapunov method to study the ordinary differential equation $\frac{d}{dt}y_t = \xi(y_t)$ is to look at the variations of a quantity $\mathcal{F}(y)\geq 0$ (and $\mathcal{F}(y^*)=0$). The function $\mathcal{F}$ is said to be a strict Lyapunov function if:
$$\forall y \neq y^*, \; \frac{d}{dt}\mathcal{F}(y_t) < 0$$
In that case, the $y_t$ will converge to a minimum of $\mathcal{F}$ if $\xi$ is locally Lipschitz and if $\mathcal{F}$ is a continuously differentiable function. The function $\mathcal{F}$ is said to be a strong Lyapunov function if:
$$\frac{d}{dt}\mathcal{F}(y_t) \leq - \beta \mathcal{F}(y_t), \; \beta >0$$
In this case, the $y_t$ will converge to a minimum of $\mathcal{F}$ at an exponentially fast rate $\mathcal{F}(y_t) \leq \mathcal{F}(y_0) \exp(- \beta t)$.
\paragraph{Monotone games.} In the general case of monotone games, the reward that for any $\mu$ preserves the monotonicity is: (see proof in section~\ref{reward_transform_monotonicity}) $$r^i_\pi(h, a) = r^i(h, a) - \frac{\eta \mathbf{1}_{i=\tau(h)}}{\rho^{\pi^{-i}}(h)} \log \frac{\pi(a|x(h))}{\mu(a|x(h))}$$
An immediate corollary of lemma~\ref{lemma_FoReL} is:
\begin{corollary}
\label{corollary_monotone}
In monotone games, the reward transformation $r^i_\pi(h, a) = r^i(h, a) - \frac{\eta \mathbf{1}_{i=\tau(h)}}{\rho^{\pi^{-i}}(h)} \log \frac{\pi(a|x(h))}{\mu(a|x(h))}$ considered above implies that $H$ will be decreasing:

$\frac{d}{dt}J(y)  \leq - \eta \sum \limits_{i=1}^N \sum \limits_{h \in H_i} \rho^{{\pi^*}^i}(h) KL(\pi^*(.|x(h)), \pi_t(.|x(h)))$
\end{corollary}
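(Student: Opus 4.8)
The plan is to start directly from Lemma~\ref{lemma_FoReL} and evaluate its three terms under the proposed reward transformation. The lemma already asserts that the first bracket $\sum_{i=1}^N[V^i_{\pi^i_t,{\pi^*}^{-i}} - V^i_{\pi^*}]$ is nonpositive because $\pi^*$ is a Nash equilibrium, and that $\sum_{i=1}^N\Omega^i(\pi,\pi^*)\le 0$ because the game is monotone; to keep the latter available I would invoke the fact (Section~\ref{reward_transform_monotonicity}) that the chosen transformation preserves monotonicity, which itself holds because the added term, for $h\in H_i$, equals $-\eta\sum_{h\in H_i}\rho^{\pi^i}(h)\,KL(\pi(.|x(h)),\mu(.|x(h)))$ after using $\rho^\pi=\rho^{\pi^i}\rho^{\pi^{-i}}$, hence depends on player $i$'s own policy only and therefore drops out of the second difference defining $\Omega^i$. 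So the entire burden reduces to showing that the third (policy-dependence) term of Lemma~\ref{lemma_FoReL} equals exactly $-\eta\sum_{i=1}^N\sum_{h\in H_i}\rho^{{\pi^*}^i}(h)\,KL(\pi^*(.|x(h)),\pi_t(.|x(h)))$.

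For that term I would substitute $r^i_\pi(h,a)=r^i(h,a)-\frac{\eta\mathbf{1}_{i=\tau(h)}}{\rho^{\pi^{-i}}(h)}\log\frac{\pi(a|x(h))}{\mu(a|x(h))}$ at the two policy profiles appearing in the bracket, $(\pi^{*i},\pi_t^{-i})$ and $\pi_t=(\pi_t^i,\pi_t^{-i})$. The policy-independent part $r^i(h,a)$ cancels in the difference. The indicator $\mathbf{1}_{i=\tau(h)}$ kills every history $h\notin H_i$, so the sum over $h\in H\backslash\mathcal{Z}$ collapses onto non-terminal $h\in H_i$; for such $h$ we have $x(h)\in\mathcal{X}_i$, so $\pi(a|x(h))=\pi^i(a|x(h))$ and the $-i$ reach probability in the denominator is $\rho^{\pi_t^{-i}}(h)$ for both profiles (it depends only on opponents' actions). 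The difference of the two reward terms is thus $-\frac{\eta}{\rho^{\pi_t^{-i}}(h)}\log\frac{\pi^{*i}(a|x(h))}{\pi_t^i(a|x(h))}$, and the prefactor $\rho^{\pi_t^{-i}}(h)\rho^{{\pi^*}^i}(h)$ in Lemma~\ref{lemma_FoReL} cancels the $\rho^{\pi_t^{-i}}(h)$ in the denominator. Finally, for $h\in H_i$ the expectation $\mathbb{E}_{a\sim(\pi^{*i},\pi_t^{-i})(.|x(h))}$ is $\mathbb{E}_{a\sim\pi^*(.|x(h))}$, and $\mathbb{E}_{a\sim\pi^*(.|x(h))}[\log\frac{\pi^*(a|x(h))}{\pi_t(a|x(h))}]=KL(\pi^*(.|x(h)),\pi_t(.|x(h)))$, yielding the claimed expression.

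Combining the three estimates gives $\frac{d}{dt}J(y)\le 0+0-\eta\sum_{i=1}^N\sum_{h\in H_i}\rho^{{\pi^*}^i}(h)\,KL(\pi^*(.|x(h)),\pi_t(.|x(h)))$, as desired. I do not expect a genuine obstacle: the only points requiring care are (a) confirming the transformation leaves $\Omega^i$ — and hence the monotonicity hypothesis — intact, and (b) matching the index set, i.e. reading $\sum_{h\in H_i}$ as ranging over non-terminal histories (terminal histories carry no action and contribute a vacuous zero KL term). Everything else is bookkeeping with reach probabilities and the cancellation of $\rho^{\pi_t^{-i}}(h)$.
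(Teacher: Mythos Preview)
Your proposal is correct and follows essentially the same route as the paper: invoke Lemma~\ref{lemma_FoReL}, note that the first two terms are nonpositive (the second by observing that the added reward term contributes $-\eta\sum_{h\in H_i}\rho^{\pi^i}(h)\,KL(\pi^i(.|x(h)),\mu^i(.|x(h)))$ to $V^i_\pi(h_{\textrm{init}})$, which depends only on $\pi^i$ and hence leaves $\Omega^i$ unchanged), and compute the third term exactly via the cancellation of $\rho^{\pi_t^{-i}}(h)$ against the denominator. Your handling of both caveats (monotonicity preservation and the terminal-history index set) matches the paper's treatment.
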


Finally, if the regularizer $\phi_i$ is the entropy, we can show that the $\Xi(\pi^*, \pi_t) = \sum \limits_{i=1}^N \sum \limits_{h \in H_i} \rho^{{\pi^*}^i}(h) KL(\pi^*(.|x(h)), \pi_t(.|x(h)))$ is a strong Lyapunov function:
\begin{theorem} If the regularizer $\phi_i$ is the entropy:
$$\frac{d}{dt}\Xi(\pi^*,\pi_t)  \leq - \eta \Xi(\pi^*,\pi_t)$$
it implies:
$\Xi(\pi^*,\pi_t) \leq \Xi(\pi^*,\pi_0) \exp(- \eta t)$
(proof in appendix~\ref{proof_strong_lyapunov_function})
\end{theorem}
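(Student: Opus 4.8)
The plan is to reduce the statement to Corollary~\ref{corollary_monotone} by showing that, when $\phi_i$ is the (negative) Shannon entropy, the Lyapunov candidate $J(y_t)$ and the quantity $\Xi(\pi^*,\pi_t)$ differ only by an additive constant that does not depend on $t$. Once this is established, the bound $\frac{d}{dt}J(y)\le-\eta\,\Xi(\pi^*,\pi_t)$ of Corollary~\ref{corollary_monotone} reads directly as $\frac{d}{dt}\Xi(\pi^*,\pi_t)\le-\eta\,\Xi(\pi^*,\pi_t)$, and a one-line integrating-factor (Gr\"onwall) argument then yields the exponential decay.

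Concretely, I would first record the explicit form of the entropic conjugate: for $\phi_i(p)=\sum_a p(a)\log p(a)$ on $\Delta A$ one has $\phi_i^*(y)=\log\sum_a e^{y(a)}$, and the maximizing argument $\nabla_y\phi_i^*$ is the Gibbs map, so the FoReL iterate satisfies $\pi_t^i(a|x)=e^{y_t^i(x,a)}/\sum_b e^{y_t^i(x,b)}$, equivalently $\log\pi_t^i(a|x)=y_t^i(x,a)-\phi_i^*(y_t^i(x,\cdot))$. Substituting this into $KL(\pi^*(\cdot|x),\pi_t(\cdot|x))=\sum_a\pi^*(a|x)\log\pi^*(a|x)-\sum_a\pi^*(a|x)\log\pi_t^i(a|x)$ and using $\sum_a\pi^*(a|x)=1$ gives the pointwise identity
$$\phi_i^*(y_t^i(x,\cdot))-\langle\pi^*(\cdot|x),y_t^i(x,\cdot)\rangle=KL(\pi^*(\cdot|x),\pi_t(\cdot|x))-\phi_i(\pi^*(\cdot|x)),$$
valid for every information state $x\in\mathcal X_i$ and every $t$. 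Multiplying by the player-$i$ reach weights $\rho^{\pi^{*i}}$ and summing over players and states (using perfect recall to move between the per-history and per-information-state bookkeeping), the left-hand side aggregates to $J(y_t)$, the first term on the right to $\Xi(\pi^*,\pi_t)$, and the last term to a constant $C:=-\sum_i\sum_x\rho^{\pi^{*i}}(x)\,\phi_i(\pi^*(\cdot|x))$ depending only on the fixed equilibrium $\pi^*$. Hence $J(y_t)=\Xi(\pi^*,\pi_t)+C$ with $C$ independent of $t$.

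Then I would differentiate in $t$. Since $t\mapsto y_t$ is $C^1$ (an integral of a continuous integrand) and, with the entropic regularizer, $\pi_t$ stays in the relative interior of $\Delta A$ so that $\phi_i^*$ is smooth at $y_t$, all terms are differentiable and $\frac{d}{dt}\Xi(\pi^*,\pi_t)=\frac{d}{dt}J(y_t)$. Corollary~\ref{corollary_monotone} then gives $\frac{d}{dt}\Xi(\pi^*,\pi_t)\le-\eta\,\Xi(\pi^*,\pi_t)$. Finally, writing $g(t)=\Xi(\pi^*,\pi_t)\ge 0$, the inequality $g'(t)+\eta g(t)\le 0$ means $\frac{d}{dt}\big(e^{\eta t}g(t)\big)\le 0$, so $e^{\eta t}g(t)$ is non-increasing and $\Xi(\pi^*,\pi_t)=g(t)\le g(0)e^{-\eta t}=\Xi(\pi^*,\pi_0)\exp(-\eta t)$, which is the claim.

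The step that requires the most care is the aggregation/bookkeeping: $\Xi$ is written as a sum over histories weighted by $\rho^{\pi^{*i}}(h)$ while $J$ is written as a sum over information states weighted by $\rho^{\pi^{*i}}(x)$, so one must invoke perfect recall (which makes $\rho^{\pi^{*i}}(h)$ constant across $h\in x$ and equal to $\rho^{\pi^{*i}}(x)$) to verify that these two aggregations of the pointwise identity coincide, and hence that $J-\Xi$ is genuinely $t$-independent. Everything else — the closed form of $\phi_i^*$, the $KL$ expansion, and the Gr\"onwall/integrating-factor step — is routine, and the analytic heart of the matter (the sign of $\frac{d}{dt}J$) has already been supplied by Lemma~\ref{lemma_FoReL} and Corollary~\ref{corollary_monotone}.
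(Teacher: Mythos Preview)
Your proposal is correct and follows essentially the same route as the paper: establish that $J(y_t)$ and $\Xi(\pi^*,\pi_t)$ differ by a $t$-independent constant, so their time derivatives agree, then invoke Corollary~\ref{corollary_monotone} and Gr\"onwall. The only notable difference is in how the pointwise identity is derived. You compute it directly from the explicit log-sum-exp form of $\phi_i^*$ and the softmax expression for $\pi_t$, which is clean and specific to the entropy case. The paper instead keeps the regularizer generic for most of the computation: it expands $\phi_i^*(y)=\Lambda^i(\pi^i,y)=\langle\pi^i,y\rangle-\phi_i(\pi^i)$, then uses the conjugate relation $\nabla\phi_i(\nabla\phi_i^*(y))=y+\delta\mathbf{1}$ to recognize the result as the Bregman divergence $D_{\phi_i}(\pi^*,\pi^i)$, and only at the end specializes $D_{\phi_i}=KL$ when $\phi_i$ is the entropy. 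The paper's version thus yields the same constant-offset identity for \emph{any} admissible regularizer (with the corresponding Bregman divergence in place of $KL$), whereas your argument is tailored to entropy; conversely, yours avoids the slightly delicate tangent-space/conjugate-gradient identity the paper invokes. Your caution about the history-versus-information-state bookkeeping is also well placed; the paper silently uses the perfect-recall convention $\rho^{\pi^{*i}}(x)=\rho^{\pi^{*i}}(h)$ for $h\in x$ to equate the two aggregations.
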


This method thus introduces a trade-off between the speed of convergence of the algorithm and the transformation we make to the reward (which has an impact on the equilibrium of the transformed game).
\paragraph{Zero-sum two-player games.} 
Whilst the above approach can be applied to all monotone games, the following reward can be applied specifically to zero-sum games. For any $\mu$, this reward keeps the zero-sum property (see appendix~\ref{reward_transform_monotonicity}) and is more prone to sample based methods as the $\frac{1}{\rho^{\pi^{-i}}(h)}$ is not involved,
\begin{align}
    r^i_\pi(h, a) &= r^i(h, a) - \mathbf{1}_{i=\tau(h)} \eta \log \frac{\pi(a|x(h))}{\mu(a|x(h))}\nonumber\\
    &\qquad  + \mathbf{1}_{i\neq\tau(h)} \eta \log \frac{\pi(a|x(h))}{\mu(a|x(h))}\nonumber
\end{align}
And in that case:

\begin{corollary}
\label{corollary_zero_sum}
We have:
\begin{align*}
    &\frac{d}{dt}J(y) \leq\\
    &- \eta \sum \limits_{i=1}^N \sum \limits_{h \in H_i}\rho^{{\pi^*}^i}(h)\rho^{{\pi_t}^{-i}}(h) KL(\pi^*(.|x(h)), \pi_t(.|x(h)))
\end{align*}
\end{corollary}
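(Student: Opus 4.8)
The plan is to apply Lemma \ref{lemma_FoReL} directly, just as Corollary \ref{corollary_monotone} does, but now substituting the zero-sum reward transformation into the error term and exploiting the zero-sum structure to control $\sum_i \Omega^i(\pi,\pi^*)$. First I would write out the three pieces appearing in $\frac{d}{dt}J(y)$ from Lemma \ref{lemma_FoReL}: the $\sum_i [V^i_{\pi^i_t,{\pi^*}^{-i}} - V^i_{\pi^*}]$ term, which is $\leq 0$ because $\pi^*$ is a Nash equilibrium; the $\sum_i \Omega^i(\pi_t,\pi^*)$ term; and the error sum over $h \in H\backslash\mathcal{Z}$ weighted by $\rho^{\pi_t^{-i}}(h)\rho^{{\pi^*}^i}(h)$. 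Note that for a zero-sum two-player game the monotonicity condition gives $\sum_i \Omega^i(\pi_t,\pi^*) \leq 0$, so that term is already nonpositive and can be dropped. (One should check that the reward transformation itself keeps the game zero-sum, which is exactly the claim referenced to appendix \ref{reward_transform_monotonicity}: the two logarithmic terms cancel between the two players since $-\mathbf{1}_{i=\tau(h)} + \mathbf{1}_{i\neq\tau(h)}$ summed over the two players at any $h$ gives zero.)

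The heart of the argument is the error term. Plugging in $r^i_\pi(h,a) = r^i(h,a) - \mathbf{1}_{i=\tau(h)}\eta\log\frac{\pi(a|x(h))}{\mu(a|x(h))} + \mathbf{1}_{i\neq\tau(h)}\eta\log\frac{\pi(a|x(h))}{\mu(a|x(h))}$, the policy-independent part $r^i(h,a)$ cancels in the difference $r^i_{{\pi^*}^i,\pi_t^{-i}}(h,a) - r^i_{\pi_t}(h,a)$. What remains is the difference of the policy-dependent terms evaluated at $({\pi^*}^i,\pi_t^{-i})$ versus $\pi_t$. I would split the histories $h\in H\backslash\mathcal{Z}$ into those with $\tau(h)=i$ and those with $\tau(h)\neq i$. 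For $h$ with $\tau(h)=i$, changing player $i$'s policy from $\pi_t$ to ${\pi^*}^i$ changes the $\log\frac{\pi^i(a|x(h))}{\mu(a|x(h))}$ term, contributing $-\eta\,\mathbb{E}_{a\sim\pi^*(\cdot|x(h))}\big[\log\frac{\pi^{*}(a|x(h))}{\mu(a|x(h))} - \log\frac{\pi_t(a|x(h))}{\mu(a|x(h))}\big] = -\eta\,\mathrm{KL}(\pi^*(\cdot|x(h)),\pi_t(\cdot|x(h)))$ after the $\mu$ terms cancel and the expectation is taken under $\pi^*$. For $h$ with $\tau(h)\neq i$, the relevant policy-dependent term involves the opponent's policy, which is $\pi_t^{-i}$ in both $({\pi^*}^i,\pi_t^{-i})$ and $\pi_t$, so that contribution vanishes. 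Re-indexing the surviving sum over histories where $\tau(h)=i$ by writing $h\in H_i$ and recalling $\rho^{{\pi^*}^i}(h)$ depends only on player $i$'s moves, I obtain exactly $-\eta\sum_{i=1}^N \sum_{h\in H_i} \rho^{{\pi^*}^i}(h)\rho^{\pi_t^{-i}}(h)\,\mathrm{KL}(\pi^*(\cdot|x(h)),\pi_t(\cdot|x(h)))$, which combined with the two nonpositive terms above gives the stated inequality.

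The main obstacle I anticipate is bookkeeping rather than anything conceptual: carefully tracking which policy ($\pi_t^i$, ${\pi^*}^i$, or $\pi_t^{-i}$) appears inside each logarithm after the substitution, and confirming that the expectation in the error term of Lemma \ref{lemma_FoReL} is taken under $({\pi^*}^i,\pi_t^{-i})(\cdot|x(h))$ so that at histories $h\in H_i$ the action is drawn from $\pi^*(\cdot|x(h))$ — this is precisely what turns the log-ratio difference into a genuine KL divergence with the correct argument order $\mathrm{KL}(\pi^*,\pi_t)$ rather than $\mathrm{KL}(\pi_t,\pi^*)$. A secondary point requiring care is that the reach-probability weights $\rho^{\pi_t^{-i}}(h)\rho^{{\pi^*}^i}(h)$ from Lemma \ref{lemma_FoReL} must be matched against the $\frac{1}{\rho^{\pi^{-i}}(h)}$ that is \emph{absent} here (unlike in the monotone-game transformation), so no cancellation of reach probabilities occurs and the weight $\rho^{{\pi^*}^i}(h)\rho^{\pi_t^{-i}}(h)$ is exactly what survives — explaining why this version retains the extra $\rho^{\pi_t^{-i}}(h)$ factor compared to Corollary \ref{corollary_monotone}.
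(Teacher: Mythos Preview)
Your proposal is correct and follows essentially the same route as the paper: apply Lemma~\ref{lemma_FoReL}, use that the transformed game remains zero-sum so $\sum_i\Omega^i(\pi_t,\pi^*)=0$ (the paper gets exact equality rather than just $\leq 0$, but this is immaterial for the stated inequality), and then split the error term over $h\in H_i$ versus $h\in H^{-i}$, noting that $r^i_{{\pi^*}^i,\pi_t^{-i}}(h,a)-r^i_{\pi_t}(h,a)=0$ on $H^{-i}$ while on $H_i$ it collapses to $-\eta\log\frac{\pi^*(a|x(h))}{\pi_t(a|x(h))}$ and the expectation under $\pi^*(\cdot|x(h))$ yields the KL. Your careful remarks about the expectation law and the absent $1/\rho^{\pi^{-i}}(h)$ factor are exactly the right sanity checks.
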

And here, if the regularizer $\phi_i$ is the entropy, we can show that the $\Xi(\pi^*, \pi_t) = \sum \limits_{i=1}^N \sum \limits_{h \in H_i} \rho^{{\pi^*}^i}(h) KL(\pi^*(.|x(h)), \pi_t(.|x(h)))$ is a strict Lyapunov function:
\begin{theorem} If the regularizer $\phi_i$ is the entropy:
$$\frac{d}{dt}\Xi(\pi^*, \pi_t) \leq  - \eta \zeta \Xi(\pi^*, \pi_t) $$
with $\zeta = \min \limits_{x \in \mathcal{X}}\; \min \limits_{\pi = \argmax_p\Lambda(p,y) \textrm{ and }J(y)\leq J(y_0)} \;\sum \limits_{h \in x} \rho^{{\pi}^{-i}}(h)$

This imply that: $\Xi(\pi^*, \pi_t) \leq \Xi(\pi^*, \pi_0) \exp(- \zeta  \eta t)$
\end{theorem}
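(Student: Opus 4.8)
The plan is to follow the template of the preceding (monotone) strong-Lyapunov theorem, the only new element being that the extra factor $\rho^{\pi_t^{-i}}(h)$ occurring in Corollary~\ref{corollary_zero_sum} is what pushes the constant $\zeta$ into the rate. Throughout I read $\Xi(\pi^*,\pi_t)$ as the information-state sum $\sum_{i}\sum_{x\in\mathcal{X}_i}\rho^{\pi^{*i}}(x)\,KL(\pi^*(.|x),\pi_t(.|x))$, which is exactly what the Fenchel calculus produces. First I would record, as in the preceding theorem and depending only on $\phi_i$ being the entropy (not on the reward), that $J(y_t)=\Xi(\pi^*,\pi_t)+C$ with $C$ independent of $t$. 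For the entropy regularizer the conjugate is $\phi_i^*(y)=\log\sum_a e^{y(a)}$ and its gradient — the maximizing argument — is the softmax, so $\pi_t^i(a|x)=e^{y_t^i(x,a)}/\sum_{a'}e^{y_t^i(x,a')}$; plugging $y_t^i(x,a)=\log\pi_t^i(a|x)+\log\!\big(\sum_{a'}e^{y_t^i(x,a')}\big)$ into $\phi_i^*(y_t^i(x,.))-\langle\pi^*(.|x),y_t^i(x,.)\rangle$ the log-partition terms cancel and what remains is $-\sum_a\pi^*(a|x)\log\pi_t^i(a|x)=KL(\pi^*(.|x),\pi_t^i(.|x))-\sum_a\pi^*(a|x)\log\pi^*(a|x)$; summing against $\rho^{\pi^{*i}}(x)$ over $i$ and $x$ gives the identity, with $C=-\sum_i\sum_{x\in\mathcal{X}_i}\rho^{\pi^{*i}}(x)\sum_a\pi^*(a|x)\log\pi^*(a|x)$. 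Hence $\frac{d}{dt}\Xi(\pi^*,\pi_t)=\frac{d}{dt}J(y_t)$, and both derivatives exist because $y_t$ is the integral of a bounded integrand, hence finite for every $t$, so $\pi_t$ stays in the relative interior of the simplices where $\Xi$ is smooth.

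Next I would confine the flow and then contract. Since the right-hand side of Corollary~\ref{corollary_zero_sum} is nonpositive, $t\mapsto J(y_t)$ is non-increasing and $J(y_t)\le J(y_0)$ for all $t\ge 0$; this means $\pi_t=\argmax_p\Lambda(p,y_t)$ (witnessed by $y=y_t$) lies in the set over which the minimum defining $\zeta$ is taken, so $\sum_{h\in x}\rho^{\pi_t^{-i}}(h)=\rho^{\pi_t^{-i}}(x)\ge\zeta$ for every information state $x$. Grouping the histories in the bound of Corollary~\ref{corollary_zero_sum} by information state — perfect recall makes $\rho^{\pi^{*i}}(h)$ and $KL(\pi^*(.|x(h)),\pi_t(.|x(h)))$ constant over $h\in x$ — turns it into $-\eta\sum_i\sum_{x\in\mathcal{X}_i}\rho^{\pi^{*i}}(x)\,\rho^{\pi_t^{-i}}(x)\,KL(\pi^*(.|x),\pi_t(.|x))$; applying $\rho^{\pi_t^{-i}}(x)\ge\zeta$ termwise together with Step~1 yields
\begin{align*}
\frac{d}{dt}\Xi(\pi^*,\pi_t)=\frac{d}{dt}J(y_t)&\le -\eta\sum_{i=1}^N\sum_{x\in\mathcal{X}_i}\rho^{\pi^{*i}}(x)\,\rho^{\pi_t^{-i}}(x)\,KL(\pi^*(.|x),\pi_t(.|x))\\
&\le -\eta\zeta\,\Xi(\pi^*,\pi_t).
\end{align*}
Finally I would integrate: since $\frac{d}{dt}\big(e^{\eta\zeta t}\Xi(\pi^*,\pi_t)\big)=e^{\eta\zeta t}\big(\frac{d}{dt}\Xi(\pi^*,\pi_t)+\eta\zeta\,\Xi(\pi^*,\pi_t)\big)\le 0$, the map $t\mapsto e^{\eta\zeta t}\Xi(\pi^*,\pi_t)$ is non-increasing, which gives $\Xi(\pi^*,\pi_t)\le\Xi(\pi^*,\pi_0)e^{-\eta\zeta t}$, the assertion.

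The relation $J=\Xi+C$ and the Grönwall integration are routine and mirror the preceding theorem; the real work is in the confinement step, specifically in guaranteeing that $\zeta>0$ (and that the minimum in its definition is attained) so that $\Xi$ is a genuinely strict Lyapunov function rather than the bound being vacuous. I expect this to need $\pi^*$ interior: then $\Xi(\pi^*,\pi_0)<\infty$ forces, for every $\pi$ in the sublevel set $\{J\le J(y_0)\}$, a uniform lower bound on each $KL(\pi^*(.|x),\pi(.|x))$, hence on each probability $\pi(a|x)$, and therefore $\rho^{\pi^{-i}}(x)$ is bounded below by a positive constant. A secondary, purely notational point is reconciling the history-indexed form of $\Xi$ written in the statement with the information-state form used above; since perfect recall makes the relevant reach probabilities and per-state $KL$ divergences constant on each information state, this is only an innocuous change of constant and does not affect the exponential rate.
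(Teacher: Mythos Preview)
Your proof is correct and follows essentially the same approach as the paper: the paper's one-line proof simply says to combine Corollary~\ref{corollary_zero_sum} with the appendix identity $J(y)+C=\Xi(\pi^*,\pi)$ (for entropy regularization), and you have spelled out exactly that combination---the Fenchel/softmax computation giving $\frac{d}{dt}\Xi=\frac{d}{dt}J$, the confinement $J(y_t)\le J(y_0)$ to justify $\rho^{\pi_t^{-i}}(x)\ge\zeta$, the regrouping of Corollary~\ref{corollary_zero_sum} by information state, and the Grönwall step. Your remarks on the history- versus information-state indexing of $\Xi$ and on the positivity of $\zeta$ are accurate side observations that the paper leaves implicit.
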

\begin{proof}
The proof follows by combining corollary~\ref{corollary_zero_sum} and the result in appendix~\ref{proof_strong_lyapunov_function}.
\end{proof}

In summary, we saw in this section that exponential convergence rates can be achieved in continuous time in imperfect information games using reward transformation.
\begin{remark}
\label{uniqueness_rq}
Corrolary \ref{corollary_monotone} and \ref{corollary_zero_sum} are valid for all Nash of the transformed game. This means that for all $\eta$, the Nash eq. of the transformed game is unique. This uniqueness property is necessary to define the process of the next section.
\end{remark}

\section{Convergence to an Exact Equilibrium}
\label{convergence_to_an_equilibrium}
The previous section introduced a reward transformation (by adding a policy dependent term $r^i_\pi(h, a) = r^i(h, a) - \frac{\eta \mathbf{1}_{i=\tau(h)}}{\rho^{\pi^{-i}}(h)} \log \frac{\pi(a|x(h))}{\mu(a|x(h))}$) to ensure exponential convergence in games. However this method does not ensure convergence to the equilibrium of the game defined on $r^i(h, a)$. In this section, we study the sequence of policies starting from $\pi_0$, being the uniform policy, and $\pi_k$ the solution of the game with the reward transformation $r^i_\pi(h, a) = r^i(h, a) - \frac{\eta \mathbf{1}_{i=\tau(h)}}{\rho^{\pi^{-i}}(h)} \log \frac{\pi(a|x(h))}{\pi_{k-1}(a|x(h))}$. Intuitively, this approach entails that the policy $\pi_k$ will be searched close to the previous iterate $\pi_{k-1}$ (we write $\pi_k = F(\pi_{k-1})$).
\begin{lemma}
\label{iteration_entropy}

Then for any Nash equilibrium of the game $\pi^*$, we have the following identity for the sequence of policy $\pi_k$:
{\small$$\Xi(\pi^*, \pi_k) - \Xi(\pi^*, \pi_{k-1}) = - \Xi(\pi_k, \pi_{k-1}) + \frac{1}{\eta} \sum \limits_{i=1}^N (m_k^i + \delta_k^i + \kappa_k^i)$$
}
Where:
\begin{align*}
    \Xi(\mu,\pi) = \sum \limits_{i=1}^N \sum \limits_{h \in H_i} \rho^{{\mu}^i}(h) KL(\mu(.|x(h)), \pi(.|x(h)))
\end{align*}
Where:
\begin{align*}
    \kappa^i_k &= \sum \limits_{x \in \mathcal{X}^i}\rho^{\pi^{*i}}(x) \rho^{\pi_k^{-i}}(x) \times\\
    &\qquad \sum \limits_{a \in A}\left[\pi^{*i}(a|x(h)) - \pi_k(a|x(h))\right] \;{}^k Q^i_{\pi_k}(x,a)\leq 0
\end{align*}
Where:
$\delta^i_k = V^i_{\pi^{i}_k, \pi^{*-i}}(h_{\textrm{init}}) - V^i_{\pi^*}(h_{\textrm{init}})\leq 0$

And where:
{
\begin{align*}
    m^i_k &= V^i_{\pi_k}(h_{\textrm{init}}) - V^i_{\pi^{*i}, \pi^{-i}_k}(h_{\textrm{init}}) - V^i_{\pi^{i}_k, \pi^{*-i}}(h_{\textrm{init}})\\
    &\quad + V^i_{\pi^*}(h_{\textrm{init}})
\end{align*}
}
And where $\sum \limits_{i=1}^N m^i_k \leq 0$ if the game is monotone (proof in appendix~\ref{proof_iteration_entropy}).
\end{lemma}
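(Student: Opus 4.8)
The plan is to establish the equality first and then read off the three sign statements, which are the easy part. Unwinding definitions, $m_k^i = \Omega^i(\pi_k,\pi^*)$, so $\sum_i m_k^i \le 0$ is precisely the monotonicity hypothesis; writing $V^i_{\pi^*}(h_{\textrm{init}}) = V^i_{\pi^{*i},\pi^{*-i}}(h_{\textrm{init}})$, the quantity $\delta_k^i = V^i_{\pi_k^i,\pi^{*-i}}(h_{\textrm{init}}) - V^i_{\pi^{*i},\pi^{*-i}}(h_{\textrm{init}}) \le 0$ is the defining inequality of the Nash equilibrium $\pi^*$ applied to the unilateral deviation $\pi^i_k$ of player $i$; and $\kappa_k^i \le 0$ because, once ${}^kQ^i_{\pi_k}(x,\cdot)$ is identified with the continuation $Q$-values of player $i$ under $\pi_k$ in the level-$k$ transformed game, $\kappa_k^i$ is a performance-difference-lemma expression for the best-response gap of player $i$ at $\pi_k$, which is non-positive because $\pi_k$ is, by Remark~\ref{uniqueness_rq} and Corollary~\ref{corollary_monotone}, the unique Nash equilibrium of that game. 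So everything reduces to the identity.

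For the identity I would use two ingredients. The first is the three-point (generalized Pythagoras) identity for the Kullback--Leibler divergence, valid at every information state $x$:
\begin{align*}
&KL(\pi^*(.|x),\pi_k(.|x)) - KL(\pi^*(.|x),\pi_{k-1}(.|x))\\
&\quad = -\,KL(\pi_k(.|x),\pi_{k-1}(.|x)) - \big\langle \pi^*(.|x) - \pi_k(.|x),\; \log\pi_k(.|x) - \log\pi_{k-1}(.|x) \big\rangle .
\end{align*}
The second is the entropy stationarity of $\pi_k$: because $\phi_i$ is the entropy and $\pi_k$ is the fixed point of the FoReL dynamics run on the level-$k$ transformed reward, at each $x\in\mathcal{X}_i$ the policy $\pi_k(.|x)$ obeys a logit/Gibbs relation --- there is an $a$-independent constant $c^i_x$ with $\eta\log\frac{\pi_k(a|x)}{\pi_{k-1}(a|x)} = {}^kQ^i_{\pi_k}(x,a) - c^i_x$ for all $a$. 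Since $\pi^*(.|x) - \pi_k(.|x)$ sums to zero over $A$, the constant $c^i_x$ drops and the inner product in the displayed identity equals $\tfrac1\eta\big\langle \pi^*(.|x) - \pi_k(.|x),\; {}^kQ^i_{\pi_k}(x,.)\big\rangle$.

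I would then multiply the displayed identity by the reach weight $\rho^{\pi^{*i}}(h)$, sum over $h\in H_i$ and over $i$, and use perfect recall to pass between $\rho^{\pi^{*i}}(h)$ and $\rho^{\pi^{*i}}(x(h))$. The left-hand side becomes exactly $\Xi(\pi^*,\pi_k) - \Xi(\pi^*,\pi_{k-1})$; the $KL(\pi_k,\pi_{k-1})$ term becomes $-\Xi(\pi_k,\pi_{k-1})$ up to the change of reach weight from $\rho^{\pi^{*i}}(h)$ to $\rho^{\pi_k^i}(h)$; and the remaining term is $-\tfrac1\eta\sum_i\sum_{h\in H_i}\rho^{\pi^{*i}}(h)\langle \pi^*(.|x(h)) - \pi_k(.|x(h)),\; {}^kQ^i_{\pi_k}(x(h),.)\rangle$. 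What is left is to show that this term, together with the weight-change correction, reassembles into $\tfrac1\eta\sum_i(m_k^i + \delta_k^i + \kappa_k^i)$. For this I would expand $\tilde V^i_{\pi^{*i},\pi_k^{-i}}(h_{\textrm{init}})$, $\tilde V^i_{\pi_k^i,\pi^{*-i}}(h_{\textrm{init}})$, $\tilde V^i_{\pi_k}(h_{\textrm{init}})$ and $\tilde V^i_{\pi^*}(h_{\textrm{init}})$ through $V^i_\pi(h_{\textrm{init}}) = \sum_h \rho^\pi(h)\sum_a \pi(a|x(h))\,r^i_\pi(h,a)$, splitting the level-$k$ transformed reward $r^i_\pi(h,a) = r^i(h,a) - \frac{\eta\mathbf{1}_{i=\tau(h)}}{\rho^{\pi^{-i}}(h)}\log\frac{\pi(a|x(h))}{\pi_{k-1}(a|x(h))}$ into its original part and its log-penalty part: the original parts recombine into the value combination $m_k^i + \delta_k^i$, and the penalty parts --- the $\frac1{\rho^{\pi^{-i}}(h)}$ cancelling exactly against the opponent reach factor in $\rho^\pi(h) = \rho^{\pi^i}(h)\rho^{\pi^{-i}}(h)$ --- recombine into $\Xi$-type sums and, via a performance-difference computation in the transformed game, into $\kappa_k^i$ together with the weight-change correction. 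Matching both sides term by term then yields the stated equality.

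The main obstacle is exactly this last reorganisation. Because the transformed reward is policy-dependent, the performance-difference manipulation accrues extra reward-gap contributions beyond the usual ones, and since the claim is an exact equality none of them may be discarded or absorbed into an inequality; one must keep careful track of which reach --- $\rho^{\pi^{*i}}$ versus $\rho^{\pi_k^i}$ on the player side, $\rho^{\pi^{*-i}}$ versus $\rho^{\pi_k^{-i}}$ on the opponent side --- multiplies each summand at each $h\in H_i$. The choice of $\frac1{\rho^{\pi^{-i}}(h)}$ (rather than a constant) in the reward transformation is precisely what makes these reach factors telescope after the FoReL update reweights by $\rho^{\pi^{-i}}(h)$, which is why the argument is tied to this particular transformation; the computation is the discrete-time counterpart of the proof of Lemma~\ref{lemma_FoReL}, and it is the only genuinely laborious step.
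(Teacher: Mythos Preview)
Your proposal has a genuine gap at the ``entropy stationarity'' step. The relation you invoke,
\[
\eta\log\frac{\pi_k(a|x)}{\pi_{k-1}(a|x)} \;=\; {}^kQ^i_{\pi_k}(x,a) - c^i_x,
\]
does not hold. Already in the normal-form case: the first-order condition for player $i$'s KL-regularized best response at the Nash $\pi_k$ reads $\eta\log\frac{\pi_k^i(a)}{\pi_{k-1}^i(a)} = Q^i_{\pi_k}(a) - c$ with the \emph{untransformed} $Q$; substituting back gives ${}^kQ^i_{\pi_k}(a) = Q^i_{\pi_k}(a) - \eta\log\frac{\pi_k^i(a)}{\pi_{k-1}^i(a)} = c$, a constant in $a$. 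Your relation would then force $\log\frac{\pi_k}{\pi_{k-1}}$ itself to be constant in $a$, i.e.\ $\pi_k=\pi_{k-1}$, which is false unless $\pi_{k-1}$ is already a fixed point of $F$. The same obstruction persists sequentially. Hence the inner product coming out of the three-point KL identity cannot be rewritten as $\tfrac1\eta\langle\pi^*-\pi_k,\,{}^kQ^i_{\pi_k}(x,\cdot)\rangle$, and the ``weight-change correction'' you flag as the main obstacle would have to absorb a different residual than the one your outline anticipates. (A related imprecision: $-\kappa^i_k$ is not literally the best-response gap of the transformed game, since the reward is policy-dependent; it differs from that gap by a KL term, though the sign conclusion survives.)

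The paper's argument avoids any logit relation and does not assume $\phi_i$ is the entropy. It starts from the tautology
\[
\underbrace{\sum_{h}\rho^{\pi_k}(h)\sum_a\pi_k(a|x(h))\,{}^kr^i_{\pi_k}(h,a)}_{(2)}
\;=\;
\underbrace{\sum_{h}\rho^{{}^i\bar\pi_k}(h)\sum_a{}^i\bar\pi_k(a|x(h))\,{}^kr^i_{\pi_k}(h,a)}_{(3)}
\;+\;(1),
\]
with ${}^i\bar\pi_k=(\pi^{*i},\pi_k^{-i})$. A telescoping performance-difference computation on $(1)$ (the discrete analogue of Lemma~\ref{lemma_FoReL}) gives $(1)=-\kappa^i_k$ exactly. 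Then one splits ${}^kr^i_{\pi_k}=r^i-\tfrac{\eta}{\rho^{\pi_k^{-i}}(h)}\log\tfrac{\pi_k}{\pi_{k-1}}$ inside $(2)$ and $(3)$: the $\tfrac{1}{\rho^{\pi_k^{-i}}(h)}$ cancels against the opponent reach, so $(2)$ produces $V^i_{\pi_k}(h_{\textrm{init}})$ and the $\rho^{\pi_k^i}$-weighted KL sum (i.e.\ the player-$i$ piece of $\Xi(\pi_k,\pi_{k-1})$), while $(3)$ produces $V^i_{\pi^{*i},\pi_k^{-i}}(h_{\textrm{init}})$ and the $\rho^{\pi^{*i}}$-weighted KL difference (i.e.\ the player-$i$ piece of $\Xi(\pi^*,\pi_k)-\Xi(\pi^*,\pi_{k-1})$). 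Each $\Xi$-term thus appears directly with its correct reach weight, so no weight-change correction is ever needed; rearranging and summing over $i$ gives the identity.
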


\begin{theorem}
In a monotone game with all Nash equilibrium being interior, the sequence of policy $\{\pi_{k}\}_{k\geq 0}$ (or $\{F^k(\pi_{0})\}_{k\geq 0}$) converges to a Nash equilibrium of the game (proof in appendix~\ref{proof_convergence_to_a_Nash}).
\end{theorem}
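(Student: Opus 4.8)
The plan is to read Lemma~\ref{iteration_entropy} as a discrete Lyapunov descent inequality and then run a compactness-plus-Lyapunov argument, exploiting the extra leverage that the inequality holds simultaneously for \emph{every} Nash equilibrium of the game.

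First, fix any Nash equilibrium $\pi^*$ of the original game. Since the game is monotone, $\sum_{i=1}^N m^i_k \le 0$, and by Lemma~\ref{iteration_entropy} also $\delta^i_k \le 0$ and $\kappa^i_k \le 0$ for every $i$; substituting these into the identity of Lemma~\ref{iteration_entropy} yields
$$\Xi(\pi^*, \pi_k) \le \Xi(\pi^*, \pi_{k-1}) - \Xi(\pi_k, \pi_{k-1}) \le \Xi(\pi^*, \pi_{k-1}).$$
Hence $k \mapsto \Xi(\pi^*,\pi_k)$ is non-increasing and bounded below by $0$, so it converges; telescoping the full identity and using that every correction term is non-positive shows that $\Xi(\pi_k,\pi_{k-1})$, as well as each $-m^i_k$, $-\delta^i_k$, $-\kappa^i_k$, is summable in $k$, and in particular $\Xi(\pi_k,\pi_{k-1}) \to 0$.

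Second, extract a limit point: the iterates $\pi_k$ all lie in the compact set $\prod_{x\in\mathcal{X}} \Delta A$, so along a subsequence $\pi_{k_j} \to \bar\pi$. The iterates are equilibria of the regularized games, which (by the interiority hypothesis together with the barrier behaviour of the added relative-entropy term) keeps them in the interior, so $\bar\pi$ is interior and the reach weights $\rho^{\pi_{k_j}^{i}}(h)$ are bounded away from $0$ along the subsequence; combined with $\Xi(\pi_{k_j},\pi_{k_j-1}) \to 0$ and Pinsker's inequality this forces $\pi_{k_j-1}\to\bar\pi$ as well. Since the reward of the transformed game depends continuously on the reference policy $\mu$ and that game has a \emph{unique} Nash equilibrium (Remark~\ref{uniqueness_rq}), the solution operator $F$ is continuous (a closed-graph / Berge maximum-theorem argument), so $\bar\pi = \lim_j F(\pi_{k_j-1}) = F(\bar\pi)$. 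A fixed point of $F$ is precisely a Nash equilibrium of the original game, because when $\mu=\pi$ the added term $-\frac{\eta\mathbf{1}_{i=\tau(h)}}{\rho^{\pi^{-i}}(h)}\log\frac{\pi(a|x(h))}{\mu(a|x(h))}$ vanishes identically; hence $\bar\pi$ is a Nash equilibrium, and in particular interior. Finally, to upgrade subsequential convergence to convergence of the whole sequence, apply the descent inequality of the first step with $\pi^*=\bar\pi$: $k\mapsto\Xi(\bar\pi,\pi_k)$ is non-increasing, while along the subsequence $\pi_{k_j}\to\bar\pi$ interior gives $\Xi(\bar\pi,\pi_{k_j})\to\Xi(\bar\pi,\bar\pi)=0$ by continuity of $\Xi(\bar\pi,\cdot)$ on the interior; a non-increasing sequence with a subsequence tending to $0$ tends to $0$, so $\Xi(\bar\pi,\pi_k)\to0$, and since $\rho^{\bar\pi^{i}}(h)>0$ for all $h$ each $KL(\bar\pi(\cdot|x(h)),\pi_k(\cdot|x(h)))\to0$, whence $\pi_k\to\bar\pi$ at every information state.

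The main obstacle is the continuity of $F$ — equivalently, that the equilibrium of the transformed game varies continuously with the reference policy — together with the need to keep the entire sequence of iterates in the interior so that the relative-entropy reward terms and the KL-divergences remain finite and continuous; both rely essentially on the interiority assumption and on the uniqueness statement of Remark~\ref{uniqueness_rq}, whereas the remainder is routine compactness-and-Lyapunov bookkeeping.
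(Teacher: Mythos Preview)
Your overall architecture---extract a subsequential limit $\bar\pi$, show $F(\bar\pi)=\bar\pi$, then use monotonicity of $\Xi(\bar\pi,\cdot)$ to upgrade to full convergence---is sound and in fact slightly cleaner than the paper's route via the Lyapunov function $\mu\mapsto\min_{\pi^*\in\Pi^*}\Xi(\pi^*,\mu)$. Both approaches hinge on the same two technical facts: continuity of $F$, and the implication $F(\bar\pi)=\bar\pi\Rightarrow\bar\pi\in\Pi^*$. You correctly flag the first as the main obstacle, but you dismiss the second in one line, and that line is wrong.

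The gap is the sentence ``A fixed point of $F$ is precisely a Nash equilibrium of the original game, because when $\mu=\pi$ the added term \dots\ vanishes identically.'' The transformed reward $r^i_{\mu,\pi}(h,a)$ is policy-dependent: it is evaluated at the \emph{deviating} policy, not at the equilibrium. The value decomposition (Appendix~\ref{reward_transform_monotonicity}) gives, for any deviation $\pi'^i$,
\[
V^i_{(\pi'^i,\bar\pi^{-i})}(\text{transformed, ref }\bar\pi)\;=\;V^i_{(\pi'^i,\bar\pi^{-i})}(\text{original})\;-\;\eta\sum_{h\in H_i}\rho^{\pi'^i}(h)\,KL\bigl(\pi'^i(\cdot|x(h)),\bar\pi^i(\cdot|x(h))\bigr),
\]
while $V^i_{\bar\pi}(\text{transformed})=V^i_{\bar\pi}(\text{original})$. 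Hence the Nash condition for the transformed game only yields
\[
V^i_{(\pi'^i,\bar\pi^{-i})}(\text{original})\;\le\;V^i_{\bar\pi}(\text{original})\;+\;\eta\sum_{h\in H_i}\rho^{\pi'^i}(h)\,KL\bigl(\pi'^i(\cdot|x(h)),\bar\pi^i(\cdot|x(h))\bigr),
\]
which is strictly weaker than $\bar\pi$ being a Nash of the original game. The paper closes this gap with a genuine variational argument (Theorem~\ref{theorem_nash_div}): if $\bar\pi$ were not a Nash of the original game, there would exist an information state $\tilde x$ and a direction $\hat\pi$ with $\sum_a(\bar\pi-\hat\pi)(a|\tilde x)\,Q^i_{\bar\pi}(\tilde x,a)<0$; perturbing $\bar\pi$ by $\alpha$ in that direction yields a first-order (in $\alpha$) gain in the original value, whereas the KL penalty is second-order by Pinsker's inequality, so for small $\alpha$ the deviation is profitable in the \emph{transformed} game as well, contradicting $\bar\pi=F(\bar\pi)$. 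This step is not obvious and cannot be replaced by the observation that the penalty vanishes at $\pi=\mu$.

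A secondary point: your claim that $\bar\pi$ is interior (and that the reach weights $\rho^{\pi_{k_j}^i}(h)$ are bounded away from zero) does not follow from ``the iterates are equilibria of regularized games.'' Interior points can accumulate on the boundary. The correct justification is that $\Xi(\pi^*,\pi_k)\le\Xi(\pi^*,\pi_0)$ for an interior $\pi^*$ bounds every $KL(\pi^*(\cdot|x),\pi_k(\cdot|x))$ uniformly, and bounded $KL(p,q)$ with $p$ interior forces $q$ uniformly away from the boundary; you should state this explicitly rather than invoking an unspecified barrier property.
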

\begin{remark}
We were only able to prove this result for interior Nash but we conjecture that it is still true for non interior Nash equilibrium.
\end{remark}

\section{Empirical evaluation}
\begin{figure}
  \vspace{-8pt}
  \begin{center}
    \includegraphics[width=0.48\textwidth]{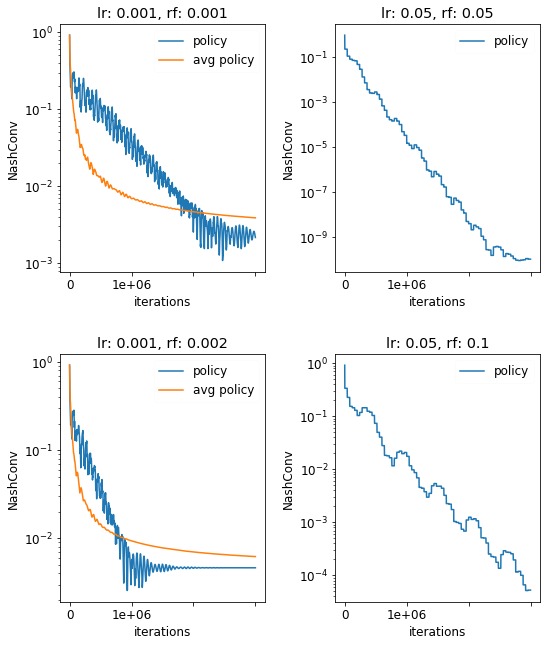}
  \end{center}
  \vspace{-10pt}
  \caption{The left plots illustrate the monotone reward transform whilst the right plot illustrate the direct convergence method shown in section~\ref{convergence_to_an_equilibrium} where we change the reward every 40000 steps (rf is the value of the parameter $\eta$ and lr is the time discretization).}
  \label{Tabular_FoReL}
  \vspace{-1em}
\end{figure}
It has already been empirically noted that regularization helps convergence in games~\cite{omidshafiei2019neural}. Earlier work~\cite{srinivasan2018actor} also provides experiments where the current policy converges in Leduc Poker, whilst the paper only proves convergence analysis of the average policy. Our work sheds a new light on those results as the convergence may have been the result of high regularization (the entropy cost added in~\cite{srinivasan2018actor} appendix G was $0.1$). The experiments will show how reward transform can be used to improve the state of the art of reinforcement Learning in Imperfect Information Games. To keep our implementation as close as possible to FoReL, we use the NeuRD policy update~\cite{omidshafiei2019neural}, a retrace update to estimate the $Q$-function. In order to keep our estimate of the return unbiased, we use that learned $Q$-function as a control variate as in~\cite{schmid2018variance}. The details of the algorithm are in appendix~\ref{empirical_setup}. We present results on four games: Kuhn $\&$ Leduc Poker, Goofspiel and Liars Dice, which have respectively 12, 936, 162 and 24,576 information states. We evaluate all our policies using the NashConv metric~\cite{Lanctot17PSRO} defined as $NashConv(\pi) = \sum_{i=1}^N \max_{\pi^{'i}}V^i_{\pi^{'i},\pi^{-i}}(h_{\textrm{init}})-V^i_{\pi}(h_{\textrm{init}})$.

In this section, we highlight two results with function approximation and illustrate the theory with tabular experiments on Kuhn Poker (figure~\ref{Tabular_FoReL}). A more complete empirical evaluation and the precise description of the setting is available in appendix~\ref{app_experiments}.

\subsection{Experiment with a Decaying Regularization}
We found that decaying the regularization $\eta$ exponentially from an initial value $\eta_{\max} = 1$ to a target value (we looked at values $\{1.0, 0.5, 0.2, 0.05, 0.01, 0.0\}$) is an effective empirical method. In figure~\ref{NeuRD} (top plot), we represent the NashConv as a function of the number of steps. We achieve our best performance for $\eta=0.05$ with a NashConv of $0.10$. This outperforms the results of NFSP~\cite{Heinrich16}, which has a best result of 0.12 in NashConv (0.06 of exploitability reported in the paper) and the state of the art algorithms implemented in Openspiel, which are no better than $0.2$ in NashConv. However, for low choices of $\eta$ the algorithm might diverge.
\subsection{Iteration over the Regularization}
\label{exp_reg_iterative}
As we have seen in section~\ref{convergence_to_an_equilibrium}, the convergence to an exact equilibrium can be achieved by iteratively adapting the the reward. In the experiment  (Fig.~\ref{NeuRD} bottom plot), we change the reward periodically every $N$-steps between steps $[kN, kN + \frac{N}{2}]$ we linearly interpolate between  $r^i_\pi(h, a) = r^i(h, a) - \frac{\eta \mathbf{1}_{i=\tau(h)}}{\rho^{\pi^{-i}}(h)} \log \frac{\pi(a|x(h))}{\pi_{kN}(a|x(h))}$ and $r^i_\pi(h, a) = r^i(h, a) - \frac{\eta \mathbf{1}_{i=\tau(h)}}{\rho^{\pi^{-i}}(h)} \log \frac{\pi(a|x(h))}{\pi_{(k-1)N}(a|x(h))}$ and in interval $[kN + \frac{N}{2}, (k+1)N]$ we use the transformed reward $r^i_\pi(h, a) = r^i(h, a) - \frac{\eta \mathbf{1}_{i=\tau(h)}}{\rho^{\pi^{-i}}(h)} \log \frac{\pi(a|x(h))}{\pi_{kN}(a|x(h))}$. As shown in Fig.~\ref{NeuRD} (bottom plot), this technique allows convergence for very high $\eta$. This is quite an advantage as the method will be more robust to the choice of that hyper-parameter.

\begin{figure}
  \vspace{-8pt}
  \begin{center}
    \includegraphics[width=0.5\textwidth]{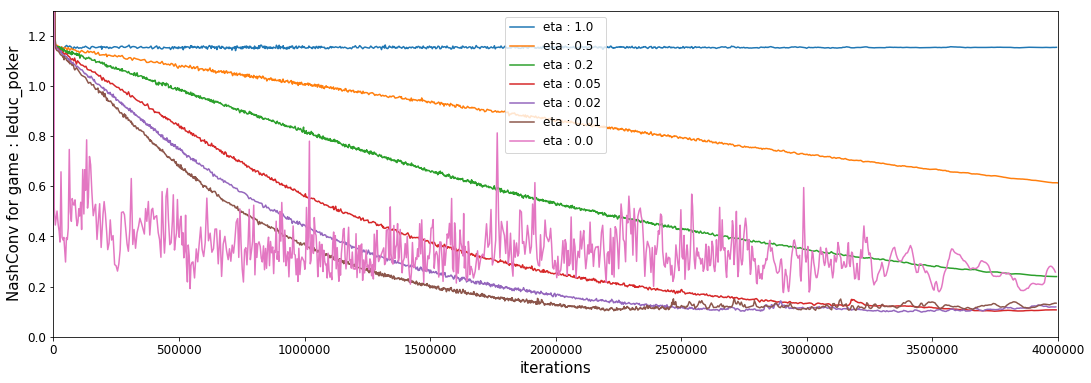}
    \includegraphics[width=0.5\textwidth]{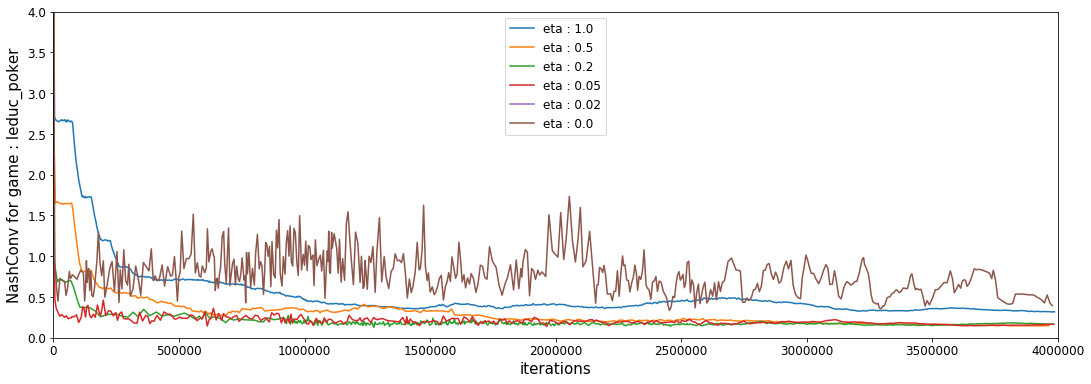}
  \end{center}
  \vspace{-10pt}
  \caption{The precise setup used is described in appendix~\ref{empirical_setup}. The top plot shows results on Leduc poker improving over the NFSP results using a decay of the regularization $\eta$. The bottom plot uses a fixed regularization and adapts the reward every $7.5e4$ steps as described in section~\ref{exp_reg_iterative}.}
  \label{NeuRD}
\end{figure}
\section{Conclusion}
We generalize the Poincar\'e recurrence result for  FoReL from 2-player normal-form zero-sum games to sequential imperfect information games with a monotonicity  condition. Although this is a generalization of a negative convergence result, we show that several reward transformations can guarantee convergence to a slightly modified equilibrium. We also show how to recover the original equilibrium of the game (when it is interior). Finally, based on these techniques we improve the state-of-the-art in model-free deep reinforcement learning in imperfect information games.

Since this work only focuses on FoReL, we aim to analyze the behavior of other dynamics in the sequential case from a dynamical systems perspective in future work. Fictitious play or softmax $Q$-learning have been theoretically considered in normal form games and their analysis with Lyapunov methods remains to be done in the IIG case. Furthermore, the role of regularization for convergence in games needs to be studied more systematically in other settings. Ideas like regularization could also be studied in for example Generative Adversarial Networks.

\bibliography{bib}
\bibliographystyle{icml2020}
\appendix
\newpage
\onecolumn
\section{Proof of Lemma~\ref{lemma_FoReL}}
\label{proof_lemma_FoReL}
\begin{align*}
    \frac{d}{dt}J(y) &= \sum \limits_{i=1}^N \sum \limits_{x \in \mathcal{X}_i} \rho^{{\pi^*}^i}(x) \rho^{\pi_t^{-i}}(x) \langle\pi_t(.|x) - \pi^*(.|x), Q^i_{\pi_t}(x, .) \rangle\\
    &= \sum \limits_{i=1}^N \sum \limits_{x \in \mathcal{X}_i} \rho^{{\pi^*}^i}(x) \sum \limits_{h \in x} \rho^{\pi_t^{-i}}(h) \langle\pi_t(.|x(h)) - \pi^*(.|x(h)), Q^i_{\pi_t}(h, .) \rangle\\
    &= \sum \limits_{i=1}^N \sum \limits_{x \in \mathcal{X}_i} \sum \limits_{h \in x} \rho^{\pi_t^{-i}}(h) \rho^{{\pi^*}^i}(h) \langle\pi_t(.|x(h)) - \pi^*(.|x(h)), Q^i_{\pi_t}(h, .) \rangle\\
    &= \sum \limits_{i=1}^N \sum \limits_{h \in H_i} \rho^{\pi_t^{-i}}(h) \rho^{{\pi^*}^i}(h) \langle\pi_t(.|x(h)) - \pi^*(.|x(h)), Q^i_{\pi_t}(h, .) \rangle
\end{align*}
Let's write ${}^i\bar\pi_t = (\pi^{*i}, \pi^{-i}_t)$ and let's notice that for all $h \in H^{-i}$, we have that ${}^i\bar\pi_t (.|x(h)) = \pi_t (.|x(h))$ and thus for all $h \in H^{-i}$, $\langle\pi_t(.|x(h)) - {}^i\bar\pi_t(.|x(h)), Q^i_{\pi_t}(h, .) \rangle = 0$
\begin{align*}
    \frac{d}{dt}J(y)&=\sum \limits_{i=1}^N \sum \limits_{h \in H} \rho^{\pi_t^{-i}}(h) \rho^{{\pi^*}^i}(h) \langle\pi_t(.|x(h)) - {}^i\bar\pi_t(.|x(h)), Q^i_{\pi_t}(h, .) \rangle\\
    &= \sum \limits_{i=1}^N \Bigg[\sum \limits_{h \in H\backslash\mathcal{Z}} \rho^{{}^i\bar\pi_t}(h) \left[V^i_{\pi_t}(h) - \sum \limits_{a \in A} {}^i\bar\pi_t(a|x(h)) (r^i_{\pi_t}(h,a) + V^i_{\pi_t}(ha))\right] + \sum \limits_{h \in \mathcal{Z}} \rho^{{}^i\bar\pi_t}(h) V^i_{\pi_t}(h) \Bigg]\\
    &=\Bigg[\sum \limits_{i=1}^N \sum \limits_{h \in H} \rho^{{}^i\bar\pi_t}(h) V^i_{\pi_t}(h)\Bigg] - \Bigg[\sum \limits_{i=1}^N \sum \limits_{h \in H\backslash\{h_{\textrm{init}}\}} \rho^{{}^i\bar\pi_t}(h) V^i_{\pi_t}(h)\Bigg] - \Bigg[\sum \limits_{i=1}^N \sum \limits_{h \in H\backslash\mathcal{Z}} \rho^{{}^i\bar\pi_t}(h)\sum \limits_{a \in A} {}^i\bar\pi_t(a|x(h)) r^i_{\pi_t}(h,a)\Bigg]\\
    &=\sum \limits_{i=1}^N V^i_{\pi_t}(h_{\textrm{init}}) - \sum \limits_{i=1}^N \sum \limits_{h \in H\backslash\mathcal{Z}} \rho^{{}^i\bar\pi_t}(h)\sum \limits_{a \in A} {}^i\bar\pi_t(a|x(h)) r^i_{{}^i\bar\pi_t}(h,a)\\
    &\qquad\qquad + \sum \limits_{i=1}^N \sum \limits_{h \in H\backslash\mathcal{Z}} \rho^{{}^i\bar\pi_t}(h)\sum \limits_{a \in A} {}^i\bar\pi_t(a|x(h)) [r^i_{{}^i\bar\pi_t}(h,a)-r^i_{\pi_t}(h,a)]\\
    &= \Bigg[\sum \limits_{i=1}^N V^i_{\pi_t}(h_{\textrm{init}}) - V^i_{(\pi^{*i}, \pi^{-i}_t)}(h_{\textrm{init}})\Bigg] + \sum \limits_{i=1}^N \sum \limits_{h \in H\backslash\mathcal{Z}} \rho^{\pi_t^{-i}}(h) \rho^{{\pi^*}^{i}}(h) \mathbb{E}_{a \sim ({\pi^*}^{i}, \pi_t^{-i}) (..|x(h))}[r^i_{{\pi^*}^{i}, \pi_t^{-i}}(h,a) - r^i_{\pi_t}(h,a)]\\
    &= \Bigg[\sum \limits_{i=1}^N V^i_{(\pi^{i}_t, \pi^{*-i})}(h_{\textrm{init}}) - V^i_{\pi^{*}}(h_{\textrm{init}})\Bigg] + \Bigg[\sum \limits_{i=1}^N V^i_{\pi_t}(h_{\textrm{init}}) - V^i_{(\pi^{*i}, \pi^{-i}_t)}(h_{\textrm{init}}) - V^i_{(\pi^{i}_t, \pi^{*-i})}(h_{\textrm{init}})  + V^i_{\pi^{*}}(h_{\textrm{init}})\Bigg]\\
    &+ \sum \limits_{i=1}^N \sum \limits_{h \in H\backslash\mathcal{Z}} \rho^{\pi_t^{-i}}(h) \rho^{{\pi^*}^{i}}(h) \mathbb{E}_{a \sim ({\pi^*}^{i}, \pi_t^{-i}) (..|x(h))}[r^i_{{\pi^*}^{i}, \pi_t^{-i}}(h,a) - r^i_{\pi_t}(h,a)]
\end{align*}
Which concludes the proof.

\newpage
\section{The system is equivalent to FoReL dynamics and is Divergence-free (lemma \ref{FoReL-incompressible})}
\label{appendix-FoReL-incompressible}
The dynamical system
\begin{align*}
    &\dot w_t^i(x, a) = \rho^{\pi_t^{-i}}(x) [Q^i_{\pi_t}(x, a) - Q^i_{\pi_t}(x, a_x)]\\
    & \pi^i_t(.|x)=\argmax_{p \in \Delta A} \Lambda^i(p, w_t^i(x,.))
\end{align*}
And
\begin{align*}
    &\dot y_t^i(x, a) = \rho^{\tilde \pi_t^{-i}}(x) Q^i_{\tilde \pi_t}(x, a)\\
    & \tilde \pi^i_t(.|x)=\argmax_{p \in \Delta A} \Lambda^i(p, y_t^i(x,.))
\end{align*}
generate the same sequence of policies 

\begin{proof}
For all $x \in \mathcal{X}_i$ the variable:
$$y_t^i(x, a) = \int \limits_{s=0}^t \rho^{\tilde \pi_s^{-i}}(x) Q^i_{\tilde \pi_s}(x, a) ds$$
and we define $\tilde y_t^i(x, a)$ as:
\begin{align*}
    &\tilde y_t^i(x, a)\\
    &= y_t^i(x, a) - y_t^i(x, a_x)\\
    &= \int \limits_{s=0}^t \rho^{\tilde \pi_s^{-i}}(x) Q^i_{\tilde \pi_s}(x, a) ds - \int \limits_{s=0}^t \rho^{\tilde \pi_s^{-i}}(x) Q^i_{\tilde \pi_s}(x, a_x) ds\\
    &= \int \limits_{s=0}^t \rho^{\tilde \pi_s^{-i}}(x) \left[Q^i_{\tilde \pi_s}(x, a) - Q^i_{\tilde \pi_s}(x, a_x) \right] ds
\end{align*}

$$\tilde \pi^i_t(.|x)=\argmax_{p \in \Delta A} \Lambda^i(p, y_t^i(x,.))=\argmax_{p \in \Delta A} = \Lambda^i(p, y_t^i(x,.) - y_t^i(x, a_x)) = \Lambda^i(p, \tilde y_t^i(x, a))$$

Thus $\tilde y_t^i(x, a)$ and $y_t^i(x, a)$ generate the same sequence of policy.

And since $\tilde y_t^i(x, a)$ and $w_t^i(x, a)$ follow the same differential equation and have the same initial conditions, $w_t^i(x, a)$ and $y_t^i(x, a)$ generate the same sequence of policies.
\end{proof}

The dynamical system:
\begin{align*}
    &\dot w_t^i(x, a) = \rho^{\pi_t^{-i}}(x) [Q^i_{\pi_t}(x, a) - Q^i_{\pi_t}(x, a_x)]\\
    & \pi^i_t(.|x)=\argmax_{p \in \Delta A} \Lambda^i(p, w_t^i(x,.))
\end{align*}
is an autonomous dynamical system as $\pi^i_t$ is a function of $w_t^i(x, a)$. 
Let us write it $w_t = \xi (w_t)$ we have $\xi (w_t) (i, x, a) = \rho^{\pi_t^{-i}}(x) [Q^i_{\pi_t}(x, a) - Q^i_{\pi_t}(x, a_x)]$.

Finally, $\forall i \in \{1, \dots, N\}, \forall x \in \mathcal{X}_i, \forall a \in A$, $\xi (w) (i, x, a) = \rho^{\pi^{-i}}(x) [Q^i_{\pi}(x, a) - Q^i_{\pi}(x, a_x)]$ (where $\pi^i(.|x)=\argmax_{p \in \Delta A} \Lambda^i(p, w^i(x,.))$) is independent of $w^i(x, a)$ as $\rho^{\pi^{-i}}(x) Q^i_{\pi}(x, a) = \sum \limits_{h \in x} [r^i(h, a) + V^i_{\pi}(ha)]$ does not depend on $\pi^i(.|x)$.

Thus we have $\frac{\partial \xi (w) (i, x, a)}{\partial w^i(x, a)} = 0$. This proves that the $div_w \xi (w) = \sum \limits_{i=1}^N \sum \limits_{x \in \mathcal{X}_i} \sum \limits_{a \in A} \frac{\partial \xi (w) (i, x, a)}{\partial w^i(x, a)} = 0$ and that the dynamics is incompressible.

\newpage
\section{Proof Strong Lyapunov Function}
\label{proof_strong_lyapunov_function}

\begin{align*}
    &J(y) + \sum \limits_{i=1}^N \sum \limits_{x \in \mathcal{X}_i} \rho^{{\pi^*}^i}(x) \phi_i(\pi^*(.|x)) = \sum \limits_{i=1}^N \sum \limits_{x \in \mathcal{X}_i} \rho^{{\pi^*}^i}(x) [\phi_i^*(y^i(x, .)) - \langle \pi^*(.|x), y^i(x, .) \rangle + \phi_i(\pi^*(.|x))]\\
    &= \sum \limits_{i=1}^N \sum \limits_{x \in \mathcal{X}_i} \rho^{{\pi^*}^i}(x) [\Lambda^i(\pi^i(.|x), y^i(x, .)) - \langle \pi^*(.|x), y^i(x, .) \rangle + \phi_i(\pi^*(.|x))]\\
    &= \sum \limits_{i=1}^N \sum \limits_{x \in \mathcal{X}_i} \rho^{{\pi^*}^i}(x) [\langle \pi^i(.|x), y^i(x, .) \rangle - \phi_i(\pi^i(.|x)) - \langle \pi^*(.|x), y^i(x, .) \rangle + \phi_i(\pi^*(.|x))]\\
    &= \sum \limits_{i=1}^N \sum \limits_{x \in \mathcal{X}_i} \rho^{{\pi^*}^i}(x) [\phi_i(\pi^*(.|x)) - \phi_i(\pi^i(.|x)) + \langle \pi^i(.|x)-\pi^*(.|x), y^i(x, .) \rangle]
\end{align*}
for all $y^i$ in $\{y^i \; | \sum \limits_{a^i \in A^i} y^i(a^i) = 0\}$, the tangent space of $\Delta A^i$, we have that $\nabla h_i(\nabla h_i^*(y^i)) = y^i$ if $\nabla h_i^*(y^i)$ is in the interior of $\Delta A^i$ (see \cite{hofbauer2002global} for the statement of this property). Thus, for all $y^i$ there exists a $\delta$ such that $\nabla h_i(\nabla h_i^*(y^i)) = y^i + \delta \mathbf{1}$

In the end:
\begin{align*}
    &J(y) + \sum \limits_{i=1}^N \sum \limits_{x \in \mathcal{X}_i} \rho^{{\pi^*}^i}(x) \phi_i(\pi^*(.|x)) = \sum \limits_{i=1}^N \sum \limits_{x \in \mathcal{X}_i} \rho^{{\pi^*}^i}(x) [\phi_i(\pi^*(.|x)) - \phi_i(\pi^i(.|x)) + \langle \pi^i(.|x)-\pi^*(.|x), y^i(x, .) \rangle]\\
    &= \sum \limits_{i=1}^N \sum \limits_{x \in \mathcal{X}_i} \rho^{{\pi^*}^i}(x) [\phi_i(\pi^*(.|x)) - \phi_i(\pi^i(.|x)) + \langle \pi^i(.|x)-\pi^*(.|x), \nabla \phi_i (\nabla \phi_i^*(y^i(x, .))) \rangle]\\
    &= \sum \limits_{i=1}^N \sum \limits_{x \in \mathcal{X}_i} \rho^{{\pi^*}^i}(x) [\phi_i(\pi^*(.|x)) - \phi_i(\pi^i(.|x)) + \langle \pi^i(.|x)-\pi^*(.|x), \nabla \phi_i (\pi^i(.|x)) \rangle]\\
    &= \sum \limits_{i=1}^N \sum \limits_{x \in \mathcal{X}_i} \rho^{{\pi^*}^i}(x) D_{\phi_i}(\pi^*(.|x), \pi^i(.|x))
\end{align*}
Where $D_{\phi_i}$ is the Bregman divergence associated with $\phi_i$. If $\phi_i$ is the entropy, we the following equality $J(y) + \sum \limits_{i=1}^N \sum \limits_{x \in \mathcal{X}_i} \rho^{{\pi^*}^i}(x) \phi_i(\pi^*(.|x)) = \sum \limits_{i=1}^N \sum \limits_{x \in \mathcal{X}_i} \rho^{{\pi^*}^i}(x) KL(\pi^*(.|x), \pi^i(.|x)) = \Xi(\pi^*,\pi)$.

That is why: $$\frac{d}{dt}J(y) = \frac{d}{dt}\Xi(\pi^*,\pi)$$

\newpage
\section{Proof of Lemma~\ref{iteration_entropy}}
\label{proof_iteration_entropy}
Let us write ${}^k r^i_\pi(h, a) = r^i(h, a) - \frac{\eta \mathbf{1}_{i=\tau(h)}}{\rho^{\pi^{-i}}(h)} \log \frac{\pi(a|x(h))}{\pi_{k-1}(a|x(h))}$ and ${}^i \bar\pi_k = (\pi^{*i}, \pi_k^{-i})$
\begin{align*}
    \underbrace{\sum \limits_{h \in H} \rho^{\pi_k}(h) \sum \limits_{a \in A} \pi_k(a|x(h)) {}^k r^i_{\pi_k}(h, a)}_{\textrm{(2)}} &= \underbrace{\sum \limits_{h \in H} \rho^{{}^i \bar\pi_k}(h) \sum \limits_{a \in A} {}^i \bar\pi_k(a|x(h)) \; {}^k r^i_{\pi_k}(h, a)}_{\textrm{(3)}}\\
    & \qquad + \underbrace{\sum \limits_{h \in H} \rho^{\pi_k}(h) \sum \limits_{a \in A} \pi_k(a|x(h)) {}^k r^i_{\pi_k}(h, a) - \sum \limits_{h \in H} \rho^{{}^i \bar\pi_k}(h) \sum \limits_{a \in A} {}^i \bar\pi_k(a|x(h)) \; {}^k r^i_{\pi_k}(h, a)}_{\textrm{(1)}}
\end{align*}

Let us write the value function for the reward ${}^k r^i_\pi(h, a)$ and policy $\pi_k$ will be written ${}^k V^i_{\pi_k}(h) = \sum_a \pi_k(a|x(h))\left[{}^k r_{\pi_k}^i(h,a) + {}^k V^i_{\pi_k} (ha)\right]$

\begin{align*}
    \textrm{(1)} &= {}^k V^i_{\pi_k}(h_{\textrm{init}}) - \sum \limits_{h \in H} \rho^{{}^i \bar\pi_k}(h) \sum \limits_{a \in A} {}^i \bar\pi_k(a|x(h)) \; {}^k r^i_{\pi_k}(h, a)\\
    &=\sum \limits_{h \in H} \rho^{{}^i \bar\pi_k}(h) \;{}^k V^i_{\pi_k}(h) - \sum \limits_{h \in H\backslash \{h_{\textrm{init}}\}} \rho^{{}^i \bar\pi_k}(h) \;{}^k V^i_{\pi_k}(h) - \sum \limits_{h \in H} \rho^{{}^i \bar\pi_k}(h) \sum \limits_{a \in A} {}^i \bar\pi_k(a|x(h)) \; {}^k r^i_{\pi_k}(h, a)\\
    &=\sum \limits_{h \in H} \rho^{{}^i \bar\pi_k}(h) \left[\;{}^k V^i_{\pi_k}(h) - \sum \limits_{a \in A} {}^i \bar\pi_k(a|x(h)) \;{}^k V^i_{\pi_k}(ha)\right] - \sum \limits_{h \in H} \rho^{{}^i \bar\pi_k}(h) \sum \limits_{a \in A} {}^i \bar\pi_k(a|x(h)) \; {}^k r^i_{\pi_k}(h, a)\\
    &=\sum \limits_{h \in H} \rho^{{}^i \bar\pi_k}(h) \left[\;{}^k V^i_{\pi_k}(h) - \sum \limits_{a \in A} {}^i \bar\pi_k(a|x(h))\left[{}^k r^i_{\pi_k}(h, a) + \;{}^k V^i_{\pi_k}(ha)\right]\right]\\
    &=\sum \limits_{h \in H} \rho^{{}^i \bar\pi_k}(h) \sum \limits_{a \in A}\left[\pi_k(a|x(h)) - {}^i \bar\pi_k(a|x(h))\right]\;{}^k\\
    &=\sum \limits_{x \in \mathcal{X}^i} \sum \limits_{h \in x} \rho^{\pi^{*i}}(h)\rho^{\pi_k^{-i}}(h) \sum \limits_{a \in A}\left[\pi_k(a|x(h)) - \pi^{*i}(a|x(h))\right]\;{}^k Q^i_{\pi_k}(h,a)\\
    &=\sum \limits_{x \in \mathcal{X}^i}\rho^{\pi^{*i}}(x) \underbrace{\sum \limits_{h \in x} \rho^{\pi_k^{-i}}(h)}_{\rho^{\pi_k^{-i}}(x)} \sum \limits_{a \in A}\left[\pi_k(a|x(h)) - \pi^{*i}(a|x(h))\right]\;{}^k Q^i_{\pi_k}(h,a) \textrm{  from perfect recall}\\
    &=\sum \limits_{x \in \mathcal{X}^i}\rho^{\pi^{*i}}(x) \left[\sum \limits_{h \in x} \rho^{\pi_k^{-i}}(h)\right] \sum \limits_{a \in A}\left[\pi_k(a|x(h)) - \pi^{*i}(a|x(h))\right] \underbrace{\frac{\sum \limits_{h \in x} \rho^{\pi_k^{-i}}(h)\;{}^k Q^i_{\pi_k}(h,a)}{\sum \limits_{h \in x} \rho^{\pi_k^{-i}}(h)}}_{{}^k Q^i_{\pi_k}(x,a)}\\
    &=\sum \limits_{x \in \mathcal{X}^i}\rho^{\pi^{*i}}(x) \rho^{\pi_k^{-i}}(x) \sum \limits_{a \in A}\left[\pi_k(a|x(h)) - \pi^{*i}(a|x(h))\right] \;{}^k Q^i_{\pi_k}(x,a)\\
    &= -\underbrace{\kappa^i_k}_{\leq 0} \geq 0 \textrm{  as $\pi_k$ is a Nash for the game defined on reward ${}^k r^i_\pi(h, a)$}
\end{align*}
Then:
\begin{align*}
    \textrm{(2)} &= \sum \limits_{h \in H} \rho^{\pi_k}(h) \sum \limits_{a \in A} \pi_k(a|x(h)) r^i(h, a) - \eta \sum \limits_{h \in H^i} \rho^{\pi^i_k}(h) \sum \limits_{a \in A} \pi^i_k(a|x(h)) \log \frac{\pi^i_k(a|x(h))}{\pi^i_{k-1}(a|x(h))}\\
    &= V^i_{\pi_k}(h_{\textrm{init}}) - \eta \sum \limits_{h \in H^i} \rho^{\pi^i_k}(h) KL\left(\pi^i_k(.|x(h)), \pi^i_{k-1}(.|x(h))\right)
\end{align*}
And Finally:
\begin{align*}
    \textrm{(3)} &= \sum \limits_{h \in H} \rho^{{}^i \bar\pi_k}(h) \sum \limits_{a \in A} {}^i \bar\pi_k(a|x(h)) r^i(h, a) - \eta \sum \limits_{h \in H^i} \rho^{\pi^{*i}}(h) \sum \limits_{a \in A} \pi^{*i}(a|x(h)) \log \frac{\pi_k(a|x(h))}{\pi_{k-1}(a|x(h))}\\
    &= V^i_{\pi^{*i}, \pi^{-i}_k}(h_{\textrm{init}}) - \eta \sum \limits_{h \in H^i} \rho^{\pi^{*i}}(h) \left[KL\left(\pi^{*i}(.|x(h)), \pi_{k-1}(.|x(h))\right) - KL\left(\pi^{*i}(.|x(h)), \pi_{k}(.|x(h))\right)\right]
\end{align*}
Now combining $\textrm{(2)} = \textrm{(3)} + \textrm{(1)}$ we have:
\begin{align*}
    &\eta \sum \limits_{h \in H^i} \rho^{\pi^{*i}}(h) KL\left(\pi^{*i}(.|x(h)), \pi_{k}(.|x(h))\right) - \eta \sum \limits_{h \in H^i} \rho^{\pi^{*i}}(h) KL\left(\pi^{*i}(.|x(h)), \pi_{k-1}(.|x(h))\right)\\
    &= V^i_{\pi_k}(h_{\textrm{init}}) - V^i_{\pi^{*i}, \pi^{-i}_k}(h_{\textrm{init}})  + \kappa^i_k - \eta \sum \limits_{h \in H^i} \rho^{\pi^i_k}(h) KL\left(\pi^i_k(.|x(h)), \pi^i_{k-1}(.|x(h))\right)\\
    &= \underbrace{V^i_{\pi_k}(h_{\textrm{init}}) - V^i_{\pi^{*i}, \pi^{-i}_k}(h_{\textrm{init}}) - V^i_{\pi^{i}_k, \pi^{*-i}}(h_{\textrm{init}}) + V^i_{\pi^*}(h_{\textrm{init}})}_{=m^i_k} + \underbrace{V^i_{\pi^{i}_k, \pi^{*-i}}(h_{\textrm{init}}) - V^i_{\pi^*}(h_{\textrm{init}})}_{=\delta^i_k} + \kappa^i_k\\
    &\qquad - \eta \sum \limits_{h \in H^i} \rho^{\pi^i_k}(h) KL\left(\pi^i_k(.|x(h)), \pi^i_{k-1}(.|x(h))\right)
\end{align*}
And finally we have the desired property:
\begin{align*}
    &\sum \limits_{h \in H^i} \rho^{\pi^{*i}}(h) KL\left(\pi^{*i}(.|x(h)), \pi_{k}(.|x(h))\right) - \sum \limits_{h \in H^i} \rho^{\pi^{*i}}(h) KL\left(\pi^{*i}(.|x(h)), \pi_{k-1}(.|x(h))\right)\\
    &= \qquad \frac{1}{\eta}m^i_k + \frac{1}{\eta}\delta^i_k + \frac{1}{\eta}\kappa^i_k - \sum \limits_{h \in H^i} \rho^{\pi^i_k}(h) KL\left(\pi^i_k(.|x(h)), \pi^i_{k-1}(.|x(h))\right)
\end{align*}

And we get the result by summing over the players:

\begin{align*}
    &\sum \limits_{i=1}^N \sum \limits_{h \in H^i} \rho^{\pi^{*i}}(h) KL\left(\pi^{*i}(.|x(h)), \pi_{k}(.|x(h))\right) - \sum \limits_{i=1}^N \sum \limits_{h \in H^i} \rho^{\pi^{*i}}(h) KL\left(\pi^{*i}(.|x(h)), \pi_{k-1}(.|x(h))\right)\\
    &= \qquad \frac{1}{\eta}\sum \limits_{i=1}^N m^i_k + \frac{1}{\eta}\sum \limits_{i=1}^N \delta^i_k + \frac{1}{\eta} \sum \limits_{i=1}^N \kappa^i_k - \sum \limits_{i=1}^N \sum \limits_{h \in H^i} \rho^{\pi^i_k}(h) KL\left(\pi^i_k(.|x(h)), \pi^i_{k-1}(.|x(h))\right)
\end{align*}

\newpage
\section{Proof of Lemma~\ref{interior_Nash}}
\label{proof_interior_Nash}

If the equilibrium is interior, then $\sum \limits_{i=1}^N [V^i_{\pi^i_t, {\pi^*}^{-i}} - V^i_{\pi^*}] = 0$.

\begin{proof}
First let us show that $\forall i, \pi^i$:

\begin{align}
&V^i_{\pi^i, {\pi^*}^{-i}} - V^i_{\pi^*}\\
&= \sum \limits_{x \in \mathcal{X}_i} \rho^{\pi^i}(x) \rho^{{\pi^*}^{-i}}(x) \sum \limits_{a \in A} \left({\pi^*}^{i}(a|x)-\pi^i(a|x) \right) Q^i_{\pi^*}(x,a)
\end{align}

Since $\pi^*$ is a Nash equilibrium we always have $V^i_{\pi^i, {\pi^*}^{-i}} - V^i_{\pi^*} \leq 0$. Let us suppose that there exists an information state $x$ such that $Q^i_{\pi^*}(x,a)$ does not have the same values for all actions and that the equilibrium is of full support. Then a greedy policy on that state $x$ (and $\pi^*$ on the other states) should improve the value for player $i$. This would contradict $\pi^*$ being a Nash equilibrium. This proves that all Q-values $Q^i_{\pi^*}(x,a)$ are equals for every states $x$. Then $\sum \limits_{a \in A} \left({\pi^*}^{i}(a|x)-\pi^i(a|x) \right) Q^i_{\pi^*}(x,a) = 0$ for all states.

This concludes the proof that for all $t$, $\sum \limits_{i=1}^N [V^i_{\pi^i_t, {\pi^*}^{-i}} - V^i_{\pi^*}] = 0$.
\end{proof}

\newpage
\section{Reward Transformation in Monotone Games}
\label{reward_transform_monotonicity}

The reward transformation that can be considered are the following:
$$r^i_\pi(h, a) = r^i(h, a) - \mathbf{1}_{i=\tau(h)} \eta \frac{\log \pi(a|x(h))}{\rho^{\pi^{-i}}(h)}$$
or,
$$r^i_\pi(h, a) = r^i(h, a) - \mathbf{1}_{i=\tau(h)} \eta \frac{\log \pi(a|x(h))}{\rho^{\pi^{-i}}(x(h))}$$
or for any $\mu$, $$r^i_\pi(h, a) = r^i(h, a) - \frac{\eta \mathbf{1}_{i=\tau(h)}}{\rho^{\pi^{-i}}(h)} \log \frac{\pi(a|x(h))}{\mu(a|x(h))}$$
or finally for any $\mu$, $$r^i_\pi(h, a) = r^i(h, a) - \frac{\eta \mathbf{1}_{i=\tau(h)}}{\rho^{\pi^{-i}}(x(h))} \log \frac{\pi(a|x(h))}{\mu(a|x(h))}$$

And in that case we have:

$\frac{d}{dt}J(y)  = \sum \limits_{i=1}^N [V^i_{\pi^i_t, {\pi^*}^{-i}} - V^i_{\pi^*}] + \sum \limits_{i=1}^N \Omega^i(\pi, \pi^*) - \eta \sum \limits_{i=1}^N \sum \limits_{h \in H_i} \rho^{{\pi^*}^i}(h) KL(\pi^*(.|x(h)), \pi_t(.|x(h)))$

\begin{proof}
For the game defined on reward $r^i_\pi(h, a) = r^i(h, a) - \frac{\eta \mathbf{1}_{i=\tau(h)}}{\rho^{\pi^{-i}}(h)} \log \frac{\pi(a|x(h))}{\mu(a|x(h))}$
\begin{align}
   V^i_{\pi} (h_{\textrm{init}}) &= \sum \limits_{h \in H} \rho^{\pi} (h) \sum \limits_{a \in A} \pi(a|x(h)) r^i_{\pi}(h, a)\nonumber\\
   &= \sum \limits_{h \in H} \rho^{\pi} (h) \sum \limits_{a \in A} \pi(a|x(h)) \left[ r^i(h, a) - \frac{\eta \mathbf{1}_{i=\tau(h)}}{\rho^{\pi^{-i}}(h)} \log \frac{\pi(a|x(h))}{\mu(a|x(h))} \right]\nonumber\\
   &= \sum \limits_{h \in H} \rho^{\pi} (h) \sum \limits_{a \in A} \pi(a|x(h)) r^i(h, a) - \underbrace{\eta \sum \limits_{h \in H^i} \rho^{\pi^{i}} (h) \sum \limits_{a \in A} \pi^i(a|x(h)) \log \frac{\pi^i(a|x(h))}{\mu^i(a|x(h))}}_{\textrm{Only depends on $\pi^i$.}}\nonumber
\end{align}

Thus if $\sum \limits_{i=1}^N \Omega^i(\pi,\pi^*) = 0$ for the game defined with reward $r^i(h,a)$, then $\sum \limits_{i=1}^N \Omega^i(\pi,\pi^*) = 0$ for the game defined on reward $r^i_\pi(h, a) = r^i(h, a) - \frac{\eta \mathbf{1}_{i=\tau(h)}}{\rho^{\pi^{-i}}(h)} \log \frac{\pi(a|x(h))}{\mu(a|x(h))}$ the monotonicity is also $\sum \limits_{i=1}^N \Omega^i(\pi,\pi^*) = 0$.

\begin{align*}
   &\sum \limits_{i=1}^N \sum \limits_{h \in H\backslash\mathcal{Z}} \rho^{\pi_t^{-i}}(h) \rho^{{\pi^*}^{i}}(h) \mathbb{E}_{a \sim ({\pi^*}^{i}, \pi_t^{-i}) (..|x(h))}[r^i_{{\pi^*}^{i}, \pi_t^{-i}}(h,a) - r^i_{\pi_t}(h,a)]\\
    &=- \eta \sum \limits_{i=1}^N \sum \limits_{h \in H^i\backslash\mathcal{Z}} \rho^{{\pi^*}^{i}}(h) \sum \limits_{a \in A} \pi^{*i}(a|x(h)) \log \frac{\pi^{*i}(a|x(h))}{\pi^i(a|x(h))}
\end{align*}
The other cases are left in appendix.
\end{proof}

\newpage
\section{Reward Transformation in Zero-Sum Games}
\begin{align}
    r^i_\pi(h, a) &= r^i(h, a) - \mathbf{1}_{i=\tau(h)} \eta \log \pi(a|x(h)) + \mathbf{1}_{i\neq\tau(h)} \eta \log \pi(a|x(h))
\end{align}

or for any $\mu$,
\begin{align}
    r^i_\pi(h, a) &= r^i(h, a) - \mathbf{1}_{i=\tau(h)} \eta \log \frac{\pi(a|x(h))}{\mu(a|x(h))} + \mathbf{1}_{i\neq\tau(h)} \eta \log \frac{\pi(a|x(h))}{\mu(a|x(h))}
\end{align}

And in that case:

$$\frac{d}{dt}J(y) = \sum \limits_{i=1}^N [V^i_{\pi^i_t, {\pi^*}^{-i}} - V^i_{\pi^*}] - \eta \sum \limits_{i=1}^N \sum \limits_{h \in H_i} \rho^{{\pi^*}^i}(h)\rho^{{\pi_t}^{-i}}(h) KL(\pi^*(.|x(h)), \pi_t(.|x(h)))$$

\begin{proof}
The game is still zero-sum so the monotonicity is still zero.

\begin{align*}
   &\sum \limits_{i=1}^N \sum \limits_{h \in H\backslash\mathcal{Z}} \rho^{\pi_t^{-i}}(h) \rho^{{\pi^*}^{i}}(h) \mathbb{E}_{a \sim ({\pi^*}^{i}, \pi_t^{-i}) (..|x(h))}[r^i_{{\pi^*}^{i}, \pi_t^{-i}}(h,a) - r^i_{\pi_t}(h,a)]\\
    &=- \eta \sum \limits_{i=1}^N \sum \limits_{h \in H^i\backslash\mathcal{Z}} \rho^{{\pi^*}^{i}}(h) \rho^{\pi_t^{-i}}(h) \sum \limits_{a \in A} \pi^{*i}(a|x(h)) \log \frac{\pi^{*i}(a|x(h))}{\pi^i(a|x(h))}
\end{align*}
As in that case $r^i_{{\pi^*}^{i}, \pi_t^{-i}}(h,a) - r^i_{\pi_t}(h,a) = 0$ for all $h \in H^{-i}$.

\end{proof}

Discuss biais convergence trade-off.

\newpage
\section{Convergence to a Nash}
\label{proof_convergence_to_a_Nash}
The proof of the convergence to an exact Nash uses similar arguments used to prove convergence for strict Lyapunov functions in the discrete vase. From lemma~\ref{iteration_entropy} we know that for a policy sequence starting from $\pi_0$ being the uniform policy and $\pi_k$ is the solution of the game with the reward transformation $r^i_\pi(h, a) = r^i(h, a) - \frac{\eta \mathbf{1}_{i=\tau(h)}}{\rho^{\pi^{-i}}(h)} \log \frac{\pi(a|x(h))}{\pi_{k-1}(a|x(h))}$. In this section, we will call this map $F$ (and $F(\mu) = \pi^*_\mu$ is the equilibrium of the game defined on $r^i_\pi(h, a) = r^i(h, a) - \frac{\eta \mathbf{1}_{i=\tau(h)}}{\rho^{\pi^{-i}}(h)} \log \frac{\pi(a|x(h))}{\mu(a|x(h))}$). We will show that $\pi_k = F^k(\pi_0)$ will converge to a Nash equilibrium of the game $\pi^*$.

The proof proceeds in 3 steps:
\begin{itemize}
    \item First we prove that $F$ is continuous,
    \item Second we prove that $\min_{\pi^*\in \Pi^*} \Xi(\pi^*, F(\mu)) - \min_{\pi^*\in \Pi^*} \Xi(\pi^*, \mu) < 0$,
    \item The second step is enough to prove that $\min_{\pi^*\in \Pi^*} \Xi(\pi^*, \pi_k)$ converges to a value $c$. The last step proves by contradiction that $c$ can't be anything but $0$.
\end{itemize}

\subsection{Continuity}
\label{continuity_of_F}
The first step is to show that the map $F(.)$ which associate $\mu$ to the Nash equilibrium over the game defined over $r^i_{\mu,\pi}(h, a) = r^i(h, a) - \frac{\eta \mathbf{1}_{i=\tau(h)}}{\rho^{\pi^{-i}}(h)} \log \frac{\pi(a|x(h))}{\mu(a|x(h))}$ is continuous.

Then for all $\mu, \mu'$, we have $r^i_{\mu,\pi}(h, a)-r^i_{\mu',\pi}(h, a) =  - \frac{\eta \mathbf{1}_{i=\tau(h)}}{\rho^{\pi^{-i}}(h)} \log \frac{\mu'(a|x(h))}{\mu(a|x(h))}$

Let us write now $w^*_\mu$ and $w^*_{\mu'}$ the fixed point of the dynamic defined in lemma~\ref{FoReL-incompressible} and $\Xi_\mu$ and $\Xi_{\mu'}$ their corresponding Lyapunov function and $\pi^*_{\mu}$ and $\pi^*_{\mu'}$.

Let us consider that $\tilde w$ follow the following ODE (where $Q^i_{\mu,\pi_t}(x, a)$ is the $Q$-function for reward $r^i_{\mu,\pi}(h, a) = r^i(h, a) - \frac{\eta \mathbf{1}_{i=\tau(h)}}{\rho^{\pi^{-i}}(h)} \log \frac{\pi(a|x(h))}{\mu(a|x(h))}$)
\begin{align*}
    &\dot {\tilde w}_t^i(x, a) = \rho^{\pi_t^{-i}}(x) [Q^i_{\mu,\pi_t}(x, a) - Q^i_{\mu,\pi_t}(x, a_x)]\\
    & \pi^i_t(.|x)=\argmax_{p \in \Delta A} \Lambda^i(p, \tilde w_t^i(x,.))
\end{align*}
Let us suppose furthermore that we start from the equilibrium $\tilde w(0) = w^*_\mu$

Let us examine the variation of $\Xi_{\mu'}(\tilde w(t))$ and write $\pi_t(.|x) = \Gamma(\tilde w(t)(.,x))$:
\begin{align*}
    \frac{d}{dt}\Xi(\pi_{\mu'}^*,\pi_t) &= \sum \limits_{i=1}^N \sum \limits_{x \in \mathcal{X}_i} \rho^{{\pi_{\mu'}^*}^i}(x) \rho^{\pi_t^{-i}}(x) \langle\pi_t(.|x) - \pi_{\mu'}^*(.|x), Q^i_{\mu,\pi_t}(x, .) \rangle\\
    &= \sum \limits_{i=1}^N \sum \limits_{x \in \mathcal{X}_i} \rho^{{\pi_{\mu'}^*}^i}(x) \rho^{\pi_t^{-i}}(x) \langle\pi_t(.|x) - \pi_{\mu'}^*(.|x), Q^i_{\mu',\pi_t}(x, .) \rangle\\
    & \qquad + \sum \limits_{i=1}^N \sum \limits_{x \in \mathcal{X}_i} \rho^{{\pi_{\mu'}^*}^i}(x) \rho^{\pi_t^{-i}}(x) \langle\pi_t(.|x) - \pi_{\mu'}^*(.|x), Q^i_{\mu,\pi_t}(x, .) - Q^i_{\mu',\pi_t}(x, .) \rangle
\end{align*}

Let have $\mu, \mu' \in D_0$ (where $D_0$ is an open set). Furthermore let us suppose that for all $\mu \in D_0$ there exists $\epsilon>0$ such that for all $x \in \mathcal{X}$ and $a \in A$ $\mu(a|x)>\epsilon$.

The function $\log$ is locally Lipschitz of constant $K$ in $D_0$.

As $\pi_t = \pi^*_{\mu}$ we can bound $Q^i_{\mu,\pi_t}(x, .) - Q^i_{\mu',\pi_t}(x, .) \leq \eta T_{\max}\left[\sup_{\mu'' \in D_0}\max_{h \in H_i}\frac{1 }{\rho^{\pi_{\mu''}^{*-i}}(h)}\right]K \|\mu-\mu'\|$

Finally We can have that:
\begin{align*}
    \frac{d}{dt}\Xi(\pi_{\mu'}^*,\pi_t) \leq - \eta \Xi(\pi_{\mu'}^*,\pi_t) + \eta T_{\max}\left[\sup_{\mu'' \in D_0}\max_{h \in H_i}\frac{1 }{\rho^{\pi_{\mu''}^{*-i}}(h)}\right]K \|\mu-\mu'\|
\end{align*}

This imply that $\Xi(\pi_{\mu'}^*,\pi_{\mu}^*)\leq T_{\max}\left[\sup_{\mu'' \in D_0}\max_{h \in H_i}\frac{1 }{\rho^{\pi_{\mu''}^{*-i}}(h)}\right]K \|\mu-\mu'\|$.

This finally imply that the map $\mu \rightarrow \pi^*_{\mu}$ is continuous.

\subsection{The function $\min_{\pi* \in \Pi^*}\Xi(\pi^*, \mu)$ is a strict Lyapunov function}
We have seen that the following equality holds (in lemma~\ref{iteration_entropy}):

$\Xi(\pi^*, \pi_k) - \Xi(\pi^*, \pi_{k-1}) = - \Xi(\pi_k, \pi_{k-1}) + \frac{1}{\eta} \sum \limits_{i=1}^N m_k^i + \frac{1}{\eta} \sum \limits_{i=1}^N \delta_k^i + \frac{1}{\eta} \sum \limits_{i=1}^N \kappa_k^i$

Where:
\begin{align*}
    \Xi(\mu,\pi) = \sum \limits_{i=1}^N \sum \limits_{h \in H_i} \rho^{{\mu}^i}(h) KL(\mu(.|x(h)), \pi(.|x(h)))
\end{align*}
Where:
\begin{align*}
    \kappa^i_k = \sum \limits_{x \in \mathcal{X}^i}\rho^{\pi^{*i}}(x) \rho^{\pi_k^{-i}}(x) \sum \limits_{a \in A}\left[\pi^{*i}(a|x(h)) - \pi_k(a|x(h))\right] \;{}^k Q^i_{\pi_k}(x,a)\leq 0
\end{align*}
Where:
\begin{align*}
    \delta^i_k = V^i_{\pi^{i}_k, \pi^{*-i}}(h_{\textrm{init}}) - V^i_{\pi^*}(h_{\textrm{init}})\leq 0
\end{align*}
And where:
\begin{align*}
    m^i_k = V^i_{\pi_k}(h_{\textrm{init}}) - V^i_{\pi^{*i}, \pi^{-i}_k}(h_{\textrm{init}}) - V^i_{\pi^{i}_k, \pi^{*-i}}(h_{\textrm{init}}) + V^i_{\pi^*}(h_{\textrm{init}})
\end{align*}
And where $\sum \limits_{i=1}^N m^i_k \leq 0$ if the game is monotone.

First let us write $\Pi^*$ the set of Nash equilibrium if the game defined on reward $r^i(h, a)$.

{\bf The Goal of this section is to prove that that $\min_{\pi^*\in \Pi^*} \Xi(\pi^*, F(\mu)) - \min_{\pi^*\in \Pi^*} \Xi(\pi^*, \mu) < 0$ for all $\mu \not\in \Pi^*$}

The first step of our proof is to show if there exists a $k$ such that $\Xi(F(\mu), \mu)=0$, then $F(\mu), \mu \in \Pi^*$.

To do so, we first need to prove a serie of technical lemma.

\begin{lemma}
\label{technical_lemma}
For all $\pi, \pi^*$ and for all $i\in \{1,\dots,N\}$ we have: $$V^i_{\pi^*}(h_{\textrm{init}})-V^i_{\pi^i, \pi^{*-i}}(h_{\textrm{init}}) = \sum \limits_{x \in \mathcal{X}_i} \rho^{\pi^i}(x)\rho^{\pi^{*-i}}(x)\sum_a (\pi^i(a|x)-\pi^{*i}(a|x))Q^i_{\pi^*}(x,a)$$
\end{lemma}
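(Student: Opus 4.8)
The plan is to prove this as the extensive-form analogue of the performance-difference (value-decomposition) lemma: a one-step Bellman decomposition at the histories controlled by player $i$, telescoped down the game tree, and then regrouped by information state. Here $V^i$ and $Q^i$ are understood with respect to the original policy-independent reward $r^i(h,a)$ (this is what makes the identity clean, and it matches the usage of $\delta^i_k$ in Lemma~\ref{iteration_entropy}). Note this is exactly the identity invoked without proof at the start of the proof of Lemma~\ref{interior_Nash}, so establishing it here also closes that gap.

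Write $\pi:=(\pi^i,\pi^{*-i})$ and, for every history $h$, set $g(h):=V^i_{\pi}(h)-V^i_{\pi^*}(h)$, with $g(z)=0$ on terminal histories. The first step is a case split on $\tau(h)$. At a history controlled by $-i$ or by chance, $\pi$ and $\pi^*$ prescribe the same action distribution at $x(h)$ and the reward $r^i(h,a)$ does not depend on $\pi^i$, so the Bellman recursions give $g(h)=\sum_a\pi^*(a|x(h))\,g(ha)$. At a history with $\tau(h)=i$, I would start from $V^i_{\pi}(h)=\sum_a\pi^i(a|x(h))\big[r^i(h,a)+V^i_{\pi}(ha)\big]$ and $V^i_{\pi^*}(h)=\sum_a\pi^{*i}(a|x(h))\,Q^i_{\pi^*}(h,a)$, then add and subtract $\sum_a\pi^i(a|x(h))\,Q^i_{\pi^*}(h,a)=\sum_a\pi^i(a|x(h))\big[r^i(h,a)+V^i_{\pi^*}(ha)\big]$ to obtain
\[
g(h)=\sum_{a}\big(\pi^i(a|x(h))-\pi^{*i}(a|x(h))\big)\,Q^i_{\pi^*}(h,a)\;+\;\sum_a\pi^i(a|x(h))\,g(ha).
\]

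The second step is to unroll this recursion from $h_{\textrm{init}}$. Each history $h\in H_i$ then contributes its ``source'' term $\sum_a(\pi^i(a|x(h))-\pi^{*i}(a|x(h)))Q^i_{\pi^*}(h,a)$ multiplied by the product of action probabilities along the prefix path $h_{\textrm{init}}\sqsubset\cdots\sqsubset h$, using $\pi^i$ at the histories of player $i$ and $\pi^*$ (player $-i$ or chance) elsewhere; by the definitions of the reach probabilities this product is exactly $\rho^{\pi^i}(h)\,\rho^{\pi^{*-i}}(h)$. Since the game terminates, this yields $V^i_{\pi}(h_{\textrm{init}})-V^i_{\pi^*}(h_{\textrm{init}})=\sum_{h\in H_i}\rho^{\pi^i}(h)\rho^{\pi^{*-i}}(h)\sum_a(\pi^i(a|x(h))-\pi^{*i}(a|x(h)))Q^i_{\pi^*}(h,a)$. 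The third step regroups $H_i$ by information state: partition along $x(h)$; by perfect recall $\rho^{\pi^i}(h)=\rho^{\pi^i}(x)$ for all $h\in x$, and $\pi^i(\cdot|x(h))$, $\pi^{*i}(\cdot|x(h))$ depend only on $x$, so these factor out of the inner sum over $h\in x$, leaving $\sum_{h\in x}\rho^{\pi^{*-i}}(h)Q^i_{\pi^*}(h,a)$, which equals $\rho^{\pi^{*-i}}(x)\,Q^i_{\pi^*}(x,a)$ by the definition of the information-state $Q$-function together with $\sum_{h\in x}\rho^{\pi^{*-i}}(h)=\rho^{\pi^{*-i}}(x)$. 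Substituting gives the stated identity.

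This uses neither monotonicity nor interiority of the equilibrium — it is a pure value-decomposition fact. The only delicate points are bookkeeping: checking that histories controlled by $-i$ and by chance genuinely contribute nothing to $g$ (they do, because $\pi$ and $\pi^*$ act identically there and $r^i(h,a)$ there is independent of $\pi^i$), and confirming that the telescoped weights assemble precisely into $\rho^{\pi^i}(h)\rho^{\pi^{*-i}}(h)$ and not some other combination. The relative orientation of the value difference and the policy difference, and which of the two policies carries the reach probability versus the $Q$-function, are exactly the spots where a sign can slip, so I would write that step out explicitly.
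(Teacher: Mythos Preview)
Your argument is correct and is essentially the same as the paper's: both telescope a one-step Bellman decomposition against the reach probabilities of $\bar\pi=(\pi^i,\pi^{*-i})$, observe that only histories with $\tau(h)=i$ contribute, and then regroup by information state using perfect recall together with the definition of $Q^i_{\pi^*}(x,a)$. One minor caveat on bookkeeping: since your $g(h_{\textrm{init}})=V^i_{(\pi^i,\pi^{*-i})}-V^i_{\pi^*}$ is the \emph{negative} of the lemma's left-hand side, what you derive is the identity with a global sign flip relative to the printed statement --- the paper's own derivation likewise ends with $(\pi^{*i}-\pi^i)$ rather than $(\pi^i-\pi^{*i})$, so this is a typo in the statement, not a flaw in your reasoning.
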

\begin{proof}
Let's write $\bar\pi = (\pi^i, \pi^{*-i})$
\begin{align*}
    &V^i_{\pi^*}(h_{\textrm{init}})-V^i_{\pi^i, \pi^{*-i}}(h_{\textrm{init}})\\
    &= \underbrace{\sum \limits_{h \in H} \rho^{\pi^i}(h)\rho^{\pi^{*-i}}(h)[V^i_{\pi^*}(h) - \sum_{a\in A}\bar \pi(a|x(h))V^i_{\pi^*}(ha)]}_{=V^i_{\pi^*}(h_{\textrm{init}})}- \underbrace{\sum \limits_{h \in H} \rho^{\pi^i}(h)\rho^{\pi^{*-i}}(h)\sum_{a\in A}\bar \pi(a|x(h))r^i(h,a)]}_{=V^i_{\pi^i, \pi^{*-i}}(h_{\textrm{init}})}\\
    &= \sum \limits_{h \in H} \rho^{\pi^i}(h)\rho^{\pi^{*-i}}(h)[V^i_{\pi^*}(h) - \sum_{a\in A}\bar \pi(a|x(h))[r^i(h,a) + V^i_{\pi^*}(ha)]]\\
    &= \sum \limits_{h \in H} \rho^{\pi^i}(h)\rho^{\pi^{*-i}}(h)\sum_{a\in A}(\pi^*(a|x(h))-\bar \pi(a|x(h)))[Q^i_{\pi^*}(h,a)]\\
    &= \sum \limits_{h \in H_i} \rho^{\pi^i}(h)\rho^{\pi^{*-i}}(h)\sum_{a\in A}(\pi^{*i}(a|x(h))-\pi^i(a|x(h)))Q^i_{\pi^*}(h,a) \textrm{ as $\bar\pi = \pi*$ on all the opponent nodes.}\\
    &= \sum \limits_{x \in \mathcal{X}_i} \rho^{\pi^i}(x)\rho^{\pi^{*-i}}(x)\sum_{a\in A}(\pi^{*i}(a|x)-\pi^i(a|x))[Q^i_{\pi^*}(x,a)] 
\end{align*}
\end{proof}

\begin{lemma}
\label{technical_lemma_Nash}
Let $\pi^*$ be a policy. If for all $i \in \{1,\dots,N\}, x \in \mathcal{X}_i$ and $\hat \pi$ such that $\sum_a \left(\pi^*(a|x)-\hat\pi(a|x)\right)Q^i_{\pi^*}(x,a)\geq0$ then $\pi^*$ is a Nash equilibrium.
\end{lemma}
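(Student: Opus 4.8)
The plan is to read the statement as a \emph{one-shot-deviation principle} and derive it directly from Lemma~\ref{technical_lemma}: if $\pi^*$ is pointwise optimal against its own continuation values $Q^i_{\pi^*}$ at every information state, then it is globally optimal against any full deviation. So I would fix a player $i$ and an arbitrary deviation policy $\pi'^i$, and reduce the goal, via the definition of Nash equilibrium, to proving $V^i_{\pi'^i,\pi^{*-i}}(h_{\textrm{init}})\le V^i_{\pi^*}(h_{\textrm{init}})$ for every such $i$ and $\pi'^i$.

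For this, I would apply Lemma~\ref{technical_lemma} with player $i$'s component of the deviating profile equal to $\pi'^i$ (using that $\pi^{*i}(\cdot|x)=\pi^*(\cdot|x)$ for $x\in\mathcal{X}_i$, since $\pi^{*i}$ is the restriction of $\pi^*$ to $\mathcal{X}_i$). This expresses the value gap as a nonnegatively weighted sum of per-information-state advantage terms:
\begin{align*}
V^i_{\pi^*}(h_{\textrm{init}})-V^i_{\pi'^i, \pi^{*-i}}(h_{\textrm{init}}) = \sum_{x \in \mathcal{X}_i} \rho^{\pi'^i}(x)\,\rho^{\pi^{*-i}}(x)\sum_{a\in A} \bigl(\pi^{*}(a|x)-\pi'^i(a|x)\bigr)\,Q^i_{\pi^*}(x,a).
\end{align*}
Then I would invoke the hypothesis of the lemma with $\hat\pi=\pi'^i$ at each $x\in\mathcal{X}_i$, which gives $\sum_{a}\bigl(\pi^{*}(a|x)-\pi'^i(a|x)\bigr)Q^i_{\pi^*}(x,a)\ge 0$ for every such $x$. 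Since reach probabilities are products of probabilities, $\rho^{\pi'^i}(x)\ge 0$ and $\rho^{\pi^{*-i}}(x)\ge 0$, so every summand on the right-hand side above is nonnegative, hence $V^i_{\pi^*}(h_{\textrm{init}})\ge V^i_{\pi'^i, \pi^{*-i}}(h_{\textrm{init}})$. As $i$ and $\pi'^i$ were arbitrary, $\pi^*$ is a Nash equilibrium.

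I do not expect a genuine obstacle here: the content is carried entirely by Lemma~\ref{technical_lemma}, and the remainder is sign and support bookkeeping. The two points requiring care are (i) keeping the orientation of the advantage term $\pi^{*}-\pi'^i$ coming out of Lemma~\ref{technical_lemma} aligned with the orientation in the hypothesis, so that the final inequality points the right way; and (ii) justifying that instantiating the universally quantified $\hat\pi$ by (the restriction to $\mathcal{X}_i$ of) the deviation $\pi'^i$ is legitimate — in fact, specializing $\hat\pi$ to point masses shows the hypothesis is equivalent to $\pi^*(\cdot|x)\in\argmax_{p\in\Delta A}\langle p, Q^i_{\pi^*}(x,.)\rangle$ for all $i$ and all $x\in\mathcal{X}_i$, which is the natural reading and the form used downstream.
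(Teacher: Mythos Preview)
Your proposal is correct and follows essentially the same route as the paper: apply Lemma~\ref{technical_lemma} to write $V^i_{\pi^*}(h_{\mathrm{init}})-V^i_{\pi'^i,\pi^{*-i}}(h_{\mathrm{init}})$ as a reach-probability-weighted sum of the per-information-state advantages $\sum_a(\pi^*(a|x)-\pi'^i(a|x))Q^i_{\pi^*}(x,a)$, then invoke the hypothesis termwise. Your handling of the sign is in fact cleaner than the paper's printed version, which carries a typo in the orientation of the difference between the statement and the proof of Lemma~\ref{technical_lemma}.
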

\begin{proof}
Let suppose that  for all $i \in \{1,\dots,N\}, x \in \mathcal{X}_i$ and $\hat \pi$ we have $\sum_a \left(\pi^*(a|x)-\hat\pi(a|x)\right)Q^i_{\pi^*}(x,a)\geq0$

Then by lemma~\ref{technical_lemma} we have: $$\forall i, \; V^i_{\pi^*}(h_{\textrm{init}})-V^i_{\pi^i, \pi^{*-i}}(h_{\textrm{init}}) = \sum \limits_{x \in \mathcal{X}_i} \rho^{\pi^i}(x)\rho^{\pi^{*-i}}(x)\sum_a (\pi^i(a|x)-\pi^{*i}(a|x))Q^i_{\pi^*}(x,a)\geq0$$

Thus $\pi^*$ is a Nash equilibrium.
\end{proof}

\begin{corollary}
If $\pi^*$ is not a Nash equilibrium, then there exists $i \in \{1,\dots,N\}, x \in \mathcal{X}_i$ and $\hat \pi$ such that $\sum_a \left(\pi^*(a|x)-\hat\pi(a|x)\right)Q^i_{\pi^*}(x,a)<0$
\end{corollary}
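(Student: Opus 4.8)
The plan is to observe that this corollary is nothing more than the contrapositive of Lemma~\ref{technical_lemma_Nash}. That lemma asserts that if $\sum_a \left(\pi^*(a|x)-\hat\pi(a|x)\right)Q^i_{\pi^*}(x,a)\geq 0$ holds for \emph{every} player $i$, information state $x\in\mathcal{X}_i$, and policy $\hat\pi$, then $\pi^*$ is a Nash equilibrium. Negating the conclusion (``$\pi^*$ is not a Nash equilibrium'') forces the negation of the hypothesis; since the negation of a universally quantified non-strict inequality is precisely the existence of one triple $(i,x,\hat\pi)$ for which the strict reverse inequality holds, we obtain exactly the asserted $i$, $x$, $\hat\pi$ with $\sum_a \left(\pi^*(a|x)-\hat\pi(a|x)\right)Q^i_{\pi^*}(x,a)<0$. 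So the whole proof is one line: invoke Lemma~\ref{technical_lemma_Nash} in contrapositive form.

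If one prefers a self-contained argument not routed through Lemma~\ref{technical_lemma_Nash}, I would instead start from the definition of Nash: since $\pi^*$ is not one, there is a player $i$ and a policy $\pi'^i$ with $V^i_{\pi^*}(h_{\textrm{init}})-V^i_{\pi'^i,\pi^{*-i}}(h_{\textrm{init}})<0$. Applying the identity of Lemma~\ref{technical_lemma} with $\pi^i=\pi'^i$ gives
\[
0 > \sum_{x\in\mathcal{X}_i}\rho^{\pi'^i}(x)\rho^{\pi^{*-i}}(x)\sum_a\left(\pi'^i(a|x)-\pi^{*i}(a|x)\right)Q^i_{\pi^*}(x,a).
\]
Since the total is strictly negative, at least one summand over $x$ is strictly negative; for that $x$ the reach-probability prefactor $\rho^{\pi'^i}(x)\rho^{\pi^{*-i}}(x)$ is nonzero and $\sum_a\left(\pi'^i(a|x)-\pi^{*i}(a|x)\right)Q^i_{\pi^*}(x,a)<0$. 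Taking $\hat\pi$ to coincide with $\pi'^i$ at this single $x$ (and arbitrary elsewhere, since only the value at $x$ enters the expression) yields $\sum_a\left(\pi^*(a|x)-\hat\pi(a|x)\right)Q^i_{\pi^*}(x,a)<0$, which is the claim.

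There is essentially no hard step here: the corollary is a formal rephrasing of the preceding lemma. The only points requiring a little care are bookkeeping --- performing the quantifier negation correctly (a universal non-strict inequality negates to an existential strict one), and, along the direct route, noting that a finite sum which is strictly negative must contain a strictly negative term, so that the accompanying reach probabilities are automatically positive and can be cancelled.
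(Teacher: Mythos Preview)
Your primary argument---taking the contrapositive of Lemma~\ref{technical_lemma_Nash}---is exactly the paper's one-line proof.

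One caution on your optional self-contained route: there is a sign slip at the end. From $\sum_a\bigl(\pi'^i(a|x)-\pi^{*i}(a|x)\bigr)Q^i_{\pi^*}(x,a)<0$ you cannot conclude $\sum_a\bigl(\pi^*(a|x)-\hat\pi(a|x)\bigr)Q^i_{\pi^*}(x,a)<0$ with $\hat\pi=\pi'^i$, since those two expressions differ by a sign. This traces back to a typo in the \emph{statement} of Lemma~\ref{technical_lemma}; its proof actually ends with the factor $(\pi^{*i}-\pi^i)$, and using that corrected version your alternative argument goes through cleanly.
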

\begin{proof}
This is a direct consequence of lemma~\ref{technical_lemma_Nash}
\end{proof}

\begin{theorem}
\label{theorem_nash_div}
If $\pi^*_\mu = F(\mu) = \mu$, then $\mu$ is a Nash equilibrium of the game defined on $r^i(h,a)$.
\end{theorem}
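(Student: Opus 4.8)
The plan is to exploit the explicit structure of the transformed game, using that its extra reward term is a ``distance to $\mu$'' penalty whose gradient vanishes at $\mu$. Recall that $F(\mu)$ is, by definition (and by the uniqueness property noted after Corollary~\ref{corollary_zero_sum}), the unique Nash equilibrium of the game $G_\mu$ with reward ${}^\mu r^i_\pi(h,a) = r^i(h,a) - \frac{\eta\mathbf{1}_{i=\tau(h)}}{\rho^{\pi^{-i}}(h)}\log\frac{\pi(a|x(h))}{\mu(a|x(h))}$. By the value decomposition established in appendix~\ref{reward_transform_monotonicity}, for every player $i$ and every $\pi^i$, the value of $(\pi^i,\mu^{-i})$ in $G_\mu$ equals $V^i_{\pi^i,\mu^{-i}}(h_{\textrm{init}}) - \eta\, g^i(\pi^i)$, where $g^i(\pi^i) = \sum_{x\in\mathcal{X}_i}\rho^{\pi^i}(x)\,KL(\pi^i(.|x),\mu(.|x)) \ge 0$ depends on $\pi^i$ only, vanishes at $\pi^i=\mu^i$, and, since $\mu=F(\mu)$ has full support under the entropy regularizer (so each $\mu(.|x)$ lies in the interior of $\Delta A$), is smooth near $\mu^i$ with zero gradient there; hence $g^i((1-\lambda)\mu^i+\lambda\hat\pi^i) = O(\lambda^2)$ for any competitor $\hat\pi^i$.

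Since $F(\mu)=\mu$ is a Nash of $G_\mu$, player $i$'s component $\mu^i$ maximizes $\pi^i\mapsto V^i_{\pi^i,\mu^{-i}}(h_{\textrm{init}}) - \eta\, g^i(\pi^i)$ over all $\pi^i$. Fixing $\hat\pi^i$ and setting $\pi^i_\lambda=(1-\lambda)\mu^i+\lambda\hat\pi^i$, optimality together with $g^i(\mu^i)=0$ gives $V^i_{\pi^i_\lambda,\mu^{-i}}(h_{\textrm{init}}) - V^i_{\mu}(h_{\textrm{init}}) \le \eta\, g^i(\pi^i_\lambda) = O(\lambda^2)$ for all $\lambda\in[0,1]$. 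Dividing by $\lambda>0$ and letting $\lambda\to 0^+$ yields a one-sided derivative bound, and by Lemma~\ref{technical_lemma} applied with $\pi^*=\mu$ and differentiated along the segment $\pi^i_\lambda$, this derivative equals (up to sign) $\sum_{x\in\mathcal{X}_i}\rho^{\mu^i}(x)\rho^{\mu^{-i}}(x)\sum_a(\hat\pi^i(a|x)-\mu(a|x))Q^i_\mu(x,a)$. I therefore obtain $\sum_{x\in\mathcal{X}_i}\rho^{\mu^i}(x)\rho^{\mu^{-i}}(x)\sum_a(\mu(a|x)-\hat\pi^i(a|x))Q^i_\mu(x,a)\ge 0$ for every $i$ and every $\hat\pi^i$.

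To finish, I specialize $\hat\pi^i$ to agree with $\mu^i$ at all information states except a single $x_0\in\mathcal{X}_i$, so that only the $x_0$ term survives; full support of $\mu$ gives $\rho^{\mu^i}(x_0)>0$, and dividing by $\rho^{\mu^i}(x_0)\rho^{\mu^{-i}}(x_0)$ (permissible whenever $\rho^{\mu^{-i}}(x_0)>0$; states with $\rho^{\mu^{-i}}(x_0)=0$ are unreachable against $\mu^{-i}$ and do not affect any $V^i_{\pi'^i,\mu^{-i}}$) reduces this to $\sum_a(\mu(a|x_0)-\hat\pi(a|x_0))Q^i_\mu(x_0,a)\ge 0$ for all $i$, $x_0$, $\hat\pi$. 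Lemma~\ref{technical_lemma_Nash} then immediately gives that $\mu$ is a Nash equilibrium of the game with reward $r^i(h,a)$, which is the claim.

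I expect the only genuinely delicate step to be the passage from ``$\mu^i$ maximizes the regularized objective $V^i_{\cdot,\mu^{-i}}-\eta g^i$'' to ``$\mu^i$ satisfies the first-order (hence, via Lemma~\ref{technical_lemma_Nash}, global) best-response condition for the unregularized game'': this does not follow from any black-box ``$\arg\max$ of a sum'' reasoning and hinges precisely on $g^i$ being second order at $\mu^i$, i.e.\ on $\mu$ having full support so that $KL(\cdot,\mu(.|x))$ is smooth with vanishing gradient at $\mu(.|x)$. This full-support property is automatic for the entropy regularizer used in this section and is consistent with the interiority hypothesis of the subsequent convergence theorem; as a cross-check I would alternatively run Lemma~\ref{iteration_entropy} with $\pi_{k-1}=\pi_k=\mu$, which forces $\delta^i_k=\kappa^i_k=0$ and hence $V^i_{\pi^{*i},\mu^{-i}}(h_{\textrm{init}})=V^i_\mu(h_{\textrm{init}})$ for every Nash $\pi^*$ of the original game.
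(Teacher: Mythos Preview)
Your proof is correct and follows essentially the same approach as the paper's: both exploit that the KL penalty $KL(\cdot,\mu)$ is second order at $\mu$ (you via smoothness of $g^i$ at an interior point, the paper via a quadratic upper bound on $KL(\hat\pi_\alpha,\mu)$), so that the regularization drops out of the first-order optimality conditions for the transformed game, and both then conclude via Lemma~\ref{technical_lemma}/Lemma~\ref{technical_lemma_Nash}. The only difference is packaging---the paper argues by contradiction, localizing at a single information state $\tilde x$ from the outset, whereas you give a direct global argument and specialize to one state at the end.
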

\begin{proof}
First we will write $V^i_{\mu,\pi}(h)$ ($Q^i_{\mu,\pi}(h,a)$) to be the value function ($Q$-function) with respect to the reward $r^i_{\mu,\pi}(h, a) = r^i(h, a) - \frac{\eta \mathbf{1}_{i=\tau(h)}}{\rho^{\pi^{-i}}(h)} \log \frac{\pi(a|x(h))}{\mu(a|x(h))}$.

The reader will notice that since $\pi^*_\mu = \mu$ then $Q^i_{\mu,\pi^*_\mu}(h,a) = Q^i_{\pi^*_\mu}(h,a)$.

We will prove the result by contradiction. Let suppose that $\pi^*_\mu$ is not a Nash equilibrium for the game with reward $r^i(h,a)$. Then there exists $i$, $\hat \pi$ and $\tilde x\in \mathcal{X}_i$ such that $\sum_a \left(\pi^*_\mu(a|\tilde x)-\hat\pi(a|\tilde x)\right)Q^i_{\pi^*_\mu}(\tilde x,a)<0$

For the rest of this proof, we will write $\hat \pi_\alpha$ the policy defined as $\pi^*_\mu$ on all $x \in \mathcal{X}\backslash\{\tilde x\}$ and $(1-\alpha)\pi^*_\mu + \alpha \hat\pi$ on state $\tilde x$.

As $\pi^*_\mu$ is a Nash equilibrium for the reward $r^i_{\mu,\pi}(h, a)$, then $V^i_{\mu,\pi^*_\mu}(h_{\textrm{init}})-V^i_{\mu,\hat \pi_\alpha}(h_{\textrm{init}}) \geq 0$
\begin{align*}
    &V^i_{\mu,\pi^*_\mu}(h_{\textrm{init}})-V^i_{\mu,\hat \pi_\alpha}(h_{\textrm{init}})\\
    &=\sum \limits_{h \in H} \rho^{\hat \pi_\alpha}(h)[V^i_{\mu,\pi^*_\mu}(h) - \sum_{a\in A}\hat \pi_\alpha(a|x(h))V^i_{\mu,\pi^*_\mu}(ha)] - \sum \limits_{h \in H} \rho^{\hat \pi_\alpha}(h)\sum_{a\in A}\hat \pi_\alpha(a|x(h))r^i_{\mu,\hat \pi_\alpha}(h, a)]\\
    &=\sum \limits_{h \in \tilde x} \rho^{\hat \pi_\alpha}(h)[V^i_{\mu,\pi^*_\mu}(h) - \sum_{a\in A}\hat \pi_\alpha(a|x(h))[r^i_{\mu,\hat \pi_\alpha}(h, a) + V^i_{\mu,\pi^*_\mu}(ha)]] \textrm{ as $\hat \pi_\alpha$ and $\pi^*_\mu$ are only different on $\tilde x$}\\
    &=\sum \limits_{h \in \tilde x} \rho^{\pi^*_\mu}(h)[V^i_{\mu,\pi^*_\mu}(h) - \sum_{a\in A}\hat \pi_\alpha(a|x(h))[Q^i_{\mu,\pi^*_\mu}(h,a) - \frac{\eta \mathbf{1}_{i=\tau(h)}}{\rho^{\pi_\mu^{*-i}}(h)} \log \frac{\hat \pi_\alpha(a|x(h))}{\mu(a|x(h))}]]\\
    &=\sum \limits_{h \in \tilde x} \rho^{\pi^*_\mu}(h) [\frac{\eta \mathbf{1}_{i=\tau(h)}}{\rho^{\pi_\mu^{*-i}}(h)}KL(\hat \pi_\alpha(.|x(h)), \mu(.|x(h))) + \sum_{a\in A}[\pi^*_\mu(a|x(h))-\hat \pi_\alpha(a|x(h))]Q^i_{\mu,\pi^*_\mu}(h,a)]\\
    &=\eta \left(\sum \limits_{h \in \tilde x} \rho^{\pi^{*i}_\mu}(h)\right) KL(\hat \pi_\alpha(.|\tilde x), \mu(.|\tilde x)) + \rho^{\pi^*_\mu}(\tilde x)\sum_{a\in A}[\pi^*_\mu(a|\tilde x)-\hat \pi_\alpha(a|\tilde x)]Q^i_{\mu,\pi^*_\mu}(\tilde x,a)]\\
    &\leq \eta \left(\sum \limits_{h \in \tilde x} \rho^{\pi^{*i}_\mu}(h)\right) \frac{1}{2}\|\hat \pi_\alpha(.|\tilde x)- \mu(.|\tilde x)\|_1^2 + \rho^{\pi^*_\mu}(\tilde x)\sum_{a\in A}[\pi^*_\mu(a|\tilde x)-\hat \pi_\alpha(a|\tilde x)]Q^i_{\mu,\pi^*_\mu}(\tilde x,a)]\textrm{ by the Pinsker inequality.}\\
    &\leq \eta \alpha^2 \left(\sum \limits_{h \in \tilde x} \rho^{\pi^{*i}_\mu}(h)\right) \frac{1}{2}\|\hat \pi(.|\tilde x)-\mu(.|\tilde x)\|_1^2 + \alpha \rho^{\pi^*_\mu}(\tilde x)\underbrace{\sum_{a\in A}[\pi^*_\mu(a|\tilde x)-\hat \pi(a|\tilde x)]Q^i_{\mu,\pi^*_\mu}(\tilde x,a)]}_{<0 \textrm{ as $Q^i_{\mu,\pi^*_\mu}(\tilde x,a)=Q^i_{\pi^*_\mu}(\tilde x,a)$}}
\end{align*}
So there exists $c>0$ and $d<0$ such that $V^i_{\mu,\pi^*_\mu}(h_{\textrm{init}})-V^i_{\mu,\hat \pi_\alpha}(h_{\textrm{init}})\leq c \alpha^2 + d \alpha = c \alpha (\alpha + \frac{d}{c})$. And finally there exists $\alpha>0$ such that $V^i_{\mu,\pi^*_\mu}(h_{\textrm{init}})-V^i_{\mu,\hat \pi_\alpha}(h_{\textrm{init}})<0$ which contradicts the fact that $\pi^*_\mu$ is a Nash for the game on reward $r^i_{\mu,\pi}(h, a)$.

From theorem~\ref{theorem_nash_div} we can conclude that for all $\mu \not\in \Pi^*$, we have $\Xi(F(\mu), \mu) > 0$ this directly imply that for all $\mu \not\in \Pi^*$ we have $\min_{\pi^*\in \Pi^*} \Xi(\pi^*, F(\mu)) - \min_{\pi^*\in \Pi^*} \Xi(\pi^*, \mu) < 0$.

\end{proof}

\subsection{Convergence to the Nash}
We want to show that the sequence of policies $\pi_k = F^k(\pi_0)$ converges to a Nash equilibrium of the game. And we suppose that all policies $\pi^*$ are interior.

Under these conditions, we have the following properties:
\begin{itemize}
    \item $F(.)$ is a continuous map on the interior of the simplex (see section~\ref{continuity_of_F}),
    \item for all $\mu \not\in \Pi^*$ we have $\min_{\pi^*\in \Pi^*} \Xi(\pi^*, F(\mu)) - \min_{\pi^*\in \Pi^*} \Xi(\pi^*, \mu) < 0$.
    \item $\mu \rightarrow \min_{\pi^*\in \Pi^*} \Xi(\pi^*, F(\mu))$ is a positive function infinite on the border of the simplex continuous in $\mu$.
    \item $\mu \rightarrow \Delta V (\mu) = \min_{\pi^*\in \Pi^*} \Xi(\pi^*, F(\mu)) - \min_{\pi^*\in \Pi^*} \Xi(\pi^*, \mu)$ is continuous in $\mu$.
\end{itemize}

Let us write $\bar\Omega_c = \{\mu|\; \min_{\pi^*\in \Pi^*} \Xi(\pi^*, \mu) \leq c\}$ and $\Omega_c = \{\mu|\; \min_{\pi^*\in \Pi^*} \Xi(\pi^*, \mu) \leq c\}$. For all finite $c$ the set $\Omega_c$ is closed and bounded set (and $\bar\Omega_c$ is an open bounded set). Then $\Omega_c$ is a compact set.

Let us consider that $C = \min_{\pi^*\in \Pi^*} \Xi(\pi^*, \pi_0)$. Since $\Delta V (\mu)<0$ then the sequence of $\min_{\pi^*\in \Pi^*} \Xi(\pi^*, \pi_k)$ converges to $c$.

By contradiction let us suppose that $c>0$. This means that all the $\pi_k$ are all in the closed set $\Omega_{C,c} = \bar\Omega_C \backslash \Omega_c$. The set $\Omega_{C,c}$ is bounded and thus is a compact. The image of $\Omega_{C,c}$ through $\Delta V (.)$ (which is a continuous map) is a compact $K$ and $k_{\max} = \sup_{x \in K}<0$. This means that $\forall k, \Delta V (\pi_k)\leq k_{\max}$

This means that
\begin{align*}
    \min_{\pi^*\in \Pi^*} \Xi(\pi^*, \pi_k) &\leq \min_{\pi^*\in \Pi^*} \Xi(\pi^*, \pi_0) + \sum \limits_{i=0}^{k-1}\Delta V (\pi_i)\\
    &\leq C + k\times k_{\max}
\end{align*}
This contradicts the fact that $c>0$ as there exists $k$ such that $C + k\times k_{\max}<c$.

In the end $k \rightarrow \min_{\pi^*\in \Pi^*} \Xi(\pi^*, \pi_k)$ converges to $0$ and $\pi_k$ converges to $\Pi^*$
\newpage

\section{Monotone Games}
\label{monotone_games}
In this section, we prove that zero-sum, zero-sum $N$-player polymatrix games and games where the profit of one player is decoupled from the interaction with the opponent are monotone.

(i) zero-sum two-player games, i.e., $V^1_{\pi} = -V^2_{\pi}$, the monotonicity condition becomes:

$\Omega^i(\pi, \mu) = V^i_{\pi^i, \pi^{-i}} (h_{\textrm{init}}) - V^i_{\mu^i, \pi^{-i}} (h_{\textrm{init}}) - V^i_{\pi^i, \mu^{-i}} (h_{\textrm{init}}) + V^i_{\mu^i, \mu^{-i}} (h_{\textrm{init}})$

It is easy to notice that $\Omega^1(\pi, \mu) = -\Omega^2(\pi, \mu) $ as $\forall \pi, \mu$ $V^1_{\mu^1, \pi^{2}} (h_{\textrm{init}}) = -V^2_{\mu^1, \pi^{2}} (h_{\textrm{init}})$

(ii) in zero-sum $N$-player polymatrix games, i.e., when the value can be decomposed in a sum of pairwise interactions $V^i_{\pi} = \sum \limits_{j \neq i} \tilde V^i_{\pi^i, \pi^j}$ with $\tilde V^i_{\pi^i, \pi^j} = -\tilde V^j_{\pi^j, \pi^i}$

In that case:
\begin{align*}
    \sum \limits_{i\in \{1,\dots,N\}} \Omega^i(\pi, \mu) &= \sum \limits_{i\in \{1,\dots,N\}} \sum \limits_{j \neq i} [\tilde V^i_{\pi^i, \pi^j}(h_{\textrm{init}}) - V^i_{\pi^i, \mu^j}(h_{\textrm{init}}) - V^i_{\mu^i, \pi^j}(h_{\textrm{init}}) + V^i_{\mu^i, \mu^j}(h_{\textrm{init}})]\\
    &\sum \limits_{i\in \{1,\dots,N\}} \sum \limits_{j < i} [\tilde V^i_{\pi^i, \pi^j}(h_{\textrm{init}}) - V^i_{\pi^i, \mu^j}(h_{\textrm{init}}) - V^i_{\mu^i, \pi^j}(h_{\textrm{init}}) + V^i_{\mu^i, \mu^j}(h_{\textrm{init}})]\\
    &\qquad\qquad\qquad\qquad + [\tilde V^j_{\pi^j, \pi^i}(h_{\textrm{init}}) - V^j_{\pi^j, \mu^i}(h_{\textrm{init}}) - V^j_{\mu^j, \pi^i}(h_{\textrm{init}}) + V^j_{\mu^j, \mu^i}(h_{\textrm{init}})]\\
    =\sum \limits_{i\in \{1,\dots,N\}} \sum \limits_{j < i}  0 = 0
\end{align*}

(iii) in games where the profit of one player is decoupled from the interaction with the opponents, i.e., when the value can be decomposed in $V^i_{\pi} = \bar V^i_{\pi^i} + \bar V^i_{\pi^{-i}}$.

In this case $\Omega^i(\pi, \mu) = V^i_{\pi^i, \pi^{-i}} (h_{\textrm{init}}) - V^i_{\mu^i, \pi^{-i}} (h_{\textrm{init}}) - V^i_{\pi^i, \mu^{-i}} (h_{\textrm{init}}) + V^i_{\mu^i, \mu^{-i}} (h_{\textrm{init}}) = \bar V^i_{\pi^i} + \bar V^i_{\pi^{-i}} - [\bar V^i_{\mu^i} + \bar V^i_{\pi^{-i}}] - [\bar V^i_{\pi^i} + \bar V^i_{\mu^{-i}}] + \bar V^i_{\mu^i} + \bar V^i_{\mu^{-i}} = 0$

\newpage
\section{Empirical set up}
\label{empirical_setup}

\begin{figure}[!htb]
\rule{\linewidth}{1.0pt}
\begin{tabular}{ l l }
  batch size & the batch size used to update the policy and the $Q$-function \\
  batch size actor & to fasten the experiment we use an actor critic setup with 512 actors all\\
  & of them produces batch of trajectories. \\
  eta & the parameter used to transform the reward \\
  lambda Retrace & the parameter of retrace \\
  epsilon greedy & the epsilon greedy policy parameter \\
  Gradient clipping value & all gradient on the policy and the value are clipped \\
  max number of steps & the maximum number of steps \\
  reward re-centered every  & the recentering periode \\
  policy learning rate start & the policy learning rate starts its exponential decay at this value \\
  policy learning rate end & the exponential decay ends at this value \\
  threshold NeuRD & the NeuRD threshold \\
  Neural net structure for $\pi$ and $Q$  & MLP with 2 hidden layer of 128 unit\\
  \bottomrule
\end{tabular}
\caption{This table summarize the meaning of all hyperparameters of the algorithm}
\end{figure}
\subsection{Estimation of the Critique}
The update on a $Q$-function is done such as to minimize the $l_2$-norm between $\hat Q^i_{\bm{w}}(x_l, a_l)$ a retrace target~\cite{espeholt2018impala} constructed using the sequence or policies and rewards (written $Q^i_{\textrm{retrace target}}$). 
\begin{align}
w_i \leftarrow w_i + \alpha \sum_{l=0}^K \tau(x_l) \times \Big[ Q^i_{\textrm{retrace target}}(x_l, a_l) - \hat Q^i_{\bm{w}}(x_l, a_l) \Big] \partial_{w_i}\hat Q^i_{\bm{w}}(x_l, a_l).
\end{align}
\subsection{Low Variance Unbiased Estimate of the Expected Payoff}
This version of NeuRD uses an unbiased estimate and low-variance of the return~\citep{schmid2018variance}. We will account that the policy $\pi^i_{\theta_i}(a^i|x_l)$ we want to evaluate can be different from the one we are sampling $\nu^i_{\theta_i}(a^i|x_l)$ and the unbiased return is computed as follow in the case of the reward transform for zero-sum games as follows:
\begin{align*}
    &\bar Q^i_{\bm{w}}(x_l, a)\\
    &= \left\{
    \begin{array}{ll}
        - \eta \log(\pi(a|x_l)) + \hat Q^i_{\bm{w}}(x_l, a) & \mbox{if } a \neq a_l \\
        &\\
        - \eta \log(\pi(a|x_l)) + \hat Q^i_{\bm{w}}(x_l, a)\\
        + \frac{1}{\nu(a^i|x_l)} \left[r^i(x_l, a_l) + E_{b \sim \pi(.|x_{l+1})} [\bar Q^i_{\bm{w}}(x_{l+1}, b) ] - \hat Q^i_{\bm{w}}(x_l, a) \right]& \mbox{if } a = a_l
    \end{array}
\right\}
\end{align*}
if $\tau(x_l)=i$
\begin{align*}
    &\bar Q^i_{\bm{w}}(x_l, a) = \left\{
    \begin{array}{ll}
        \eta \log(\pi(a|x_l)) & \mbox{if } a \neq a_l \\
        \eta \log(\pi(a|x_l)) + \frac{1}{\nu(a^i|x_l)} \left[r^i(x_l, a_l) + E_{b \sim \pi(.|x_{l+1})} [\bar Q^i_{\bm{w}}(x_{l+1}, b) ]\right]& \mbox{if } a = a_l
    \end{array}
\right\}
\end{align*}
if $\tau(x_l)\neq i$

By convention, we will have that $\forall a, \bar Q^i_{\bm{w}}(x_{K+1},a) = 0$

\subsection{NeuRD update}
In the second step we correct for the reach probability:
$$\tilde Q^i_{\bm{w}}(x_l, a) = \left(\prod \limits_{k=0}^{l-1} \frac{\mathbf{1}_{\tau{x_l} \neq i}\pi(a_k|x_l) + \mathbf{1}_{\tau{x_l} = i}}{\nu(a_k|x_l)} \right) \bar Q^i_{\bm{w}}(x_l, a)$$
Where $\pi$ and $\nu$ are the policies at the player's turn.

The policy update follows the following equation:
\begin{align}
\theta_i \leftarrow \theta_i + \alpha \sum_{l=0}^K \mathbf{1}_{\tau{x_l} = i} \sum_{a^i} \partial_{\theta_i} \xi^i_{\theta_i}(a^i|x_l)  \left[\tilde Q^i_{\bm{w}}(x_l, a^i)\right].
\end{align}

Where $\xi^i_{\theta_i}$ is the logit of policy and $\textrm{softmax}(\xi^i_{\theta_i}) = \pi^i_{\theta_i}$. The NeuRD update rule require an additional clipping parameter to avoid numerical instabilities. We leave the reader to~\cite{omidshafiei2019neural}.

Last, we obtained the exploration policy $\nu$ by doing an epsilon greedy policy $\pi$.

\newpage
\section{Experiments}
\label{app_experiments}
\subsection{Tabular Experiments}
In this section we present experiments on Kuhn poker and Leduc poker that illustrate the convergence property for the dynamics on the transformed reward.
\subsubsection{Tabular Experiments With a Fixed Regularization (reward transformation for zero-sum games)}
The two following figures illustrate the method described in section~\ref{reward_transform}. The following experiment Shows the FoReL dynamics on Kuhn poker :
\begin{figure}[!htb]
  \vspace{-10pt}
  \begin{center}
    \includegraphics[width=0.7\textwidth]{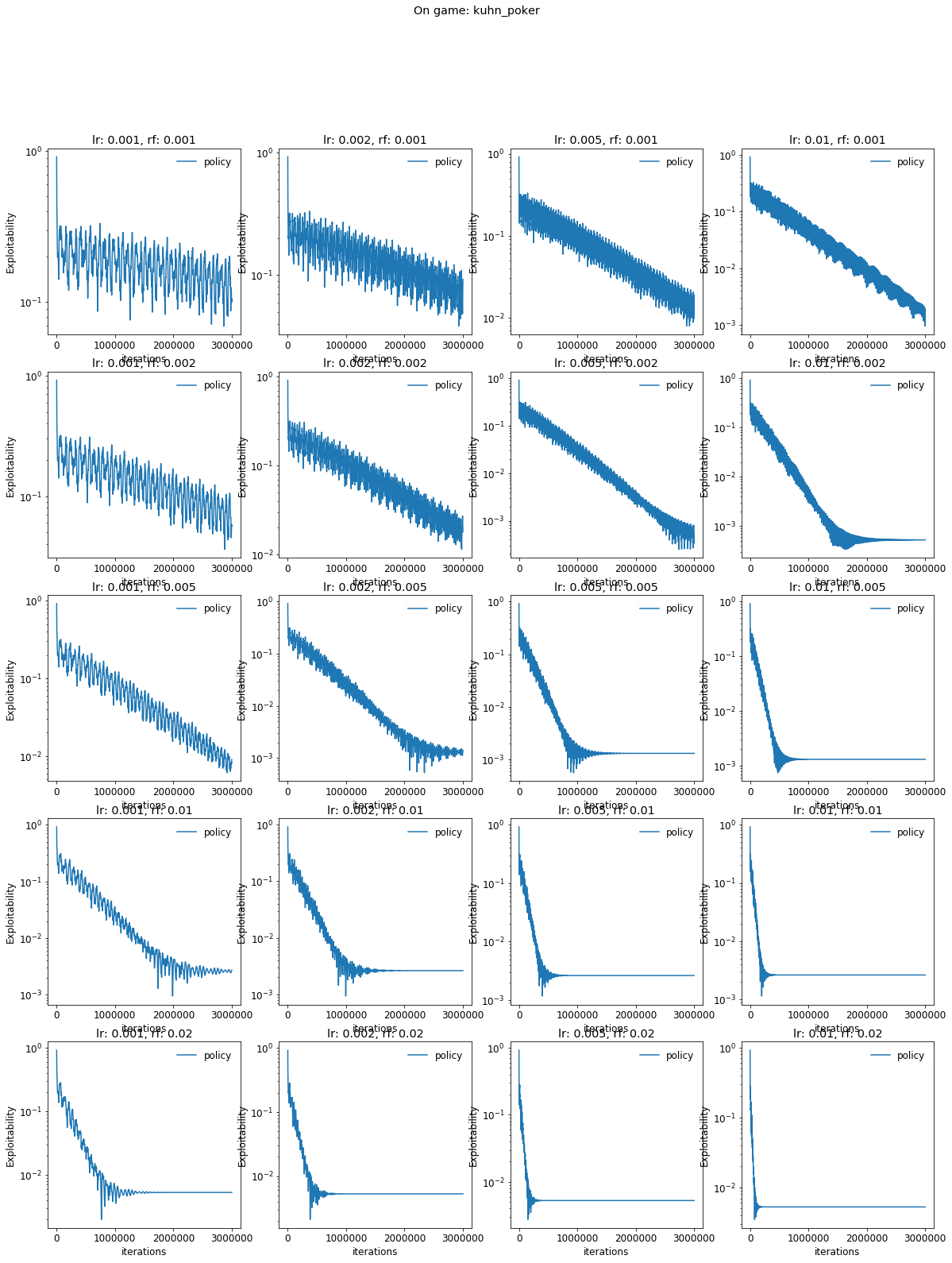}
  \end{center}
  \vspace{-10pt}
  \caption{Kuhn Poker.}
  \vspace{-0pt}
  \label{NeuRD_fixed_regularization}
\end{figure}
\newpage
And the following experiment Shows the FoReL dynamics on Leduc poker :
\begin{figure}[!htb]
  \vspace{-10pt}
  \begin{center}
    \includegraphics[width=0.8\textwidth]{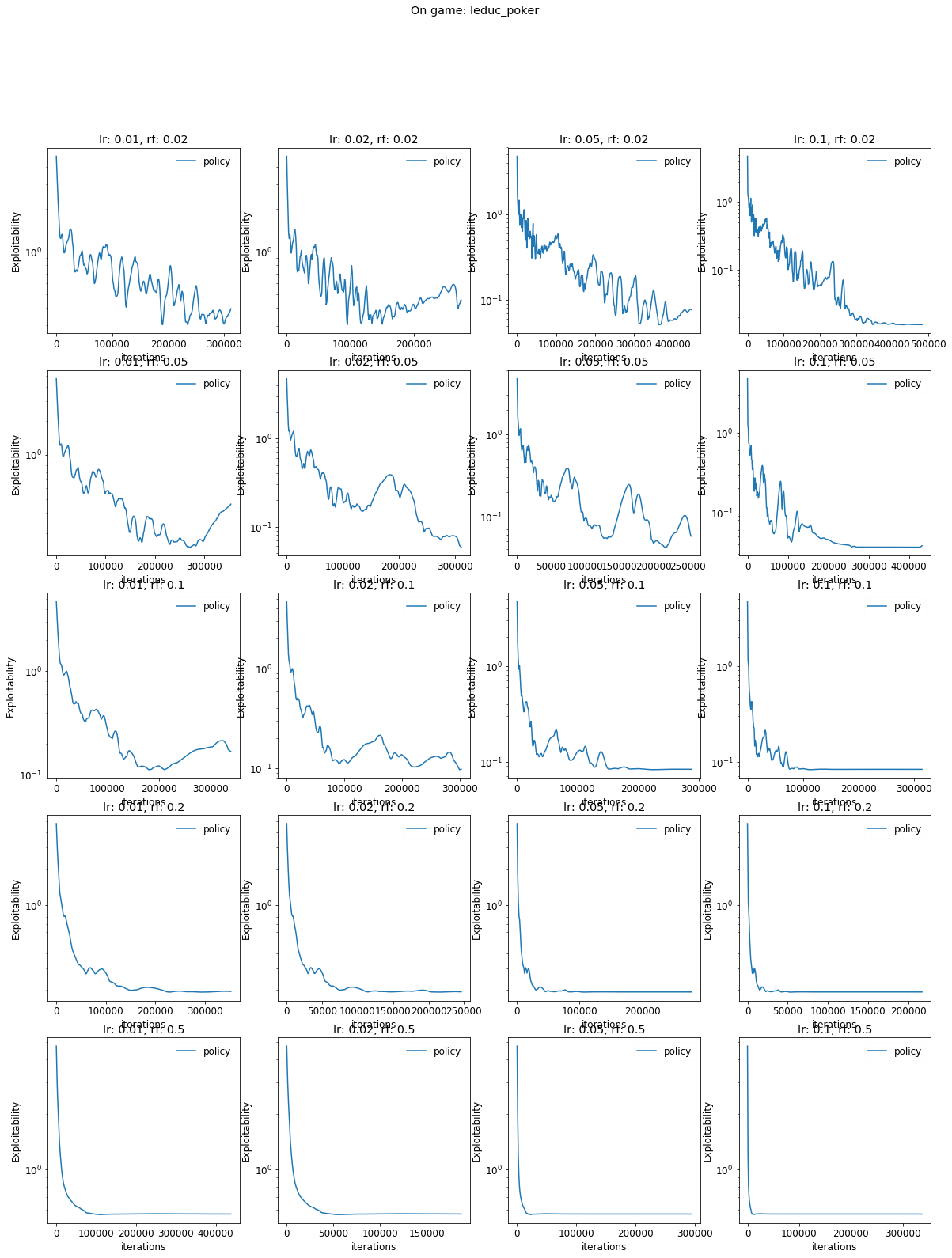}
  \end{center}
  \vspace{-10pt}
  \caption{Leduc Poker.}
  \vspace{-0pt}
  \label{NeuRD_fixed_regularization}
\end{figure}

\newpage
\subsubsection{Tabular Experiments With a Fixed Regularization (reward transformation for monotone games)}
The two following figures illustrate the method described in section~\ref{reward_transform}. The following experiment Shows the FoReL dynamics on Kuhn poker :
\begin{figure}[!htb]
  \vspace{-10pt}
  \begin{center}
    \includegraphics[width=0.9\textwidth]{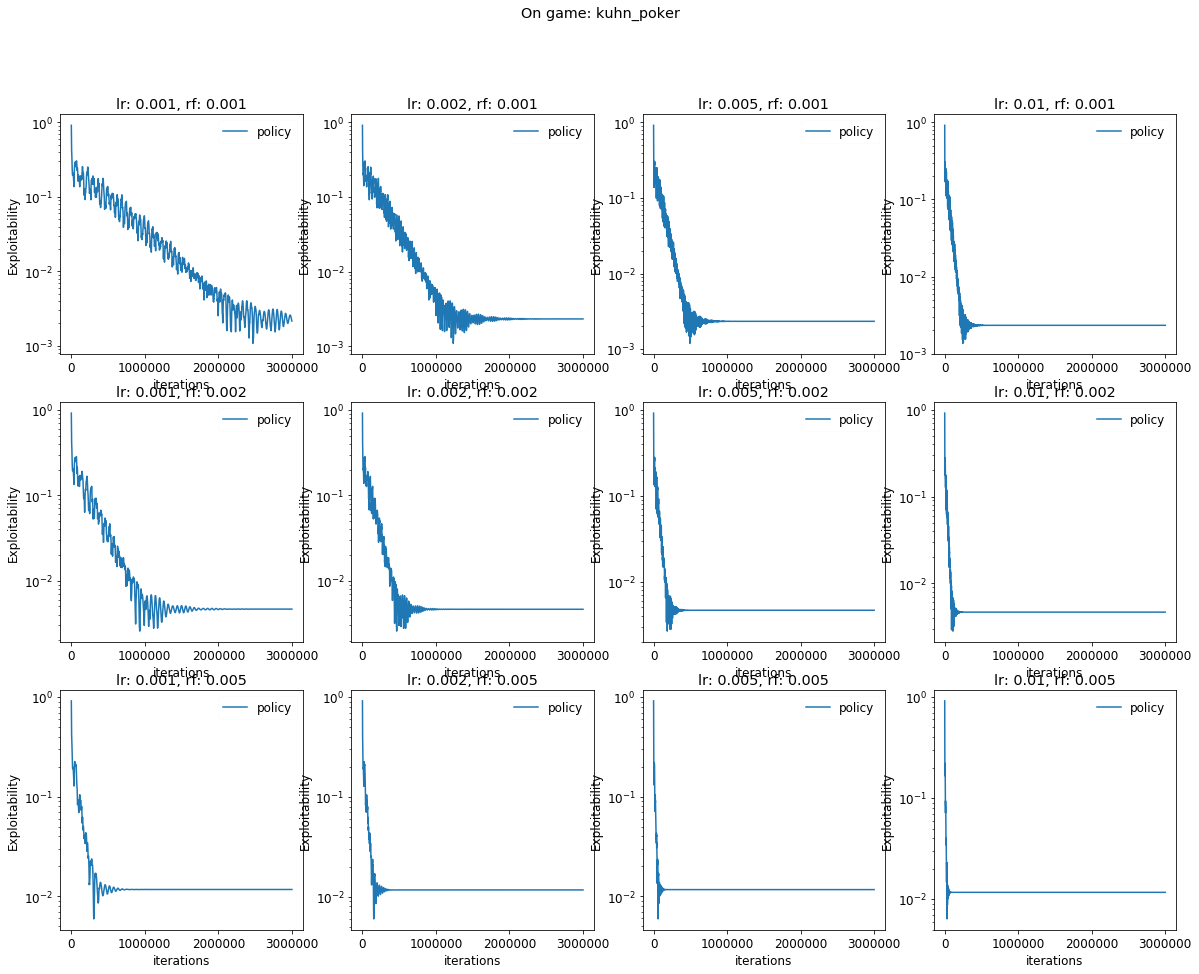}
  \end{center}
  \vspace{-10pt}
  \caption{Kuhn Poker.}
  \vspace{-0pt}
  \label{NeuRD_fixed_regularization}
\end{figure}
\newpage
And the following experiment Shows the FoReL dynamics on Leduc poker :
\begin{figure}[!htb]
  \vspace{-10pt}
  \begin{center}
    \includegraphics[width=0.9\textwidth]{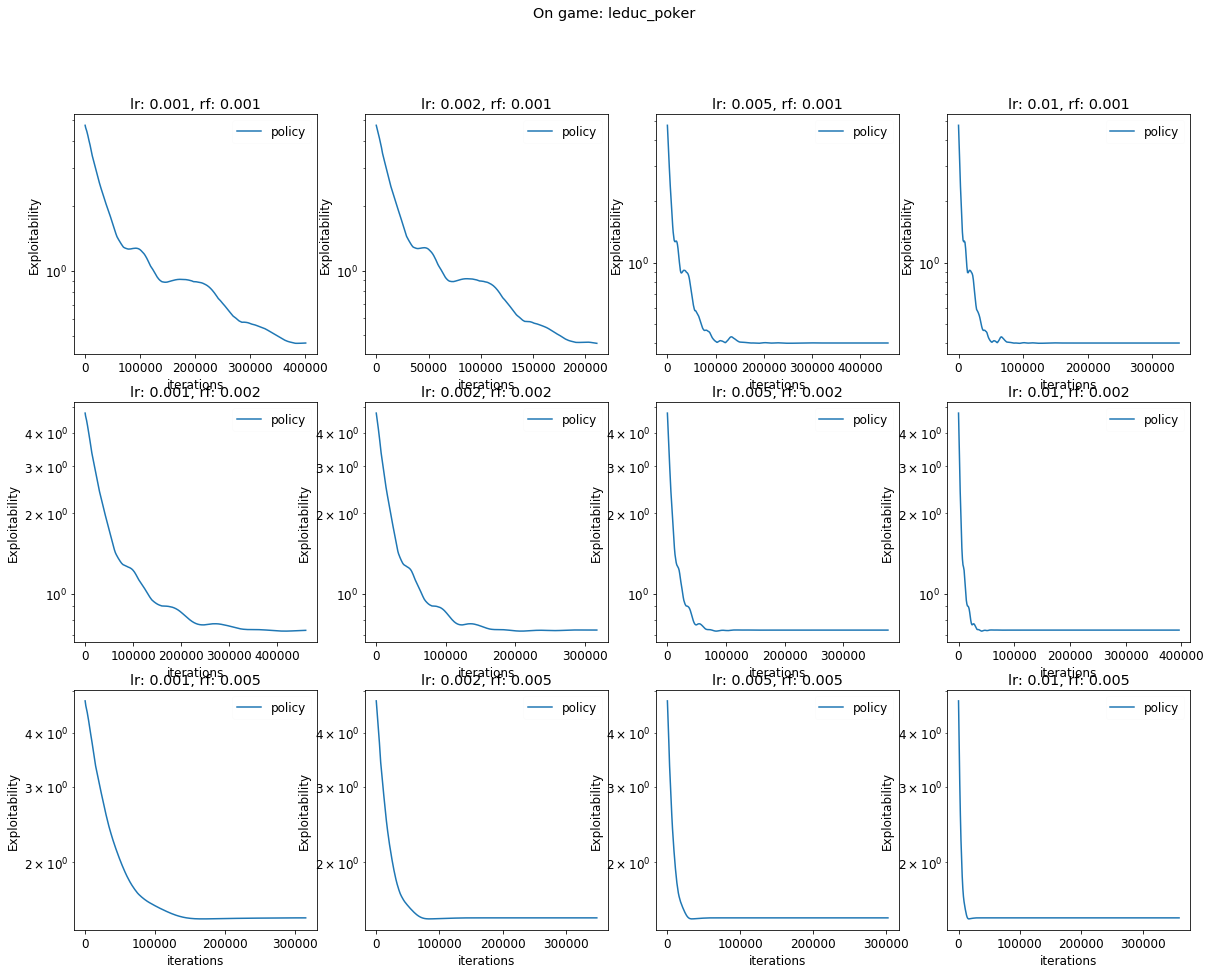}
  \end{center}
  \vspace{-10pt}
  \caption{Leduc Poker.}
  \vspace{-0pt}
  \label{NeuRD_fixed_regularization}
\end{figure}

\newpage
\subsubsection{Tabular Experiments With an addaptive Regularization (reward transformation for monotone games and the reward is changed every $20000$ steps)}
The two following figures illustrate the method described in section~\ref{convergence_to_an_equilibrium}. And the following experiment Shows the FoReL dynamics on Kuhn poker :
\begin{figure}[!htb]
  \vspace{-10pt}
  \begin{center}
    \includegraphics[width=0.8\textwidth]{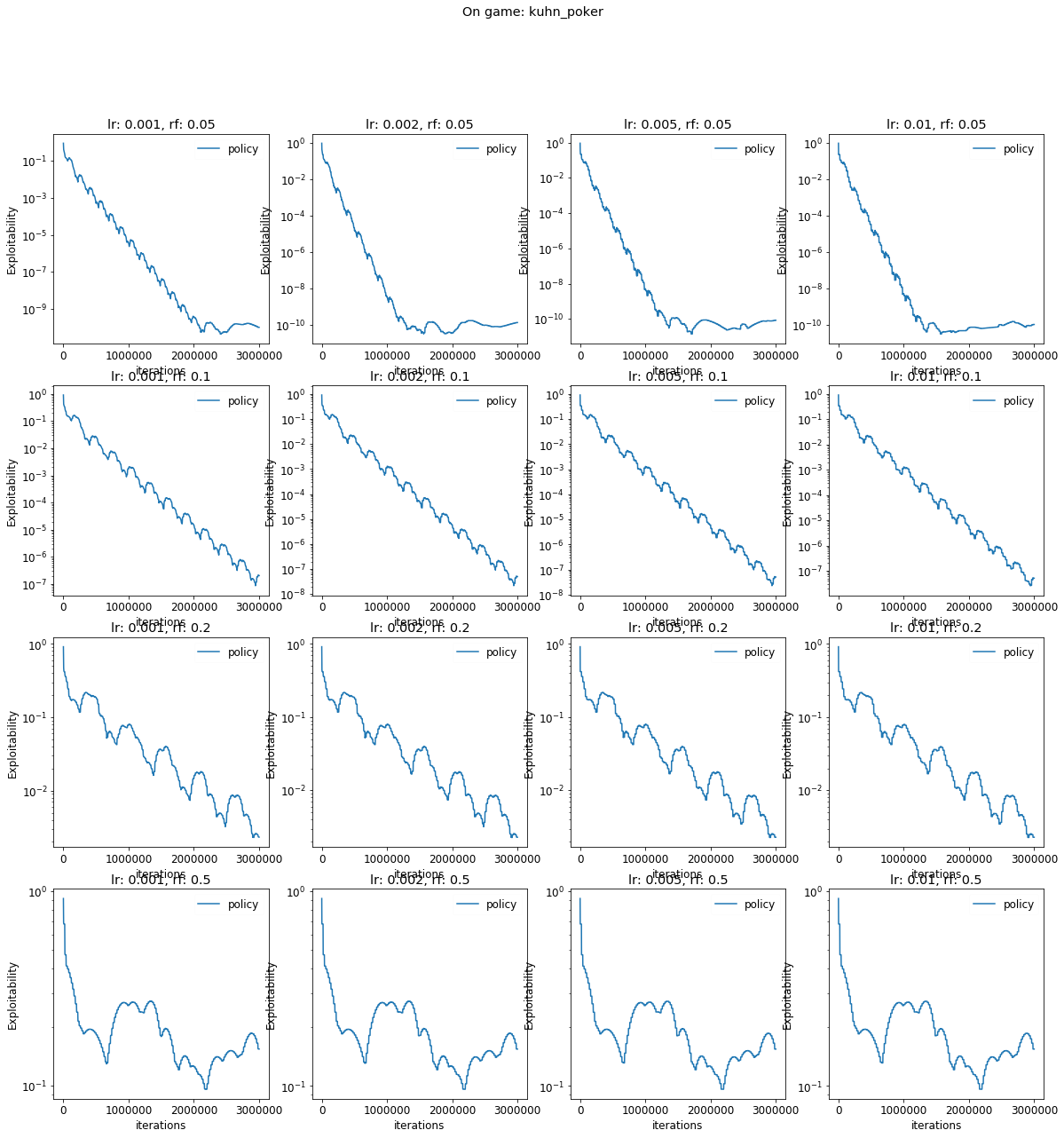}
  \end{center}
  \vspace{-10pt}
  \caption{Kuhn Poker.}
  \vspace{-0pt}
  \label{NeuRD_fixed_regularization}
\end{figure}
\newpage
And the following experiment Shows the FoReL dynamics on Leduc poker :
\begin{figure}[!htb]
  \vspace{-10pt}
  \begin{center}
    \includegraphics[width=0.8\textwidth]{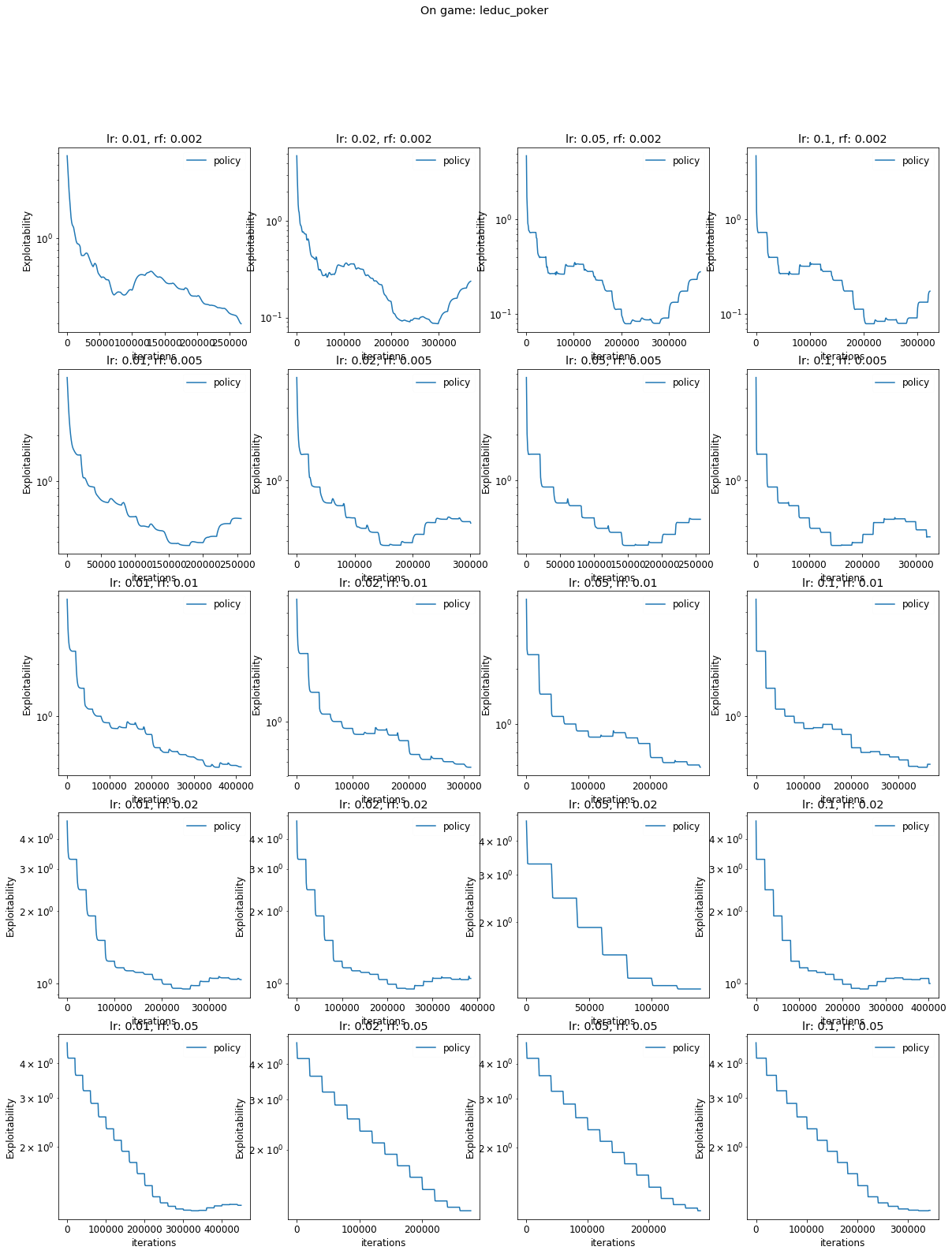}
  \end{center}
  \vspace{-10pt}
  \caption{Leduc Poker.}
  \vspace{-0pt}
  \label{NeuRD_fixed_regularization}
\end{figure}

\newpage
\subsection{Deep Reinforcement Learning Experiments with player only regularization}
In this section, we run NeuRD on Leduc poker, Kuhn poker, Liars Dice and GoofSpiel with the reward transform for monotone games. The reward is adapted every 75000 steps.

\begin{figure}[!htb]
\rule{\linewidth}{1.0pt}
\begin{tabular}{ l l }
  batch size & 256 \\
  batch size actor & 32 \\
  eta & $\{1.0, 0.5, 0.2, 0.05, 0.02, 0.0\}$ \\
  lambda Retrace & 1.0 \\
  epsilon greedy & 0.1 \\
  Gradient clipping value & 1000 \\
  max number of steps & 4000000 \\
  reward re-centered every  & 75000 \\
  policy learning rate start & 0.01 \\
  policy learning rate end & 0.00001 \\
  threshold NeuRD & 2 \\
  Neural net structure for $\pi$ and $Q$  & MLP with 2 hidden layer of 128 unit\\
  \bottomrule
\end{tabular}
\end{figure}

Experiment on Liars Dice:

\begin{figure}[!htb]
  \vspace{-10pt}
  \begin{center}
    \includegraphics[width=0.9\textwidth]{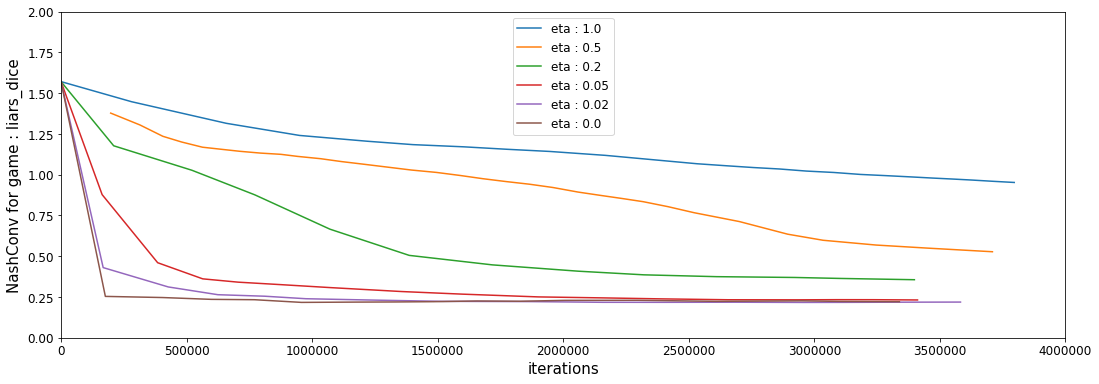}
  \end{center}
  \vspace{-10pt}
  \caption{Liars Dice.}
  \vspace{-0pt}
  \label{NeuRD_fixed_regularization}
\end{figure}

Experiment on Leduc Poker:

\begin{figure}[!htb]
  \vspace{-10pt}
  \begin{center}
    \includegraphics[width=0.9\textwidth]{nn_experiments_player_only_reg/nn_leduc_iteration.png}
  \end{center}
  \vspace{-10pt}
  \caption{Leduc Poker.}
  \vspace{-0pt}
  \label{NeuRD_fixed_regularization}
\end{figure}

\newpage
Experiment on Kuhn Poker:

\begin{figure}[!htb]
  \vspace{-10pt}
  \begin{center}
    \includegraphics[width=0.9\textwidth]{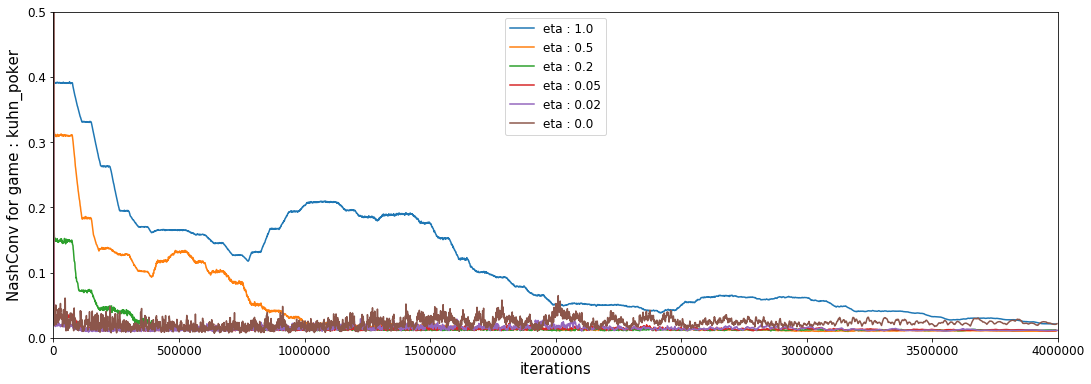}
  \end{center}
  \vspace{-10pt}
  \caption{Kuhn Poker.}
  \vspace{-0pt}
  \label{NeuRD_fixed_regularization}
\end{figure}

Experiment on Goofspiel:

\begin{figure}[!htb]
  \vspace{-10pt}
  \begin{center}
    \includegraphics[width=0.9\textwidth]{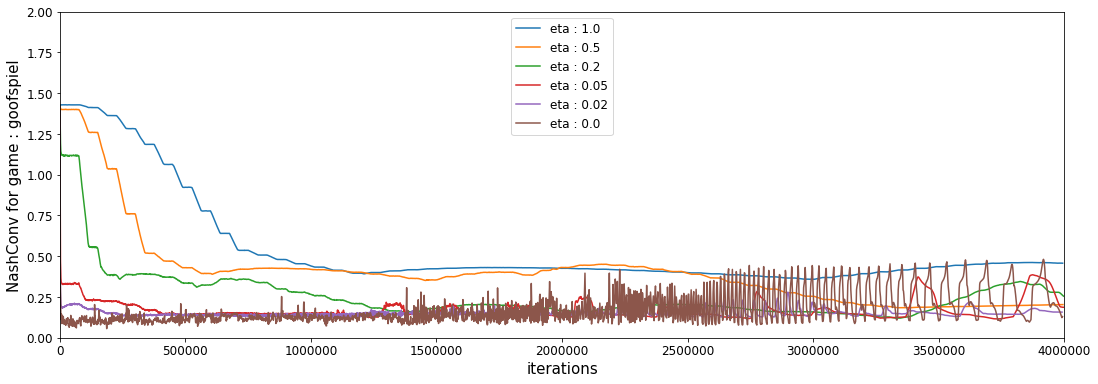}
  \end{center}
  \vspace{-10pt}
  \caption{Goofspiel $4$ cards.}
  \vspace{-0pt}
  \label{NeuRD_fixed_regularization}
\end{figure}

\newpage
In these experiments, we run NeuRD on Leduc poker, Kuhn poker, Liars Dice and GoofSpiel with the reward transform for monotone games with a constant regularization.

\begin{figure}[!htb]
\rule{\linewidth}{1.0pt}
\begin{tabular}{ l l }
  batch size & 256 \\
  batch size actor & 32 \\
  eta & $\{1.0, 0.5, 0.2, 0.05, 0.02, 0.0\}$ \\
  lambda Retrace & 1.0 \\
  epsilon greedy & 0.1 \\
  Gradient clipping value & 1000 \\
  max number of steps & 4000000 \\
  reward re-centered every  & Never \\
  policy learning rate start & 0.01 \\
  policy learning rate end & 0.00001 \\
  threshold NeuRD & 2 \\
  Neural net structure for $\pi$ and $Q$  & MLP with 2 hidden layer of 128 unit\\
  \bottomrule
\end{tabular}
\end{figure}

Experiment on Liars Dice:

\begin{figure}[!htb]
  \vspace{-10pt}
  \begin{center}
    \includegraphics[width=0.9\textwidth]{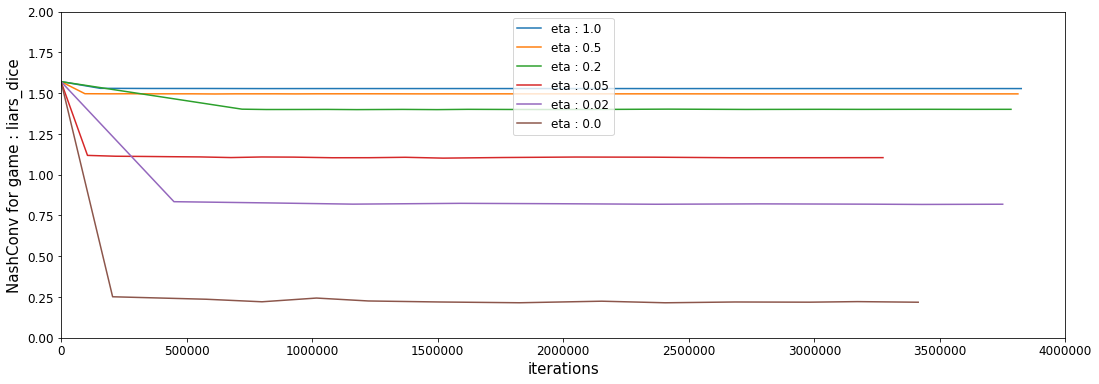}
  \end{center}
  \vspace{-10pt}
  \caption{Liars Dice.}
  \vspace{-0pt}
  \label{NeuRD_fixed_regularization}
\end{figure}

Experiment on Leduc Poker:

\begin{figure}[!htb]
  \vspace{-10pt}
  \begin{center}
    \includegraphics[width=0.9\textwidth]{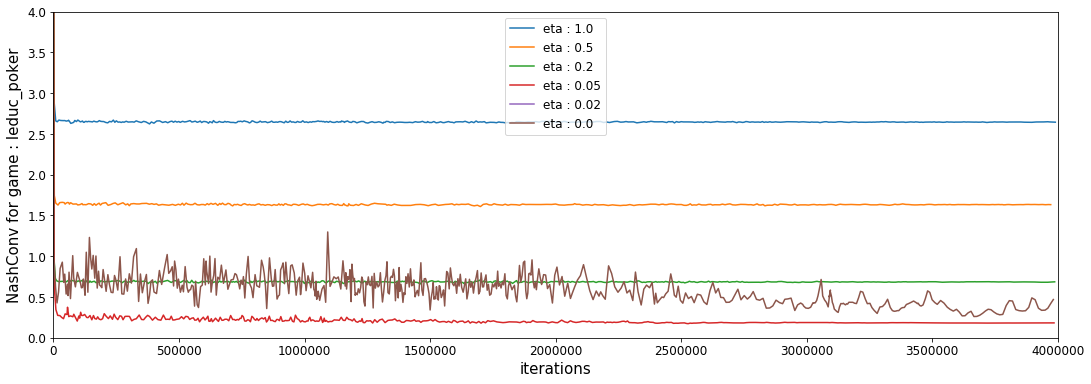}
  \end{center}
  \vspace{-10pt}
  \caption{Leduc Poker.}
  \vspace{-0pt}
  \label{NeuRD_fixed_regularization}
\end{figure}

\newpage
Experiment on Kuhn Poker:

\begin{figure}[!htb]
  \vspace{-10pt}
  \begin{center}
    \includegraphics[width=0.9\textwidth]{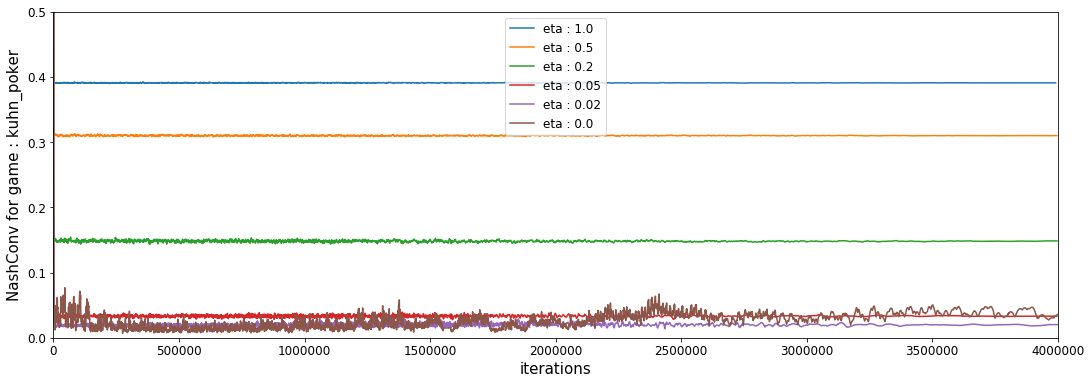}
  \end{center}
  \vspace{-10pt}
  \caption{Kuhn Poker.}
  \vspace{-0pt}
  \label{NeuRD_fixed_regularization}
\end{figure}

Experiment on Goofspiel:

\begin{figure}[!htb]
  \vspace{-10pt}
  \begin{center}
    \includegraphics[width=0.9\textwidth]{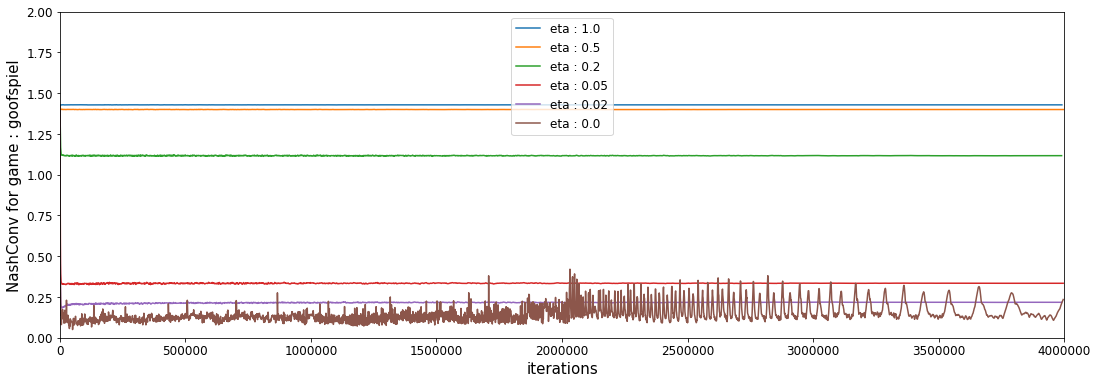}
  \end{center}
  \vspace{-10pt}
  \caption{Goofspiel $4$ cards.}
  \vspace{-0pt}
  \label{NeuRD_fixed_regularization}
\end{figure}

\newpage
In these experiments, we run NeuRD on Leduc poker, Kuhn poker, Liars Dice and GoofSpiel with the reward transform for monotone games with an exponential decay regularization to the regularization on the label.

\begin{figure}[!htb]
\rule{\linewidth}{1.0pt}
\begin{tabular}{ l l }
  batch size & 256 \\
  batch size actor & 32 \\
  eta exponential decay starting from $1.0$ until the target value & $\{1.0, 0.5, 0.2, 0.05, 0.02, 0.0\}$ \\
  lambda Retrace & 1.0 \\
  epsilon greedy & 0.1 \\
  Gradient clipping value & 1000 \\
  max number of steps & 4000000 \\
  reward re-centered every  & Never \\
  policy learning rate start & 0.01 \\
  policy learning rate end & 0.00001 \\
  threshold NeuRD & 2 \\
  Neural net structure for $\pi$ and $Q$  & MLP with 2 hidden layer of 128 unit\\
  \bottomrule
\end{tabular}
\end{figure}

Experiment on Liars Dice:

\begin{figure}[!htb]
  \vspace{-10pt}
  \begin{center}
    \includegraphics[width=0.9\textwidth]{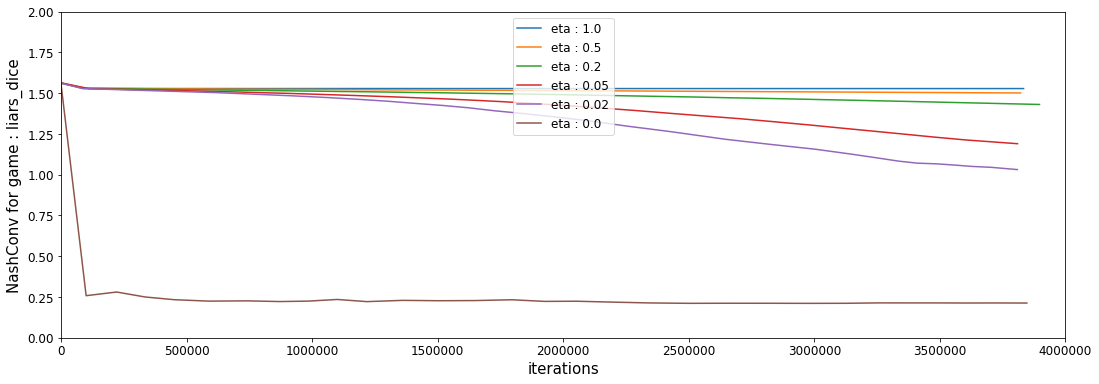}
  \end{center}
  \vspace{-10pt}
  \caption{Liars Dice.}
  \vspace{-0pt}
  \label{NeuRD_fixed_regularization}
\end{figure}

Experiment on Leduc Poker:

\begin{figure}[!htb]
  \vspace{-10pt}
  \begin{center}
    \includegraphics[width=0.9\textwidth]{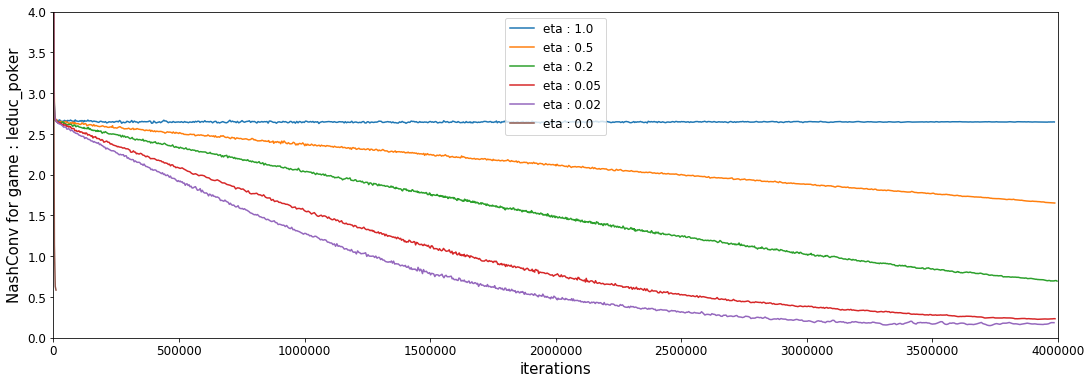}
  \end{center}
  \vspace{-10pt}
  \caption{Leduc Poker.}
  \vspace{-0pt}
  \label{NeuRD_fixed_regularization}
\end{figure}

\newpage
Experiment on Kuhn Poker:

\begin{figure}[!htb]
  \vspace{-10pt}
  \begin{center}
    \includegraphics[width=0.9\textwidth]{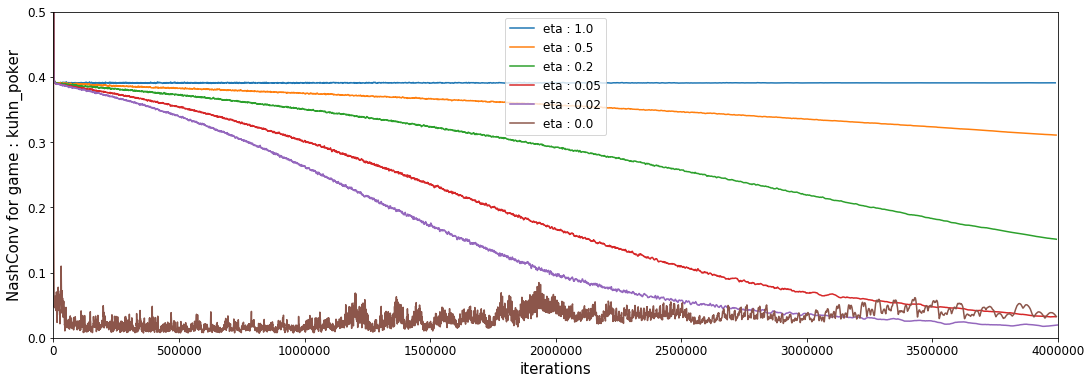}
  \end{center}
  \vspace{-10pt}
  \caption{Kuhn Poker.}
  \vspace{-0pt}
  \label{NeuRD_fixed_regularization}
\end{figure}

Experiment on Goofspiel:

\begin{figure}[!htb]
  \vspace{-10pt}
  \begin{center}
    \includegraphics[width=0.9\textwidth]{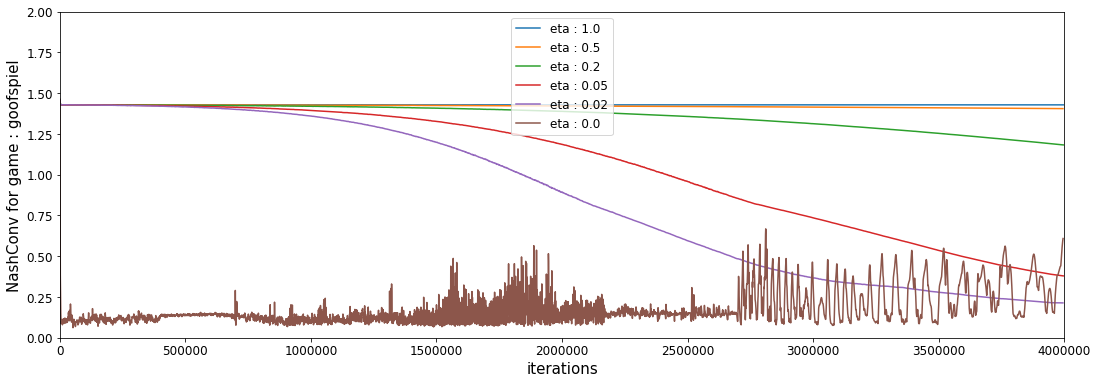}
  \end{center}
  \vspace{-10pt}
  \caption{Goofspiel $4$ cards.}
  \vspace{-0pt}
  \label{NeuRD_fixed_regularization}
\end{figure}

\newpage
\subsection{Deep Reinforcement Learning Experiments with two player regularization}
In these experiments, we run NeuRD on Leduc poker, Kuhn poker, Liars Dice and GoofSpiel with the reward transform for zero-sum games with an exponential decay regularization to the regularization on the label.

\begin{figure}[!htb]
\rule{\linewidth}{1.0pt}
\begin{tabular}{ l l }
  batch size & 256 \\
  batch size actor & 32 \\
  eta exponential decay starting from $1.0$ until the target value & $\{1.0, 0.5, 0.2, 0.05, 0.02, 0.0\}$ \\
  lambda Retrace & 1.0 \\
  epsilon greedy & 0.1 \\
  Gradient clipping value & 1000 \\
  max number of steps & 4000000 \\
  reward re-centered every  & Never \\
  policy learning rate start & 0.01 \\
  policy learning rate end & 0.00001 \\
  threshold NeuRD & 2 \\
  Neural net structure for $\pi$ and $Q$  & MLP with 2 hidden layer of 128 unit\\
  \bottomrule
\end{tabular}
\end{figure}

Experiment on Liars Dice:

\begin{figure}[!htb]
  \vspace{-10pt}
  \begin{center}
    \includegraphics[width=0.9\textwidth]{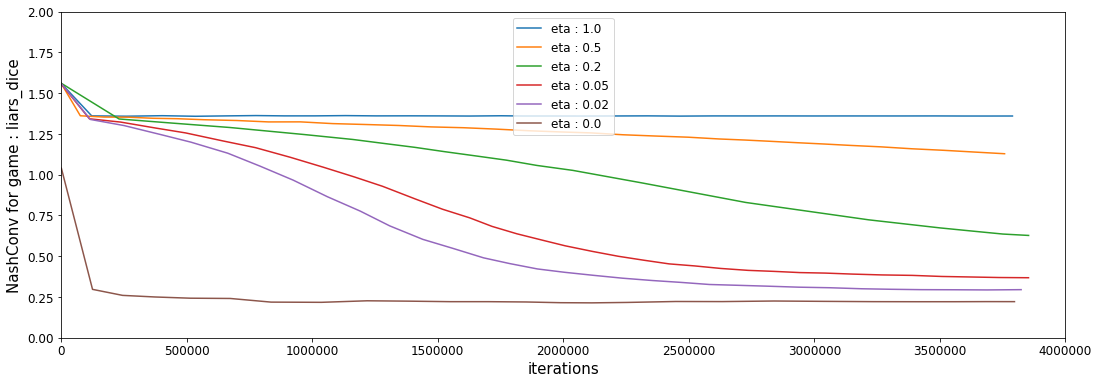}
  \end{center}
  \vspace{-10pt}
  \caption{Liars Dice.}
  \vspace{-0pt}
  \label{NeuRD_fixed_regularization}
\end{figure}

Experiment on Leduc Poker:

\begin{figure}[!htb]
  \vspace{-10pt}
  \begin{center}
    \includegraphics[width=0.9\textwidth]{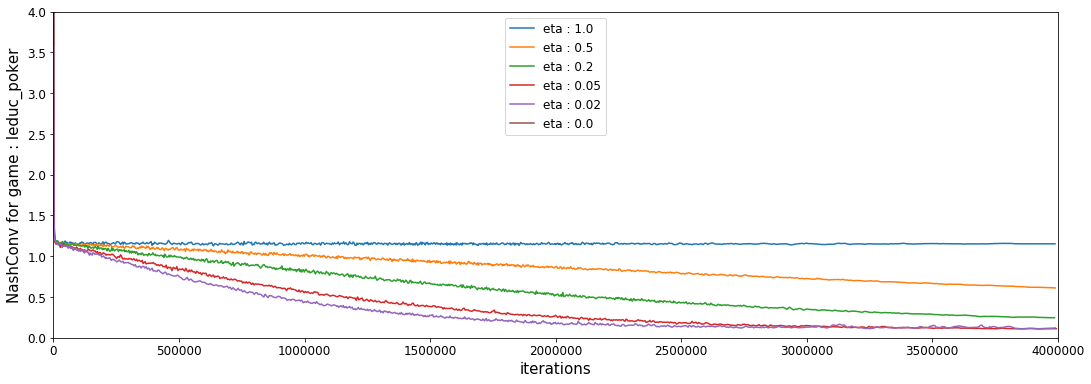}
  \end{center}
  \vspace{-10pt}
  \caption{Leduc Poker.}
  \vspace{-0pt}
  \label{NeuRD_fixed_regularization}
\end{figure}

\newpage
Experiment on Kuhn Poker:

\begin{figure}[!htb]
  \vspace{-10pt}
  \begin{center}
    \includegraphics[width=0.9\textwidth]{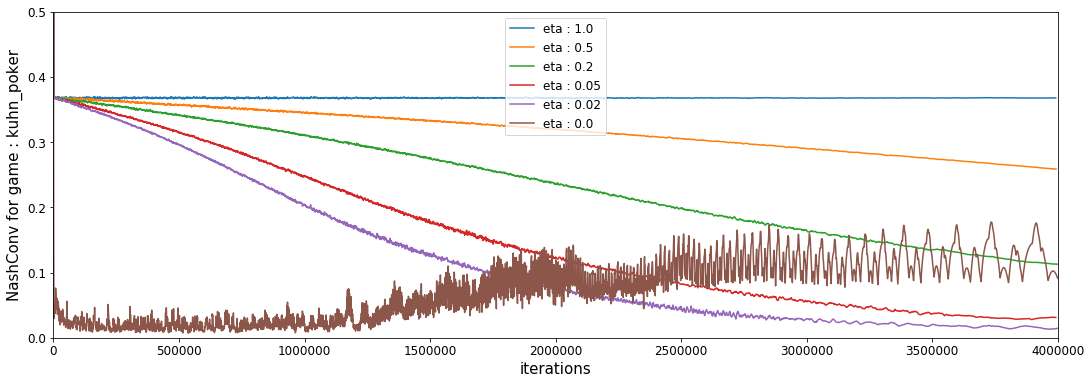}
  \end{center}
  \vspace{-10pt}
  \caption{Kuhn Poker.}
  \vspace{-0pt}
  \label{NeuRD_fixed_regularization}
\end{figure}
Experiment on Goofspiel:
\begin{figure}[!htb]
  \vspace{-10pt}
  \begin{center}
    \includegraphics[width=0.9\textwidth]{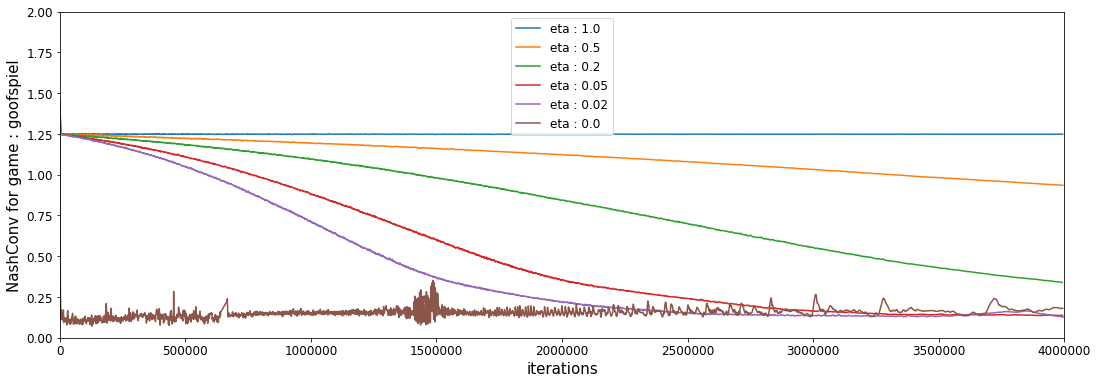}
  \end{center}
  \vspace{-10pt}
  \caption{Goofspiel $4$ cards.}
  \vspace{-0pt}
  \label{NeuRD_fixed_regularization}
\end{figure}

\newpage
\subsection{Deep Reinforcement Learning Experiments with two player regularization with a large batch}
In these experiments, we run NeuRD on Leduc poker, Kuhn poker, Liars Dice and GoofSpiel with the reward transform for zero-sum games with an exponential decay regularization to the regularization on the label.

\begin{figure}[!htb]
\rule{\linewidth}{1.0pt}
\begin{tabular}{ l l }
  batch size & 2018 \\
  batch size actor & 32 \\
  eta exponential decay starting from $1.0$ until the target value & $\{1.0, 0.5, 0.2, 0.05, 0.02, 0.0\}$ \\
  lambda Retrace & 1.0 \\
  epsilon greedy & 0.1 \\
  Gradient clipping value & 1000 \\
  max number of steps & 4000000 \\
  reward re-centered every  & Never \\
  policy learning rate start & 0.01 \\
  policy learning rate end & 0.00001 \\
  threshold NeuRD & 2 \\
  Neural net structure for $\pi$ and $Q$  & MLP with 2 hidden layer of 128 unit\\
  \bottomrule
\end{tabular}
\end{figure}

Experiment on Liars Dice:

\begin{figure}[!htb]
  \vspace{-10pt}
  \begin{center}
    \includegraphics[width=0.9\textwidth]{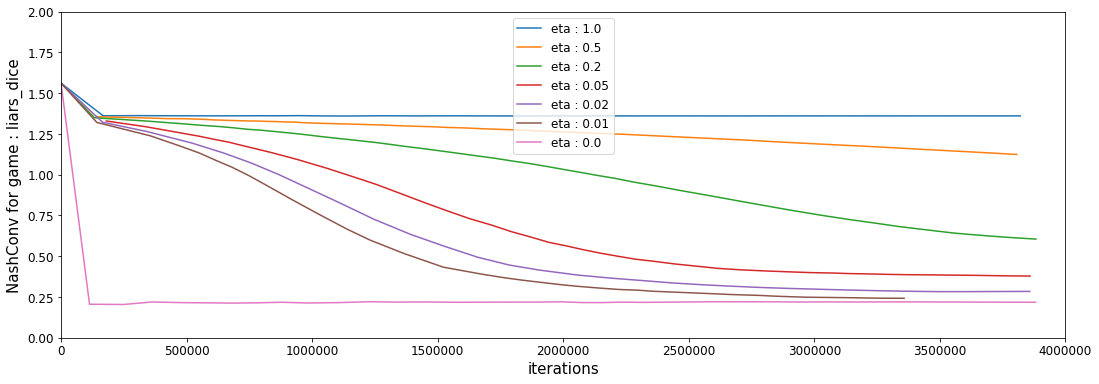}
  \end{center}
  \vspace{-10pt}
  \caption{Liars Dice.}
  \vspace{-0pt}
  \label{NeuRD_fixed_regularization}
\end{figure}

Experiment on Leduc Poker:

\begin{figure}[!htb]
  \vspace{-10pt}
  \begin{center}
    \includegraphics[width=0.9\textwidth]{nn_experiments_2_players_reg/nn_experiment_two_player_reg_decay_large_batch/leduc_player_only_reg_2048.png}
  \end{center}
  \vspace{-10pt}
  \caption{Leduc Poker.}
  \vspace{-0pt}
  \label{NeuRD_fixed_regularization}
\end{figure}

\newpage
Experiment on Kuhn Poker:

\begin{figure}[!htb]
  \vspace{-10pt}
  \begin{center}
    \includegraphics[width=0.9\textwidth]{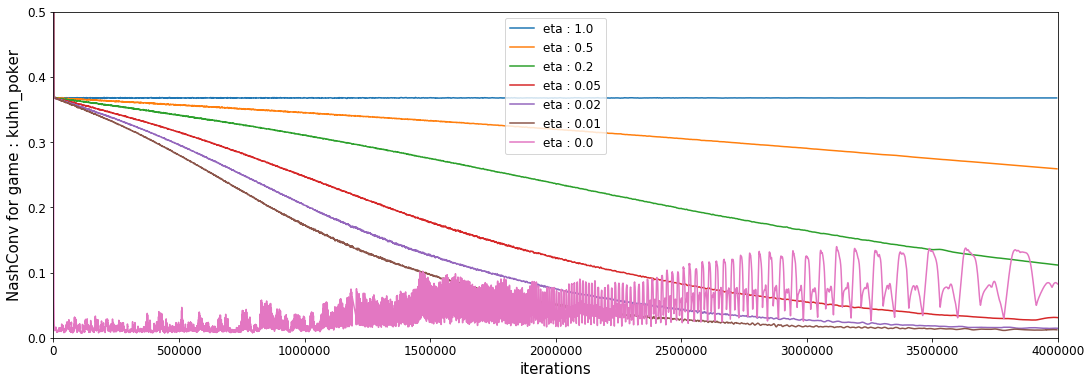}
  \end{center}
  \vspace{-10pt}
  \caption{Kuhn Poker.}
  \vspace{-0pt}
  \label{NeuRD_fixed_regularization}
\end{figure}
Experiment on Goofspiel:
\begin{figure}[!htb]
  \vspace{-10pt}
  \begin{center}
    \includegraphics[width=0.9\textwidth]{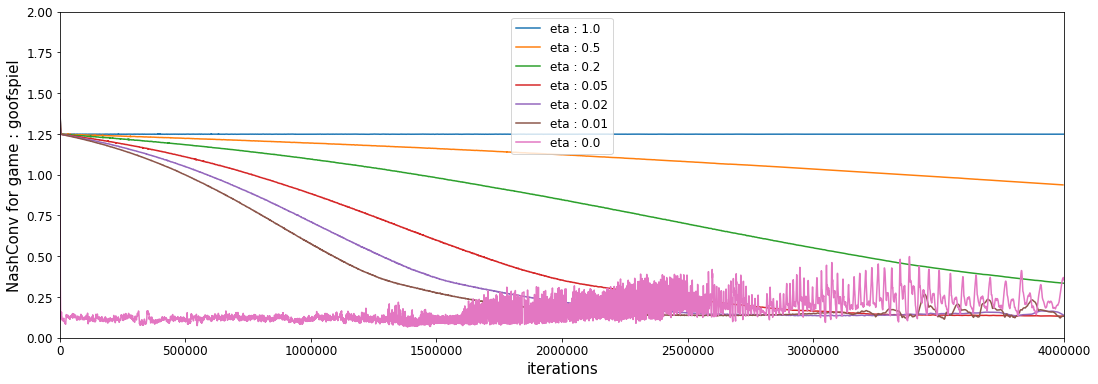}
  \end{center}
  \vspace{-10pt}
  \caption{Goofspiel $4$ cards.}
  \vspace{-0pt}
  \label{NeuRD_fixed_regularization}
\end{figure}

\end{document}